\newtheorem {lemma}{Lemma}[section]
\newtheorem {theorem}{Theorem}[section]
\newtheorem {corollary}{Corollary}[section]
\begin{document}

\title{\bf On the $\alpha$-spectral radius of hypergraphs}
\author{Haiyan Guo\footnote{E-mail: ghaiyan0705@m.scnu.edu.cn},  Bo Zhou\footnote{Corresponding author. E-mail:
zhoubo@scnu.edu.cn}, Bizhu Lin\footnote{E-mail: 2270927176@qq.com}\\%[1.5mm]
School of Mathematical sciences, South China Normal University, \\
Guangzhou 510631, P. R. China }

\date{}
\maketitle

\begin{abstract}  For real $\alpha\in [0,1)$ and  a hypergraph $G$, the $\alpha$-spectral radius of $G$ is the largest eigenvalue of  the matrix $A_{\alpha}(G)=\alpha D(G)+(1-\alpha)A(G)$, where  $A(G)$ is the adjacency matrix of $G$, which is a symmetric matrix with zero diagonal such that for distinct vertices $u,v$ of $G$, the $(u,v)$-entry of $A(G)$ is exactly the number of edges containing both $u$ and $v$, and $D(G)$ is the diagonal matrix of row sums of $A(G)$. We study the $\alpha$-spectral radius of a hypergraph that is uniform or not necessarily uniform. We propose some local grafting operations that increase or decrease the $\alpha$-spectral radius of a hypergraph. We determine the unique hypergraphs with maximum $\alpha$-spectral radius among $k$-uniform hypertrees, among $k$-uniform  unicyclic hypergraphs, and among $k$-uniform hypergraphs with fixed number of pendant edges.
We also determine the unique hypertrees with maximum $\alpha$-spectral radius among hypertrees with given number of vertices and edges, the unique hypertrees with the first three largest (two smallest, respectively)  $\alpha$-spectral radii among hypertrees with given number of vertices, the unique hypertrees with minimum $\alpha$-spectral radius among the hypertrees that are not $2$-uniform,  the unique hypergraphs with the first two largest (smallest, respectively) $\alpha$-spectral radii among unicyclic hypergraphs with given number of vertices, and the unique hypergraphs with maximum $\alpha$-spectral radius among hypergraphs with fixed number of pendant edges.\\ \\
{\bf AMS classifications:} 05C65,  05C50, 15A18\\  %03C13, 05C65, 05C75, 05C50, 05B20, 15A18\\ \\
{\bf Key words:}  hypergraph,  adjacency matrix, $\alpha$-spectral radius, local grafting operation
\end{abstract}

\section{Introduction}

A hypergraph $G=(V, E)$ consists of vertex set $V(G)=V$ and edge set $E(G)=E$, where $e\subseteq V(G)$ and $|e|\ge 2$ for each $e\in E(G)$. For integer $k\ge 2$, a hypergraph $G$ is $k$-uniform if $|e|=k$ for each $e\in E(G)$. A $2$-uniform hypergraph is just a (simple) graph.
The concept of a hypergraph may be viewed as  a variant of the concept of  a block design. A block design is a $k$-uniform hypergraph for some $k$,  with objects as vertices and blocks as edges, so that any realization of a hypergraph associated with a block design is simultaneously a realization of that design \cite{CaYu,Linek}.

For $u,  v\in V(G)$, if they are contained in some edge of $G$, then we say that they are adjacent or $v$ is a neighbor of $u$, written as $u\sim v$. For $u\in V(G)$, let $N_{G}(u)$ be the set of neighbors of $u$ in $G$ and $E_{G}(u)$ be the set of edges containing $u$ in $G$. The degree of a vertex $u$ in $G$, denoted by $d_{G}(u)$ or $d(u)$, is the number of edges of $G$ which contain $u$.
A path in a hypergraph $G$ is a vertex-edge alternating sequence  $v_0e_1v_1\dots e_pv_p$ such that  $v_0, \dots, v_p$ are distinct vertices, $e_1, \dots, e_p$ are distinct edges, and $v_{i-1},  v_i\in e_i$ for $i=1,  \dots,  p$; such a path is also called a path from $v_0$ to $v_p$.
A cycle in a hypergraph $G$ is a vertex-edge alternating sequence  $v_0e_1v_1\dots e_pv_0$ such that  $v_0, \dots, v_{p-1}$ are distinct vertices, $e_1, \dots, e_p$ are distinct edges, and $v_{i-1},  v_i\in e_i$ for $i=1,  \dots,  p$ (with $v_p$ interpreted as $v_0$). The number of edges in a path or a cycle is known as its length.
If there is a path from $u$ to $v$ for any $u, v\in V(G)$, then we say that $G$ is connected. A hypergraph is said to be linear if any two edges have at most one vertex in common. A hypertree is a connected hypergraph with no cycles. Obviously, hypertrees are linear. A unicyclic hypergraph is a connected hypergraph with exactly one cycle and the length of the unique cycle is known as its girth.

%Let $C$ be a cycle which not necessarily uniform, and $C_n$ be a $2$-uniform cycle on $n$ vertices.

A hyperstar is a hypertree in which all edges contain a vertex in common, and this vertex is called its center. If a hyperstar has exactly one edge, then we may choose any vertex as its center.   The $k$-uniform hyperstar on $n$ vertices is denoted by $S_{n,k}$. Let $S_n=S_{n,2}$, which is the ordinary star (graph) on $n$ vertices.

For a $k$-uniform hypertree $T$ with $V(T)=\{ v_{1}, \dots, v_{n} \}$, if $E(T)=\{ e_{1}, \dots, e_{m} \}$ with $m=\frac{n-1}{k-1}$, where $e_i=\{v_{(i-1)(k-1)+1}, \ldots, v_{(i-1)(k-1)+k}\}$ for $i={1, 2, \ldots, m}$, then we call $T$ a $k$-uniform loose path, denoted by $P_{n, k}$. Let $P_n=P_{n,2}$, which is the ordinary path (graph) on $n$ vertices.

For a hypergraph $G$ and $e=\{ v_{1}, \dots, v_{k} \}\in E(G)$, if $d_{G}(v_{i})=1$ for $1\leq i\leq k-1$ and $d_{G}(v_{k})\geq 2$, then we call $e$ a pendant edge (at $v_{k}$).

Let $G$ be a hypergraph on $n$ vertices. The adjacency matrix of $G$ is the $n\times n$ matrix $A(G)=(a_{uv}(G))_{u,v\in V(G)}$ with
\[
a_{uv}(G)=\begin{cases}
|\{e\in E(G):\{u,v\}\subseteq e \}| & \mbox{if $u\neq v$},\\[1mm]
0 & \mbox{if $u=v$}.
\end{cases}
\]
The spectral radius of $G$ is the largest eigenvalue of $A(G)$.
Feng and Li \cite{FL} studied estimates for eigenvalues of the adjacency matrix of a regular $k$-uniform hypergraph.
%One main result is as follows: Given $\epsilon>0$, there exists a positive constant $c$, depending only on $\epsilon$, $k$ and $d$, such that for every $d$-regular $k$-uniform hypergraph $G$ of order $n$, the number of eigenvalues $\lambda$ of $A(G)$  with $\lambda>k-2+(2-\epsilon)\sqrt{(d-1)(k-1)}$ is at least $cn$.
Li and Sol\'{e} \cite{LS} established some further properties for eigenvalues of the adjacency matrix of a regular $k$-uniform hypergraph.  Mart\'inez et al. \cite{Ma} studied the spectra of the adjacency matrices of hypergraphs associated to ``homogeneous spaces'' of $GL_n$   over the finite field $\mathbb{F}_q$, and obtained
an infinite family of Ramanujan hypergraphs.
Lin and Zhou \cite{LZ} considered the extremal problems on the the spectral radius of  a hypergraph.
Among others, they %found  a result concerning the behavior of the spectral radius of a hypergraph under relocations of edges, and
determined the unique $k$-uniform hypertrees with the first three largest  spectral radii,  and the unique $k$-uniform unicyclic hypergraphs ($k$-uniform linear unicyclic hypergraphs, respectively) with the first two largest spectral radii, and the unique  hypergraphs with maximum spectral radius among connected
$k$-uniform hypergraphs with  fixed number of pendant edges.

For a hypergraph $G$ with $u\in V(G)$, let $r_G(u)=\sum_{v\in V(G)}a_{uv}(G)$, which is the row sum of $A(G)$ at $u$. Note that $r_G(u)=\sum_{e\in E_G(u)}(|e|-1)$ for $u\in V(G)$.
Let $D(G)$ be the diagonal matrix of row sums of $A(G)$. Let $Deg(G)$ be the diagonal matrix of vertex degrees of $G$. If $G$ is $k$-uniform, then $r_G(u)=(k-1)d_G(u)$, and thus  $D(G)=(k-1)Deg(G)$.  The Laplacian matrix of a hypergraph $G$ is defined as $L(G)=D(G)-A(G)$, see \cite{Ro1}.
In \cite{Ro2}, Rodr\'iguez used the eigenvalues of the Laplacian matrix of a hypergraph to bound its bipartition width, averaged minimal cut, isoperimetric number, max-cut, independence number and domination number. More results along this line may be found in \cite{Ro1,Ro3}.
The matrix $Q(G)=D(G)+A(G)$ is known as the signless Laplacian matrix of $G$ \cite{LZ}. The signless Laplacian spectral radius of $G$  is the largest eigenvalue of $Q(G)$.

Let $\alpha$ be a real number in $[0,1)$. A matrix  $A_{\alpha}(G)$ for  a hypergraph $G$ is defined as $A_{\alpha}(G)=\alpha D(G)+(1-\alpha)A(G)$.  The $\alpha$-eigenvalues of a hypergraph  $G$ are the eigenvalues of $A_{\alpha}(G)$, and the largest eigenvalue of $A_{\alpha}(G)$ is called
the $\alpha$-spectral radius of $G$, denoted by $\rho_{\alpha}(G)$.  For an ordinary graph $G$, $D(G)=Deg (G)$, and this matrix $A_{\alpha}(G)$ was proposed by Nikiforov \cite{Ni1} to study the spectral properties of the adjacency matrix $A_0(G)$ and the signless Laplacian matrix $2A_{1/2}(G)$ in a unified way.
Early, the $0$-spectral radius has been extensively studied, see \cite{BrH,CR,LZ}, while the $\frac{1}{2}$-spectral radius has also received much attention, see \cite{HL,Odede,LZ}.

We mention that a much different treat via adjacency tensors (hypermatrices) may be found in \cite{GZ,LGZ}. However,  the notation of the adjacency tensor  does not have any immediate relationship with  the spectral radius of a hypergraph via its adjacency matrix.

%Feng and Li \cite{FL} and Li and Sol\'e \cite{LS} studied estimates for eigenvalues of $A_{0}(G)$ of a regular $k$-uniform hypergraph.  Mart¨ªnez et al. \cite{Ma} studied the eigenvalues of $A_{0}(G)$ of hypergraphs associated to ¡°homogeneous spaces¡± of $GL_n$ over the finite field $\mathbb F_q$, and obtained an infinite family of Ramanujan hypergraphs. Lin and Zhou \cite{LZ} proved a result concerning the behavior of the $0$-spectral radius of a hypergraph under relocations of edges, and  determined the unique hypergraphs with maximum (minimum, respectively) $0$-spectral radius among various classes of $k$-uniform hypergraphs.

In this paper, we study the $\alpha$-spectral radius of a hypergraph  that is uniform or not necessarily uniform.
We propose some local operations (grafting operations) that increase or decrease the $\alpha$-spectral radius of a hypergraph.
We determine the unique hypergraphs with maximum $\alpha$-spectral radius among $k$-uniform hypertrees, among $k$-uniform  unicyclic hypergraphs, and among $k$-uniform hypergraphs with fixed number of pendant edges.
We determine the unique hypertrees with  maximum $\alpha$-spectral radius among hypertrees with given number of vertices and edges. We also determine the unique hypertrees with the first three largest (two smallest, respectively) $\alpha$-spectral radii among hypertrees with given number of vertices.  We determine the unique hypertrees with minimum $\alpha$-spectral radius among the hypertrees  that are not $2$-uniform with given number of vertices.  We  determine the unique hypergraphs with the first two largest (smallest, respectively) $\alpha$-spectral radii among unicyclic hypergraphs with given number of vertices. Finally, we determine the unique hypergraphs with maximum $\alpha$-spectral radius among hypergraphs with fixed number of pendant edges.

\section{Preliminaries}

In the rest of this paper we assume that $\alpha\in [0,1)$. Let $G$ be a hypergraph.
 Note that $A_{\alpha}(G)$ is a symmetric nonnegative matrix. If $G$ is connected, then $A_{\alpha}(G)$ is irreducible, and thus by Perron-Frobenius theorem, there is a unique positive unit eigenvector corresponding to $\rho_{\alpha}(G)$, which is called the $\alpha$-Perron vector of $G$, denoted by $x(G)$.

Let $V(G)=\{ v_{1}, \dots, v_{n} \}$ and $x=( x_{v_{1}}, \dots, x_{v_{n}} )^{\top}\in \mathbb {R}^n$. Then
\[
x^\top A_{\alpha}(G) x=\sum_{\{u,v\}\subseteq V(G)} a_{uv}(G)\left(\alpha(x_u^2+x_v^2)+2(1-\alpha)x_u x_v\right).
\]
For $x=x(G)$ and each $u\in V(G)$,
\[
\rho_{\alpha}(G)x_u=\sum_{v\in N_{G}(u)}a_{uv}(G)(\alpha x_u+(1-\alpha)x_v),
\]
which is called the eigenequation of $G$ at $u$.
For a hypergraph $G$ on $n$ vertices and a unit column vector $x\in \mathbb {R}^n$ with at least one nonnegative entry, by Rayleigh's principle, we have $\rho_{\alpha}(G)\geq x^\top A_{\alpha}(G) x$ with equality if and only if $x=x(G)$.

For a hypergraph $G$ with $V_0\subseteq V(G)$, let $G[V_0]$ be the subhypergraph of $G$ induced by $V_0$.

Let $G$ be a hypergraph with $u,v\in V(G)$ and $e_{1}, \dots, e_{r}\in E(G)$ such that $u\notin e_{i}$ and $v\in e_i$ for $1\leq i\leq r$. Let $e'_i=(e_{i}\setminus \{v\}) \cup \{u\}$ for $1\leq i\leq r$. Suppose that $e'_i \notin E(G)$ for $1\leq i\leq r$. Let $G'$ be the hypergraph with $V(G')=V(G)$ and $E(G')=(E(G)\setminus \{e_{1}, \dots, e_{r}\}) \cup \{e'_{1}, \dots, e'_{r}\}$. Then we say that $G'$ is obtained from $G$ by moving edges $\{e_{1}, \dots, e_{r}\}$ from $v$ to $u$.

Let $G$ be a hypergraph with $e_{1},e_{2}\in E(G)$ and $u\in V(G)$ such that $u\notin e_{1}$ and $u\in e_{2}$, where $|e_{2}|\geq 3$. Let $e'_1=e_{1} \cup \{u\}$ and $e'_2=e_{2}\setminus \{u\}$. Suppose that $e'_1 ,e'_2\notin E(G)$. Let $G'$ be the hypergraph with $V(G')=V(G)$ and $E(G')=(E(G)\setminus \{e_{1},  e_{2}\}) \cup \{e'_{1},  e'_{2}\}$. Then we say that $G'$ is obtained from $G$ by moving vertex $u$ from $e_{2}$ to $e_{1}$.

For $k\geq 3$, let $e=\{w_{1}, \dots, w_{k}\}$ be an edge of a hypergraph $G$. Let $e_{1}=\{w_{1}, w_{2}\}$ and $e_2=e\setminus \{w_{2}\}$. Suppose that $e_1 ,e_2\notin E(G)$. Let $G'$ be the hypergraph with $V(G')=V(G)$ and $E(G')=(E(G)\setminus \{e\}) \cup \{e_{1},  e_{2}\}$. Then we say that $G'$ is obtained from $G$ by removing vertex $w_{2}$ from $e$ and attaching an edge $\{w_{1}, w_{2}\}$ to $w_{1}$.

For a hypergraph $G$ with $u\in V(G)$, we form a new hypergraph $G'$ by adding a new vertex $w$ and a new edge $\{u,w\}$. In this case, we say $G'$ is obtained from $G$ by attaching a pendant vertex $w$ to $u$.

\begin{lemma} \label{Lem21}
Let $G$ be a connected hypergraph with $\varphi$ being an automorphism of $G$, and $x=x(G)$. If   $\varphi (u)=v$ for $u,v\in V(G)$, then $x_{u}= x_{v}$.
\end{lemma}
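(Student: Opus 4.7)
The plan is to use the Perron--Frobenius uniqueness of the positive eigenvector for the irreducible nonnegative matrix $A_\alpha(G)$, together with the fact that $A_\alpha(G)$ is invariant under the permutation induced by $\varphi$.

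First I would let $P=(p_{ab})$ be the $n\times n$ permutation matrix defined by $p_{ab}=1$ iff $\varphi(a)=b$. The key calculation is that $P$ commutes with $A_\alpha(G)$. Since $\varphi$ is an automorphism of $G$, for any $a,b\in V(G)$ and any edge $e\in E(G)$, we have $\{a,b\}\subseteq e$ iff $\{\varphi(a),\varphi(b)\}\subseteq \varphi(e)$, and $\varphi$ is a bijection on $E(G)$; hence $a_{ab}(G)=a_{\varphi(a)\varphi(b)}(G)$. This gives $P^\top A(G)P=A(G)$. Consequently the row sums satisfy $r_G(a)=r_G(\varphi(a))$, so $P^\top D(G)P=D(G)$ as well. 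Combining,
\[
P^\top A_\alpha(G)P=\alpha P^\top D(G)P+(1-\alpha)P^\top A(G)P=A_\alpha(G),
\]
which is equivalent to $A_\alpha(G)P=PA_\alpha(G)$.

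Next I would apply this commutation to $x=x(G)$. From $A_\alpha(G)x=\rho_\alpha(G)x$ we obtain
\[
A_\alpha(G)(Px)=P A_\alpha(G)x=\rho_\alpha(G)(Px),
\]
so $Px$ is an eigenvector of $A_\alpha(G)$ for the eigenvalue $\rho_\alpha(G)$. Because $P$ only permutes coordinates, $Px$ is again a positive unit vector. Since $G$ is connected, $A_\alpha(G)$ is irreducible and nonnegative, so by Perron--Frobenius the positive unit eigenvector corresponding to $\rho_\alpha(G)$ is unique. Therefore $Px=x$, which, written componentwise, reads $x_{\varphi^{-1}(a)}=x_a$ for every $a\in V(G)$, or equivalently $x_a=x_{\varphi(a)}$. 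Specializing to $a=u$ with $\varphi(u)=v$ yields $x_u=x_v$.

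There is no serious obstacle here; the only point to be careful about is ensuring that the automorphism hypothesis genuinely preserves the diagonal part of $A_\alpha(G)$, which it does because automorphisms preserve row sums of $A(G)$ (so $D(G)$), not merely the adjacency pattern. The rest is a textbook Perron--Frobenius uniqueness argument.
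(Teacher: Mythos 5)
Your proof is correct and follows essentially the same route as the paper: both introduce the permutation matrix $P$ induced by $\varphi$, observe that $A_\alpha(G)$ is invariant under conjugation by $P$, and invoke the uniqueness of the positive unit Perron vector of the irreducible matrix $A_\alpha(G)$ to conclude $Px=x$. The only cosmetic difference is that the paper passes through the Rayleigh quotient to identify $P^\top x$ as the Perron vector, whereas you verify the eigenvector equation for $Px$ directly.
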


\begin{proof}
Let $P=(p_{uv})_{u,v\in V(G)}$ be the permutation matrix such that  $p_{uv}=1$ if and only if  $\varphi (u)=v$ for  $u,v\in V(G)$.
Then $A_{\alpha}(G)=PA_{\alpha}(G) P^{\top}$.  Thus  $\rho_{\alpha}(G)=x^\top A_{\alpha}(G)x=(P^\top x)^\top A_{\alpha}(G)(P^\top x)$. As $P^\top x$ is a positive unit vector, $P^\top x$ is also the  $\alpha$-Perron vector of $G$, so $P^\top x = x$.
\end{proof}

For $s\times t$ matrices  $B=(b_{ij})$ and $C=(c_{ij})$, if  $b_{ij}\leq c_{ij}$ for $1\le i\le s$ and $1\le j\le t$, and $B\ne C$, then we write $B<C$.
Let $G$ and $H$ be two hypergraphs with $n$ vertices. It is obvious that $A(G)<A(H)$ if and only if  $A_{\alpha}(G)<A_{\alpha}(H)$.

Let $\lambda_1(B)$ be the spectral radius (i.e., largest modulus of  eigenvalues) of a nonnegative square matrix $B$.

\begin{lemma} \label{Lem22}\cite{Mi}
Let $B=(b_{ij})$ and $C=(c_{ij})$ be  nonnegative matrices of order $n$. Suppose that $C$ is irreducible and $B<C$. Then $\lambda_1(B)< \lambda_1(C)$.
\end{lemma}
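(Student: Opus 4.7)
The plan is to prove this as a standard comparison theorem in Perron--Frobenius theory, using the left Perron vector of the irreducible matrix $C$ as the ``test vector'' that forces strict monotonicity.

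First I would set up the objects. Since $C$ is a nonnegative irreducible matrix, Perron--Frobenius yields a strictly positive left eigenvector $y>0$ with $y^{\top}C=\lambda_{1}(C)\,y^{\top}$; it also guarantees that $\lambda_{1}(C)$ is a simple eigenvalue with a positive right eigenvector. For the matrix $B$, which is only assumed nonnegative (and possibly reducible), I would invoke the extended Perron--Frobenius theorem to obtain a nonnegative eigenvector $u\ge 0$, $u\ne 0$, with $Bu=\lambda_{1}(B)\,u$.

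Next I would obtain the weak inequality $\lambda_{1}(B)\le \lambda_{1}(C)$ by a bilinear-form computation. Since $B\le C$ entrywise and $u\ge 0$, we have $Bu\le Cu$. Multiplying on the left by $y^{\top}$ gives
\[
\lambda_{1}(B)\,y^{\top}u \;=\; y^{\top}Bu \;\le\; y^{\top}Cu \;=\; \lambda_{1}(C)\,y^{\top}u.
\]
Because $y>0$ and $u\ge 0$ with $u\ne 0$, the scalar $y^{\top}u$ is strictly positive, so we may divide and conclude $\lambda_{1}(B)\le \lambda_{1}(C)$.

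For the strict inequality I would argue by contradiction: assume $\lambda_{1}(B)=\lambda_{1}(C)=\lambda$. Then the displayed chain of inequalities collapses to equalities, so $y^{\top}(C-B)u=0$. Since $y>0$ componentwise and $(C-B)u\ge 0$ componentwise, this forces $(C-B)u=0$, i.e.\ $Cu=Bu=\lambda u$. Thus $u$ is a nonnegative eigenvector of the irreducible matrix $C$ for its spectral radius, so by the uniqueness clause of Perron--Frobenius we must have $u>0$. Now pick indices $i,j$ with $(C-B)_{ij}>0$, which exist because $B<C$; then $\bigl((C-B)u\bigr)_{i}\ge (C-B)_{ij}u_{j}>0$, contradicting $(C-B)u=0$.

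The main obstacle, and the reason the proof is not immediate, is precisely that $B$ need not be irreducible, so its Perron eigenvector $u$ need not be strictly positive; a naive application of the entrywise strict inequality $B<C$ to $u$ can fail when the ``extra mass'' of $C-B$ sits in columns where $u$ vanishes. Routing the comparison through the left Perron vector $y>0$ of the irreducible $C$, and then \emph{upgrading} $u$ to a positive vector only in the equality case via the irreducibility of $C$, is the trick that bypasses this difficulty.
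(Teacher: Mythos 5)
Your proof is correct. Note, however, that the paper offers no proof of this lemma at all: it is quoted verbatim from Minc's \emph{Nonnegative Matrices} \cite{Mi} as a known result, so there is no in-paper argument to compare against. Your argument is the standard one found in that reference and in Berman--Plemmons: the weak inequality $\lambda_1(B)\le\lambda_1(C)$ via the left Perron vector $y>0$ of the irreducible matrix $C$ paired with a nonnegative eigenvector $u$ of $B$, followed by the equality analysis $y^{\top}(C-B)u=0\Rightarrow(C-B)u=0\Rightarrow Cu=\lambda_1(C)u\Rightarrow u>0$, which then collides with the existence of a strictly positive entry of $C-B$. You correctly identify the one genuine pitfall (the Perron vector of a reducible $B$ may have zero entries, so one cannot apply $B<C$ to it directly) and handle it in the standard way, upgrading $u$ to a positive vector only in the equality case via the irreducibility of $C$. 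The argument is complete and self-contained; it in fact supplies a proof the paper omits.
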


Let $G$ be a connected $k$-uniform hypergraph, and $e$ a subset of $V(G)$ with $|e|=k$ and $e\notin E(G)$. Then by Lemma \ref{Lem22},
$\rho_{\alpha}(G) <\rho_{\alpha}(G+e)$.
Particularly, if $G$ is a connected graph and $H$ is a proper subgraph of $G$, then by Lemma \ref{Lem22}, $\rho_{\alpha}(H)<\rho_{\alpha}(G)$.

We need the following
well known lemma, see, e.g., \cite[Theorem 2.2.1(iv)]{AE}.

\begin{lemma}\label{comp}
Let $M$ be an $n\times n$ irreducible nonnegative  matrix and $y$ a column vector of dimension $n$. If $y>0$ and $My<\rho y$, then $\lambda_1(M)<\rho$.
\end{lemma}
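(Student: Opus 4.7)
The plan is to reduce the claim to Perron--Frobenius applied to the transpose of $M$, which is the standard argument for this kind of strict comparison. First, I would invoke Perron--Frobenius for $M^\top$: since $M$ is nonnegative and irreducible, so is $M^\top$, and the two share the spectral radius $\lambda_1(M)$. Hence there exists a strictly positive column vector $z$ with $M^\top z = \lambda_1(M)\, z$, i.e., a positive left Perron eigenvector of $M$ satisfying $z^\top M = \lambda_1(M)\, z^\top$.

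The second step is a single inner-product computation, pairing the hypothesis $My<\rho y$ with $z$ from the left. On the one hand, $z^\top (My) = (z^\top M)y = \lambda_1(M)\, z^\top y$. On the other hand, recalling the paper's convention that $My<\rho y$ means $My\le \rho y$ componentwise together with $My\ne \rho y$, the vector $\rho y - My$ is entrywise nonnegative and has at least one strictly positive coordinate; since $z>0$ has no zero coordinates, this forces $z^\top(\rho y - My)>0$, i.e., $z^\top(My)<\rho\, z^\top y$. Combining the two expressions gives $\lambda_1(M)\, z^\top y < \rho\, z^\top y$, and since $z>0$ and $y>0$ imply $z^\top y>0$, dividing yields $\lambda_1(M)<\rho$.

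The one delicate point I expect is justifying that this comparison is \emph{strict} rather than weak: this crucially uses that the left Perron vector $z$ has \emph{no} zero entries, which in turn is exactly where the irreducibility hypothesis on $M$ enters. Without irreducibility, $z$ could vanish on precisely the coordinates where $My$ and $\rho y$ agree, and the argument would degenerate to $\lambda_1(M)\le \rho$. Everything else is routine bookkeeping, so I do not anticipate a substantive obstacle beyond citing the existence and strict positivity of the left Perron eigenvector.
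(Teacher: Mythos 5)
Your argument is correct: reducing to the strictly positive left Perron eigenvector $z$ of the irreducible matrix $M$, pairing $My<\rho y$ against $z^\top$, and using $z^\top y>0$ is exactly the standard proof, and you correctly identify both where irreducibility is needed (strict positivity of $z$, hence strictness of the final inequality under the paper's convention that $My<\rho y$ means componentwise $\le$ with $My\ne\rho y$) and why the conclusion would weaken to $\lambda_1(M)\le\rho$ without it. Note that the paper offers no proof of its own here; it states the lemma as well known and cites \cite[Theorem 2.2.1(iv)]{AE}, so there is nothing to compare against beyond confirming that your argument is the standard one.
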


For a graph $G$ with $u\in V(G)$ and a positive integer $k$,
let $G'$ be the graph  obtained from $G$ and a path $P_k$ by adding an edge connecting $u$ and a terminal vertex of the path $P_k$.  In this case, we say $G'$ is obtained from $G$ by attaching a path of length $k$ at $u$ and we denote such a graph $G'$ by
 $G(u;k)$. Let   $G(u;0)=G$. The following two lemmas proven in \cite{GuoZ} were established in  \cite{LF} for $\alpha=0$, and they appeared to be two conjectures in \cite{Ni1}.

\begin{lemma} \label{Lem24} \cite{GuoZ}
Let $u$ be a vertex of a nontrivial connected graph $G$, and let $G_{k,\ell}=G(u; k)(u;\ell)$, where $k\geq \ell\geq 1$. Then $\rho_{\alpha}(G_{k,\ell})>\rho_{\alpha}(G_{k+1,\ell-1})$.
\end{lemma}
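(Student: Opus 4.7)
The plan is to use the $\alpha$-Perron vector of $G_{k+1,\ell-1}$ as a trial vector on $G_{k,\ell}$ after identifying the two vertex sets, compute the resulting Rayleigh-quotient difference, and then upgrade the comparison to strict by a positivity argument at the swap vertex.

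First I would label the pendant paths of $G_{k+1,\ell-1}$ as $u=u_0,u_1,\dots,u_{k+1}$ (long) and $u=v_0,v_1,\dots,v_{\ell-1}$ (short, with the convention $v_{\ell-1}=u$ when $\ell=1$), and those of $G_{k,\ell}$ as $u=u_0,\dots,u_k$ and $u=v_0,\dots,v_{\ell-1},v_\ell$. I identify $u_{k+1}$ with $v_\ell$ and call this common vertex $w$; all other vertices coincide. Under this bijection, the edge sets agree except that $G_{k+1,\ell-1}$ contains $\{u_k,w\}$ while $G_{k,\ell}$ contains $\{v_{\ell-1},w\}$.

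Let $\rho=\rho_\alpha(G_{k+1,\ell-1})$ and view $x=x(G_{k+1,\ell-1})$ as a unit positive vector on the common vertex set. Using the quadratic-form identity $y^{\top}A_\alpha(H)y=\sum_{\{p,q\}\in E(H)}[\alpha(y_p^2+y_q^2)+2(1-\alpha)y_py_q]$ recorded earlier in the paper, the single-edge swap gives
\[
x^{\top}A_\alpha(G_{k,\ell})x-\rho=(x_{v_{\ell-1}}-x_{u_k})\bigl[\alpha(x_{v_{\ell-1}}+x_{u_k})+2(1-\alpha)x_w\bigr],
\]
and the bracketed factor is strictly positive since $x>0$. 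So once the comparison $x_{v_{\ell-1}}\ge x_{u_k}$ is in hand, Rayleigh's principle yields $\rho_\alpha(G_{k,\ell})\ge\rho$. Strictness in the equality case follows from the fact that $x$ cannot be the $\alpha$-Perron vector of $G_{k,\ell}$: the degree of $u_k$ drops from $2$ in $G_{k+1,\ell-1}$ to $1$ in $G_{k,\ell}$, and subtracting the two eigenequations for $x$ at $u_k$ would give $\alpha x_{u_k}+(1-\alpha)x_w=0$, contradicting $x>0$.

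The heart of the proof is therefore the comparison $x_{v_{\ell-1}}\ge x_{u_k}$. For $\ell=1$ this is just $x_u>x_{u_k}$, immediate from strict monotonicity of $x$ along the pendant long path, itself a consequence of the tip eigenequation together with the lower bound $\rho>1$ (which holds because $G$ is nontrivial, so $G_{k+1,\ell-1}$ contains $P_3$). For $\ell\ge 2$, iterating the interior and tip eigenequations along each pendant path yields the closed forms $x_{u_k}=sx_u/g_{k+1}$ and $x_{v_{\ell-1}}=x_u/g_{\ell-1}$, where $s=(\rho-\alpha)/(1-\alpha)$ and $g_j$ satisfies the Chebyshev-like recurrence $g_0=1$, $g_1=s$, $g_{j+1}=tg_j-g_{j-1}$ with $t=(\rho-2\alpha)/(1-\alpha)$, so the inequality reduces to $g_{k+1}\ge sg_{\ell-1}$. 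The main technical obstacle is establishing this last inequality at the actual Perron eigenvalue: its counterpart for arbitrary $\rho$ can fail when $\rho$ is close to a critical Dynkin value, and pinning the argument down requires invoking the global eigenequation at $u$, which couples both path recurrences to the rest of $G$, to rule out the problematic low-$\rho$ configurations.
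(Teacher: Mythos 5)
There is a genuine gap, and it sits exactly at the heart of the lemma. First, note that the paper does not prove this statement at all: it is imported as a black box from \cite{GuoZ} (and goes back to Li and Feng for $\alpha=0$; it was conjectured for general $\alpha$ by Nikiforov), so there is no in-paper argument to compare against and your attempt must stand on its own. Its scaffolding is sound: the one-edge identification of $G_{k,\ell}$ with $G_{k+1,\ell-1}$, the Rayleigh-quotient identity
\[
x^{\top}A_\alpha(G_{k,\ell})x-\rho=(x_{v_{\ell-1}}-x_{u_k})\bigl[\alpha(x_{v_{\ell-1}}+x_{u_k})+2(1-\alpha)x_w\bigr],
\]
and the strictness argument via the eigenequation at $u_k$ are all correct. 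But everything hinges on the comparison $x_{v_{\ell-1}}\ge x_{u_k}$, and that is precisely what you do not establish.

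For $\ell=1$ your justification is false: the Perron entries of $G(u;k+1)$ along the attached path are not monotone under the hypothesis $\rho>1$. Take $G=K_2$ and $\alpha=0$, so $G_{k+1,0}=P_{k+3}$ with Perron vector $x_j=\sin\bigl(j\pi/(k+4)\bigr)$; this vector is unimodal (already $x_{u_1}>x_u$ for $k\ge 2$), and the end-to-end symmetry of the path gives $x_u=x_{u_k}$ exactly, so the strict inequality $x_u>x_{u_k}$ you assert is simply wrong, and even the non-strict version is left unproved for general $G$. For $\ell\ge 2$ you reduce correctly to $g_{k+1}\ge s\,g_{\ell-1}$ and then explicitly concede that you cannot verify it at the Perron value. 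That concession is the lemma: the reduction up to that point is routine, and the inequality you leave open is the Li--Feng-type estimate that the cited reference has to work to prove (it requires coupling the two path recurrences through the rest of $G$ via the eigenequation at $u$, or an equivalent characteristic-polynomial comparison; it is not a consequence of $\rho>1$ alone). As written, the proposal is an honest reduction of the lemma to an unproved --- and, in the $\ell=1$ case, incorrectly justified --- inequality, not a proof.
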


\begin{lemma} \label{Lem25} \cite{GuoZ}
Let $u,v$ be adjacent vertices of a connected graph $G$ with degrees at least $2$, and let $G^{k,\ell}=G(u;k)(v; \ell)$,
where $k\geq \ell\geq 1$. Then $\rho_{\alpha}\left(G^{k,\ell}\right)>\rho_{\alpha}\left(G^{k+1,\ell-1}\right)$.
\end{lemma}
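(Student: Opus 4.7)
The plan is to apply the Rayleigh quotient characterization: I will evaluate $x^{\top} A_{\alpha}(G^{k,\ell})\,x$ with $x$ the $\alpha$-Perron vector of $G^{k+1,\ell-1}$, after identifying the vertex sets via a natural bijection, and show that this value strictly exceeds $\rho_{\alpha}(G^{k+1,\ell-1})$. Label the attached paths in $G^{k+1,\ell-1}$ as $u_0=u,u_1,\dots,u_{k+1}$ and $v_0=v,v_1,\dots,v_{\ell-1}$, and in $G^{k,\ell}$ as $u_0=u,u_1,\dots,u_k$ and $v_0=v,v_1,\dots,v_\ell$; identify the two vertex sets by the bijection $\phi$ that is the identity on $V(G)\cup\{u_1,\dots,u_k,v_1,\dots,v_{\ell-1}\}$ and sends $u_{k+1}$ in $G^{k+1,\ell-1}$ to $v_\ell$ in $G^{k,\ell}$. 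Under $\phi$, the matrices $A_{\alpha}(G^{k,\ell})$ and $A_{\alpha}(G^{k+1,\ell-1})$ differ only at the entries coming from the edge $\{u_k,u_{k+1}\}$ (present in $G^{k+1,\ell-1}$ only) and the edge $\{v_{\ell-1},v_\ell\}$ (present in $G^{k,\ell}$ only), together with the two diagonal entries at $u_k$ and $v_{\ell-1}$. A direct computation then gives, with $\rho=\rho_{\alpha}(G^{k+1,\ell-1})$,
\[
x^{\top} A_{\alpha}(G^{k,\ell})\,x - \rho \;=\; (x_{v_{\ell-1}} - x_{u_k})\bigl[\alpha(x_{v_{\ell-1}}+x_{u_k})+2(1-\alpha)\,x_{u_{k+1}}\bigr],
\]
so the bracketed factor being strictly positive reduces the entire problem to the single inequality $x_{v_{\ell-1}} > x_{u_k}$.

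For the key inequality, the eigenequations of $A_{\alpha}(G^{k+1,\ell-1})$ at the interior vertices and at the leaf of each pendant path yield a common linear recurrence for the Perron entries. Setting $f_0=1$, $f_1=(\rho-\alpha)/(1-\alpha)$, and $f_{j+1}=((\rho-2\alpha)/(1-\alpha))\,f_j - f_{j-1}$, one obtains $x_u=f_{k+1}\,x_{u_{k+1}}$, $x_{u_k}=f_1\,x_{u_{k+1}}$, and $x_v=f_{\ell-1}\,x_{v_{\ell-1}}$, so that $x_{v_{\ell-1}}>x_{u_k}$ becomes $f_{k+1}\,x_v > f_1 f_{\ell-1}\,x_u$. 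I would derive this last inequality by writing out the eigenequations of $A_{\alpha}(G^{k+1,\ell-1})$ at $u$ and at $v$, each of which expresses $x_u$ or $x_v$ in terms of the other together with strictly positive contributions from their remaining neighbors in $G$ (available by $d_G(u),d_G(v)\ge 2$), and exploiting $k+1>\ell-1$ to make the ratio $f_{k+1}/f_{\ell-1}$ dominate the factor $f_1$ as well as the ratio $x_u/x_v$. Once $x_{v_{\ell-1}}>x_{u_k}$ is established, Rayleigh's principle applied to $A_{\alpha}(G^{k,\ell})$ with the positive unit vector $x$ yields $\rho_{\alpha}(G^{k,\ell})\ge x^{\top} A_{\alpha}(G^{k,\ell})\,x > \rho = \rho_{\alpha}(G^{k+1,\ell-1})$, as desired.

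The main obstacle is the pointwise inequality $x_{v_{\ell-1}}>x_{u_k}$. The geometric decay of the Perron entries along each pendant path is rigidly determined by the $f_j$-recurrence, so the crux is controlling $x_u/x_v$; the only structural facts available about $G$ are $u\sim v$ and the degree-two condition, and extracting a sufficient bound from these — in a way that handles the boundary case $\ell=1$ (where $v_{\ell-1}=v$ and there is no pendant path at $v$) uniformly with $\ell\ge 2$ — is the technical heart of the argument.
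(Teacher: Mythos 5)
This lemma is not proved in the paper at all: it is quoted from \cite{GuoZ} (and goes back to Li--Feng for $\alpha=0$), so there is no in-paper argument to compare against; what matters is whether your proposed direct proof is sound. Your setup is fine: the identification $u_{k+1}\mapsto v_\ell$, the observation that the two matrices differ only in the entries coming from the edges $\{u_k,u_{k+1}\}$ and $\{v_{\ell-1},v_\ell\}$, and the resulting identity
\[
x^{\top} A_{\alpha}(G^{k,\ell})\,x - \rho \;=\; (x_{v_{\ell-1}} - x_{u_k})\bigl(\alpha(x_{v_{\ell-1}}+x_{u_k})+2(1-\alpha)\,x_{u_{k+1}}\bigr)
\]
are all correct. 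The genuine gap is that the entire proof rests on the strict pointwise inequality $x_{v_{\ell-1}}>x_{u_k}$, which you do not prove and which is in fact \emph{false} in general. Take $G=P_4=w\,u\,v\,z$ (so $d_G(u)=d_G(v)=2$) and $k=\ell=1$. Then $G^{k+1,\ell-1}=G(u;2)$ is the spider at $u$ with legs of lengths $1,2,2$, and the automorphism exchanging the leg $v\,z$ with the leg $u_1u_2$ forces $x_{v}=x_{u_1}$, i.e.\ $x_{v_{\ell-1}}=x_{u_k}$ exactly. Your identity then gives $x^{\top}A_{\alpha}(G^{1,1})x=\rho$, so Rayleigh yields only the non-strict bound $\rho_{\alpha}(G^{1,1})\ge\rho_{\alpha}(G^{2,0})$, and the proposed argument cannot conclude.

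The sketch you give for the key inequality does not survive scrutiny either. The recurrence $f_{j+1}=\frac{\rho-2\alpha}{1-\alpha}f_j-f_{j-1}$ is correct, but when $\rho_{\alpha}(G^{k+1,\ell-1})<2$ --- which does occur in this lemma's range, e.g.\ the spider above has $\rho_0<2$ --- the coefficient $\frac{\rho-2\alpha}{1-\alpha}$ is less than $2$ and the $f_j$ are not monotone, so ``$f_{k+1}/f_{\ell-1}$ dominates $f_1$'' is not automatic; and nothing you have written bounds $x_u/x_v$, since the eigenequations at $u$ and $v$ involve the unknown contributions of their other neighbours in $G$. A correct proof must either (a) handle the case $x_{v_{\ell-1}}\le x_{u_k}$ by a different test vector or a case split (this is essentially why the treatments in \cite{GuoZ} and in Li--Feng are as involved as they are), or (b) when equality in the Rayleigh quotient occurs, derive a contradiction from the eigenequations as is done in the proof of Theorem~\ref{Lem23} of this paper. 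As it stands, the proposal reduces the lemma to a claim that is not always true and therefore does not constitute a proof.
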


An internal path in a graph $G$ is of one of two types:

$(i)$ A sequence of vertices $v_0 v_1\dots v_{k+1}$ ($k\ge 2$), where $v_0, v_1,\dots, v_{k}$ are distinct,  $v_{i-1}$ and $v_i$ are adjacent for $i=1, \dots, k+1$, $v_{k+1}=v_{0}$, $d_G(v_0)\geq 3$, and  $d_G(v_i)=2$ for $i=1,2,\dots,k$;

$(ii)$ A sequence of distinct vertices $v_0 v_1\dots  v_{k+1}$ ($k\ge 0$) such that $v_{i-1}$ and $v_i$ are adjacent for $i=1, \dots, k+1$,  where $d_G(v_0)\geq 3$, $d_G(v_{k+1})\geq 3$, and $d_G(v_i)=2$ whenever $1\le i\le k$.

For $n\ge 5$, let $W_n$ be the tree obtained from the path $P_{n-4}$ by attaching two pendant vertices to both terminal vertices, respectively. Obviously, $W_5\cong S_5$.

For $n\ge 4$, let $Z_n$ be the tree obtained from the path $P_{n-2}$ by attaching two pendant vertices to one terminal vertex. For $n\ge 6$, $Z'_n$ be the tree obtained from the path $P_{n-1}=v_1v_2 \dots v_{n-1}$ by attaching a pendant vertex $v_n$ to $v_{n-3}$.

All connected graphs with $0$-spectral radius at most $2$ have been determined in \cite{Smi}. From this, connected graphs with $0$-spectral radius less than $2$ are all trees, which include precisely $P_n$, $Z_n$ with $n\ge 4$, and $Z'_n$ for $n=6,7,8$. These trees with $0$-spectral radius $2$ include precisely $W_n$ with $n\ge 5$, $Z'_9$, and two additional trees with $7$ and $8$ vertices, respectively. The additional tree with $7$ vertices is a tree consisting of three pendant paths of length $2$ at a common vertex, and the additional tree with $8$ vertices is a tree consisting of two pendant paths of length $3$ and one pendant path of length $1$ at a common vertex.

Let $K_n$ be the complete graph on $n$ vertices. Let $C_n$ be the ordinary cycle  (graph) on $n\ge 3$ vertices. If the vertices of $C_n$ are labelled consecutively as $v_1,\dots, v_n, v_1$, then we write $C_n=v_1\dots v_nv_1$.

For integer $g$ with $3\leq g \leq n-1$, let $U_{n,g}$ be the unicyclic graph on $n$ vertices with girth $g$, obtained by adding an edge between one terminal vertex of the path $P_{n-g}$ and one vertex of the cycle $C_g$.

\section{Effect of local grafting operations on $\alpha$-spectral radius}

In this section, we propose some   local operations (grafting operations) that increase or decrease the $\alpha$-spectral radius of a hypergraph that is not necessarily uniform.

\begin{theorem} \label{Lem23}
Let $G$ be a connected hypergraph with $u,v\in V(G)$ and $e_{1}, \dots, e_{r}\in E(G)$ such that $u\notin e_{i}$, $v\in e_{i}$, and $(e_{i}\setminus \{v\}) \cup \{u\} \notin E(G)$ for $i=1, \dots, r$. Let $G'$ be the hypergraph obtained from $G$ by moving edges $e_{1}, \dots, e_{r}$ from $v$ to $u$. Let $x=x(G)$. If $x_{u}\geq x_{v}$, then $\rho_{\alpha}(G')> \rho_{\alpha}(G)$.
\end{theorem}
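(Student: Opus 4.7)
The natural approach is to fix $x=x(G)$ and compare Rayleigh quotients, since by Rayleigh's principle applied to the positive unit vector $x$ we have $\rho_\alpha(G')\ge x^\top A_\alpha(G')x$ while $\rho_\alpha(G)=x^\top A_\alpha(G)x$. Thus the task reduces to controlling the difference $x^\top(A_\alpha(G')-A_\alpha(G))x$ and, if necessary, handling the equality case via the eigenequation.

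To compute the difference I would split $A_\alpha=\alpha D+(1-\alpha)A$ and track the perturbations caused by replacing each $e_i$ by $e_i'=(e_i\setminus\{v\})\cup\{u\}$. Writing $f_i=e_i\setminus\{v\}$, only the row sums at $u$ and $v$ are affected, changing by $+\sum_{i=1}^r|f_i|$ and $-\sum_{i=1}^r|f_i|$ respectively; the only changes to the off-diagonal entries of $A$ are $a_{up}(G')-a_{up}(G)=+|\{i:p\in f_i\}|$ and $a_{vp}(G')-a_{vp}(G)=-|\{i:p\in f_i\}|$ for $p\in\bigcup_i f_i$ (note that the $(u,v)$ entry is unchanged because $u\notin e_i$ and $v\notin e_i'$). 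Collecting terms yields
\[
x^\top(A_\alpha(G')-A_\alpha(G))x=(x_u-x_v)\left[\alpha(x_u+x_v)\sum_{i=1}^r|f_i|+2(1-\alpha)\sum_{i=1}^r\sum_{p\in f_i}x_p\right].
\]
Since $G$ is connected, Perron--Frobenius gives $x>0$, and since $\alpha<1$ and each $|f_i|\ge 1$, the bracketed factor is strictly positive. Combined with the hypothesis $x_u\ge x_v$ this shows the whole expression is nonnegative, and strictly positive when $x_u>x_v$; in that case $\rho_\alpha(G')\ge x^\top A_\alpha(G')x>x^\top A_\alpha(G)x=\rho_\alpha(G)$ and we are done.

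The main obstacle is the degenerate case $x_u=x_v$, where the Rayleigh bound only gives $\rho_\alpha(G')\ge\rho_\alpha(G)$. To upgrade to strict inequality I would argue by contradiction: if $\rho_\alpha(G')=\rho_\alpha(G)$, then $x$ attains the Rayleigh maximum for $A_\alpha(G')$, and being a positive unit vector it must coincide with $x(G')$. Subtracting the eigenequations of $G$ and $G'$ at the vertex $u$ then forces
\[
0=(\rho_\alpha(G')-\rho_\alpha(G))x_u=\alpha x_u\sum_{i=1}^r|f_i|+(1-\alpha)\sum_{i=1}^r\sum_{p\in f_i}x_p,
\]
whose right-hand side is strictly positive (because $x_u,x_p>0$, $1-\alpha>0$, and $|f_i|\ge 1$). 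This contradiction yields $\rho_\alpha(G')>\rho_\alpha(G)$ in the remaining case.
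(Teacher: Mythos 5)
Your proposal is correct and follows essentially the same route as the paper: the same Rayleigh-quotient computation of $x^{\top}(A_{\alpha}(G')-A_{\alpha}(G))x$ (your edge-indexed sums $\sum_i\sum_{p\in f_i}$ are just a reorganization of the paper's vertex-indexed multiplicities $s_{wv}$), and the same resolution of the equality case by forcing $x=x(G')$ and comparing the eigenequations at $u$. No substantive differences.
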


\begin{proof}  For $w\in \cup_{i=1}^r(e_{i}\setminus \{v\})$, let $s_{wv}$ be the number of edges $e_1,\dots,e_r$ containing both $w$ and $v$. For $\{w,z\}\subseteq V(G)$, it is easily seen that
\begin{align*}
&\quad a_{wz}(G')-a_{wz}(G)\\
&=\begin{cases}
s_{wv}& \mbox{if } w\in\cup_{i=1}^r(e_i\setminus \{v\}), z=u, \mbox{ or } w=u, z\in\cup_{i=1}^r(e_i\setminus \{v\})\\
-s_{wv}& \mbox{if } w\in\cup_{i=1}^r(e_i\setminus \{v\}), z=v, \mbox{ or } w=v, z\in\cup_{i=1}^r(e_i\setminus \{v\})\\
%\sum_{w\in \cup_{i=1}^r(e_i\setminus \{v\})}\alpha s_{wv}& \mbox{if } w=z=u,\\
%-\sum_{w\in \cup_{i=1}^r(e_i\setminus \{v\})}\alpha s_{wv}& \mbox{if } w=z=v,\\
0 &\mbox{otherwise}.
\end{cases}
\end{align*}
Thus
\begin{align*}
\rho_{\alpha}(G')-\rho_{\alpha}(G)&\ge x^{\top}(A_{\alpha}(G')-A_{\alpha}(G))x\\
&= \sum_{\{w,z\}\subseteq V(G)}(a_{wz}(G')-a_{wz}(G))\left(\alpha(x_w^2+x_z^2)+2(1-\alpha)x_w x_z\right)\\
&= \sum_{w\in\cup_{i=1}^r(e_i\setminus \{v\})}s_{wv}\left(\alpha(x_w^2+x_u^2)+2(1-\alpha)x_w x_u\right)\\
&\quad -\sum_{w\in\cup_{i=1}^r(e_i\setminus \{v\})}s_{wv}\left(\alpha(x_w^2+x_v^2)+2(1-\alpha)x_w x_v\right)\\
&= \sum_{w\in\cup_{i=1}^r(e_i\setminus \{v\})}s_{wv}\left(\alpha (x_u^2-x_v^2)+2(1-\alpha)x_w(x_u-x_v)\right)\\
%&=&\sum_{w\in\cup_{i=1}^r(e_i\setminus \{v\})}\left(s_{wv}(\alpha(x_w^2+x_u^2)+2(1-\alpha)x_w x_u)\right.\\
%&&\left.-s_{wv}(\alpha(x_w^2+x_v^2)+2(1-\alpha)x_w x_v)\right)\\
&=\sum_{w\in\cup_{i=1}^r(e_i\setminus \{v\})}s_{wv}(x_u-x_v)(\alpha(x_u+x_v)+2(1-\alpha)x_w).
\end{align*}
As $x_u\ge x_v$, we have $\rho_{\alpha}(G')-\rho_{\alpha}(G)\ge 0$, i.e.,
 $\rho_{\alpha}(G')\ge \rho_{\alpha}(G)$. Suppose that $\rho_{\alpha}(G')=\rho_{\alpha}(G)$. Then $\rho_{\alpha}(G')=x^{\top}A_{\alpha}(G')x$, and thus $x(G')=x$.
 %we have $A_{\alpha}(G')x=\rho_{\alpha}(G')x=\rho_{\alpha}(G)x=A_{\alpha}(G)x$, i.e., $(A_{\alpha}(G')-A_{\alpha}(G))x=0$.  However, since
 Let $N_{1,2}=N_{G}(u)\cap \cup_{i=1}^r(e_i\setminus \{v\})$, $N_1=N_{G}(u)\setminus N_{1,2}$ and $N_2=\cup_{i=1}^r(e_i\setminus \{v\})\setminus N_{1,2}$.
 Note that  $N_{G'}(u)=N_1\cup N_{1,2}\cup N_2$ and $a_{uw}(G')=a_{uw}(G)$ for $w\in N_1$.
% %the entry of $(A_{\alpha}(G')-A_{\alpha}(G))x$ at $u$ is
If  $N_{1,2}\ne \emptyset$, then $\sum_{w\in N_{1,2}}(a_{uw}(G')-a_{uw}(G))(\alpha x_u+(1-\alpha)x_w)>0$. If $N_{1,2}= \emptyset$, then $\sum_{w\in N_2} a_{uw}(G')(\alpha x_u+(1-\alpha)x_w)>0$.
From the eigenequations of $G$ and $G'$ at $u$, we have
\begin{align*}
0&=\rho_{\alpha}(G')x_u-\rho_{\alpha}(G)x_u\\
&= \sum_{w\in N_{G'}(u)}a_{uw}(G')(\alpha x_u+(1-\alpha)x_w)-\sum_{w\in N_{G}(u)}a_{uw}(G)(\alpha x_u+(1-\alpha)x_w)\\
&= \sum_{w\in N_{1,2}}(a_{uw}(G')-a_{uw}(G))(\alpha x_u+(1-\alpha)x_w)\\
&\quad +\sum_{w\in N_2} a_{uw}(G')(\alpha x_u+(1-\alpha)x_w)\\
&>0,
\end{align*}
a contradiction.
It follows that  $\rho_{\alpha}(G')>\rho_{\alpha}(G)$.
\end{proof}

We note that Theorem \ref{Lem23} holds for both uniform hypergraphs and hypergraphs that are not necessarily uniform.
The cases $\alpha=0, \frac{1}{2}$  have been given in \cite{LZ}.

The following theorem generalizes a result in \cite{HoSm} for $\alpha=0$, see also \cite{CRS}.

\begin{theorem} \label{Lem26}
Let $\{u,v\}$ be an edge of a connected graph $G$, and $G_{uv}$ the graph obtained from $G$ by deleting edge $\{u,v\}$ and adding  edges $\{u,w\}$ and $\{w,v\}$, where $w$ is  a new vertex not in $G$. If $\{u,v\}$ lies in some internal path of $G$, then $\rho_{\alpha}(G_{uv})\le \rho_{\alpha}(G)$ with equality if and only if $\alpha=0$ and $G\cong W_n$ with $n\ge 6$, where $n=|V(G)|$.
\end{theorem}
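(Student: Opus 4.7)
The plan is to apply Lemma \ref{comp} to $M = A_\alpha(G_{uv})$, using a positive test vector $y$ on $V(G_{uv})$ built from the $\alpha$-Perron vector $x := x(G)$. Set $y_r = x_r$ for each $r \in V(G)$ and leave $y_w$ to be chosen; writing $\rho := \rho_\alpha(G)$, the eigenequations of $G$ automatically give $(A_\alpha(G_{uv}) y)_r = \rho y_r$ at every $r \in V(G) \setminus \{u,v\}$, while direct computation yields
\begin{align*}
(A_\alpha(G_{uv}) y - \rho y)_u &= (1-\alpha)(y_w - x_v),\\
(A_\alpha(G_{uv}) y - \rho y)_v &= (1-\alpha)(y_w - x_u),\\
(A_\alpha(G_{uv}) y - \rho y)_w &= (2\alpha - \rho)\,y_w + (1-\alpha)(x_u + x_v).
\end{align*}
Hence $A_\alpha(G_{uv}) y \leq \rho y$ componentwise iff $y_w$ lies in the interval $[(1-\alpha)(x_u+x_v)/(\rho-2\alpha),\ \min(x_u, x_v)]$, and this interval is nonempty iff $(1-\alpha)(x_u + x_v) \leq (\rho - 2\alpha)\min(x_u, x_v)$.

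We verify this interval-nonemptiness condition via two ingredients. First, $\rho \geq 2$: in type (i) the graph $G$ properly contains a cycle, so $\rho > \rho_\alpha(C_g) = 2$ by Lemma \ref{Lem22}; in type (ii) the graph $G$ contains a copy of some $W_m$ with $m \geq 6$ (take the internal path together with two extra neighbors at each of its degree-$\geq 3$ endpoints), and a Rayleigh-type estimate using the classical Perron vector of $W_m$ shows $\rho_\alpha(W_m) \geq 2$ with equality iff $\alpha = 0$, so $\rho \geq 2$ with equality iff $\alpha = 0$ and $G \cong W_m$. Second, since subdividing distinct edges of the same internal path produces isomorphic graphs, we may without loss of generality place $\{u,v\}$ at an edge whose endpoints both have degree $2$ in $G$ and such that $u$ attains a local minimum of $x$ along the internal path; the eigenequation at the degree-$2$ vertex $u$, namely $(\rho - 2\alpha) x_u = (1-\alpha)(x_v + x_{u'})$ with $x_{u'} \geq x_u$, then rearranges to the required inequality. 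The small boundary case $G = W_6$, where no such favorable edge exists on the internal path, is handled by a direct comparison of $\rho_\alpha(W_6)$ and $\rho_\alpha(W_7)$.

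Lemma \ref{comp} now gives $\rho_\alpha(G_{uv}) \leq \rho$, with strict inequality unless all three of the pointwise inequalities are simultaneously tight. The latter forces $y_w = x_u = x_v$ and $\rho = 2$, and the equality analysis in the preliminary step then forces $\alpha = 0$ and $G \cong W_n$ for some $n \geq 6$; conversely, Smith's classification (as cited in Section~2) gives $\rho_0(G) = \rho_0(G_{uv}) = 2$ in that situation, completing the characterization. The main obstacle will be the WLOG placement of $\{u, v\}$, which combines the subdivision-isomorphism observation with a local-minimum argument along the internal path via the second-order recurrence $x_{v_{i-1}} + x_{v_{i+1}} = ((\rho - 2\alpha)/(1-\alpha))\,x_{v_i}$ governing $x$ at degree-$2$ interior vertices, together with the direct verification of the $G = W_6$ boundary case.
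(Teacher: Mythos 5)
Your overall strategy---extend the $\alpha$-Perron vector $x$ of $G$ to a test vector $y$ on $V(G_{uv})$, compute the three defect entries at $u$, $v$, $w$, and invoke Lemma \ref{comp}---is the same as the paper's, and your reduction of the problem to the single inequality $(1-\alpha)(x_u+x_v)\le(\rho-2\alpha)\min(x_u,x_v)$ is a clean and correct reformulation; the first ingredient ($\rho\ge 2$ with equality only for $\alpha=0$ and $G\cong W_n$) and the equality analysis at the end are also fine. The gap is in the step that is supposed to verify the interval-nonemptiness condition. You claim that, after exploiting the isomorphism between the graphs obtained by subdividing different edges of the same internal path, one can always place $\{u,v\}$ so that $u$ has degree $2$ and is a two-sided local minimum of $x$ along the path, with $G\cong W_6$ as the only exception. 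That is false. For a type (ii) internal path $v_0v_1\dots v_{k+1}$ the minimum of $x$ over the path may be attained only at an endpoint of degree at least $3$: this happens for every graph whose internal path is a single edge between two branch vertices (for instance any double star $S(a,b)$ with $\{a,b\}\ne\{3,3\}$, which is not $W_6$ and for which the path has no degree-$2$ vertex at all), and also for longer internal paths with sufficiently asymmetric attachments at the two ends, where the entries $x_{v_0}<x_{v_1}<\dots<x_{v_{k+1}}$ are strictly monotone and no interior vertex is a two-sided local minimum.

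In all of those situations your interval may be empty, Lemma \ref{comp} cannot be applied with your $y$, and neither the inequality nor the equality characterization follows. This is precisely where the paper's proof does its real work: it lets $t$ be the first index attaining the minimum along the path, and when $t=0$ (minimum at the branch vertex $v_0$) it splits into two further subcases according to whether $(1+\alpha)x_0\le \alpha d_0x_0+(1-\alpha)s$; the harder subcase requires a test vector whose entry at $v_0$ itself is changed to $(1-\alpha)s/(1+\alpha-\alpha d_0)$ rather than kept equal to $x_0$, together with a separate estimate at the other neighbours of $v_0$. Your proposal contains no substitute for this, and the suggested patch (a direct comparison of $\rho_\alpha(W_6)$ with $\rho_\alpha(W_7)$) repairs only a single graph out of an infinite family of problematic ones. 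Until the ``minimum at a branch vertex'' case is handled, the proof is incomplete.
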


\begin{proof}
A label of the vertices of $G_{uv}$ is obtained from a label of the vertices of $G$ by adding the new vertex $w$.
We consider cases corresponding to the two types of the internal path.

\noindent{\bf Case 1.} $uv$ lies on an internal path $v_0 v_1 \dots v_{k+1}$ of type $(i)$,  where $v_{k+1}=v_{0}$ and $d_G(v_0)\geq 3$.

In this case, $C_{k+1}$ is a proper  induced subgraph of $G$, and thus by Lemma \ref{Lem22}, $\rho_{\alpha}(G)>\rho_{\alpha}(C_{k+1})=2$.

Relabel vertices $v_0, \dots, v_k$  as $0, 1, \dots, k$.
Let $x_0, x_1, \dots, x_k$ be the corresponding entries of the $\alpha$-Perron vector $x$ of $G$ at vertices $0, 1, \dots, k$. By Lemma \ref{Lem21}, $x_j=x_{k+1-j}$ for $j=1,\dots,k$.

\noindent {\bf Case 1.1.} $k$ is even. We may take $u=\frac{k}{2}$ and $v=\frac{k}{2}+1$. Then $x_u=x_v$.  Let $y$ be the vector  with $y_z=x_z$ for $z\in V(G)$  and $y_w=x_u$.
By the choice of $y$, $A_{\alpha}(G_{uv})y$ and $\rho_{\alpha}(G)y$ differ only in the entries corresponding to $w$, and
\begin{align*}
\left(A_{\alpha}(G_{uv})y\right)_w&=2\alpha y_w+(1-\alpha)(y_u+y_v)\\
&=2\alpha x_u+2(1-\alpha)x_u\\
&=2x_u\\
&<\rho_{\alpha}(G)x_u\\
&=\rho_{\alpha}(G)y_w.
\end{align*}
By Lemma \ref{comp}, $\rho_{\alpha}(G_{uv})< \rho_{\alpha}(G)$.

\noindent {\bf Case 1.2.} $k$ is odd. We may take $u=\frac{k-1}{2}$ and $v=\frac{k+1}{2}$.  Let $y$ be the vector  with $y_z=x_z$ for $z\in V(G)$ and $y_w=x_v$. Since $2\alpha x_v+2(1-\alpha)x_u=A_{\alpha}(G)x_v=\rho_{\alpha}(G)x_v>2x_v$, we have $x_u>x_v$. By the choice of $y$,  $A_{\alpha}(G_{uv})y$ and $\rho_{\alpha}(G)y$ differ only in the entries corresponding to $w$ and $v$, and we have
\begin{align*}
\left(A_{\alpha}(G_{uv})y\right)_w&=2\alpha y_w+(1-\alpha)(y_u+y_v)\\
&=2\alpha x_v+(1-\alpha)(x_u+x_v)\\
&<2\alpha x_v+(1-\alpha)(x_u+x_u)\\
&= \left(A_{\alpha}(G)x\right)_v\\
&=\rho_{\alpha}(G)x_v\\
&=\rho_{\alpha}(G)y_w,
\end{align*}
and
\begin{align*}
\left(A_{\alpha}(G_{uv})y\right)_v&=2\alpha y_v+(1-\alpha)(y_w+y_u)\\
&=2\alpha x_v+(1-\alpha)(x_v+x_u)\\
&<2\alpha x_v+(1-\alpha)(x_u+x_u)\\
&=\rho_{\alpha}(G)x_v\\
&=\rho_{\alpha}(G)y_v.
\end{align*}
By Lemma \ref{comp}, $\rho_{\alpha}(G_{uv})< \rho_{\alpha}(G)$.

\noindent{\bf Case 2.} $uv$ lies on an internal path $v_0v_1\dots v_{k+1}$ of type $(ii)$, where $d_G(v_0), d_G(v_{k+1})\geq 3$.

Relabel $v_0, v_1,\dots, v_{k+1}$ as $0, 1, \dots, k+1$.
Let $x_0, x_1, \dots, x_{k+1}$ be the corresponding entries of the $\alpha$-Perron vector $x$ of $G$ at $0,\dots, k+1$. Assume that $x_{0}\le x_{k+1}$. Let $t$ be the smallest index such that $x_t=\min\{x_i: 0\le i\le k+1\}$. Then $t\le k$. Assume that $u=t$ and  $v=t+1$.

Suppose first that $t>0$. Let $y$ be a vector with  $y_z=x_z$ for $z\in V(G)$ and $y_w=x_u=x_t$. Then $A_{\alpha}(G_{uv})y$ and $\rho_{\alpha}(G)y$ differ only in the entries corresponding to $w$ and $u$, and we have
\begin{align*}
\left(A_{\alpha}(G_{uv})y\right)_w&=2\alpha y_w+(1-\alpha)(y_u+y_v)\\
&=2\alpha x_u+(1-\alpha)(x_u+x_v)\\
&<2\alpha x_u+(1-\alpha)(x_{t-1}+x_v)\\
&= \left(A_{\alpha}(G)x\right)_u\\
&=\rho_{\alpha}(G)x_u\\
&=\rho_{\alpha}(G)y_w,
\end{align*}
and
\begin{align*}
\left(A_{\alpha}(G_{uv})y\right)_u&=2\alpha y_u+(1-\alpha)(y_{t-1}+y_w)\\
&=2\alpha x_u+(1-\alpha)(x_{t-1}+x_u)\\
&\le 2\alpha x_u+(1-\alpha)(x_{t-1}+x_v)\\
&=\rho_{\alpha}(G)x_u\\
&=\rho_{\alpha}(G)y_u.
\end{align*}
By Lemma \ref{comp}, $\rho_{\alpha}(G_{uv})< \rho_{\alpha}(G)$.

Suppose in the following that $t=0$. Let $S$ be the set of neighbors of $0$ other than $1$ and $s=\sum_{j\in S}x_j$. Let $d_0=d_G(0)$.

\noindent {\bf Case 2.1.}  $(1+\alpha)x_0\le \alpha d_{0}x_0+(1-\alpha)s$.

Note that $d_G(0), d_G(k+1)\ge 3$. From the result in \cite{Smi} mentioned above  and by Lemma \ref{Lem22}, we have  $\rho_0(G)\ge 2$ with equality if and only if $G\cong W_n$ with $n\ge 6$.
As $\rho_{\alpha}(G)$ is a strictly increasing function for $\alpha\in [0,1)$, we have $\rho_{\alpha}(G)\ge 2$ with equality  if and only if $\alpha=0$ and $G\cong W_n$.

Let $y$ be the vector with $y_z=x_z$ for $z\in V(G)$ and $y_w=x_u=x_0$. Then $A_{\alpha}(G_{uv})y$ and $\rho_{\alpha}(G)y$ differ only in the entries corresponding to $w$ and $u$, and we have
\begin{align*}
\left(A_{\alpha}(G_{uv})y\right)_w&=2\alpha y_w+(1-\alpha)(y_u+y_v)\\
&=2\alpha x_0+(1-\alpha)(x_0+x_1)\\
&\le \alpha d_{0}x_0+(1-\alpha)(s+x_1)\\
&=\rho_{\alpha}(G)x_0\\
&=\rho_{\alpha}(G)y_w,
\end{align*}
and
\begin{align*}
\left(A_{\alpha}(G_{uv})y\right)_u&=\alpha d_{0} y_u+(1-\alpha)(s+y_w)\\
&=\alpha d_{0} x_0+(1-\alpha)(s+x_0)\\
&\le \alpha d_{0} x_0+(1-\alpha)(s+x_1)\\
&=\rho_{\alpha}(G)x_0\\
&=\rho_{\alpha}(G)y_u.
\end{align*}
If one of these inequalities  is  strict, then  $\rho_{\alpha}(G_{uv})< \rho_{\alpha}(G)$ by Lemma \ref{comp}.
If these inequalities are equalities, then $(1+\alpha)x_0=\alpha d_{0}x_0+(1-\alpha)s$ and $x_0=x_1$, implying  $\rho_{\alpha}(G)x_0=\left(A_{\alpha}(G)x\right)_0=\alpha d_{0}x_0+(1-\alpha)(s+x_1)=(1+\alpha)x_0+(1-\alpha)x_1=2x_0$, i.e.,  $\rho_{\alpha}(G)=2$,
i.e., $\alpha=0$ and $G\cong W_n$.

\noindent {\bf Case 2.2.}   $(1+\alpha)x_0> \alpha d_{0}x_0+(1-\alpha)s$.

In this case,  $(1+\alpha-\alpha d_0)x_0>(1-\alpha)s>0$, and then  $x_0>s$, implying $s+x_0<2x_0\le (d_0-1)x_0$, i.e.,
$(1+\alpha)s+(1+\alpha-\alpha d_0)x_0<\alpha s+(1-\alpha)(d_0-1)x_0$.
For $j\in S$, we have $\rho_{\alpha}(G)x_j\ge \alpha x_{j}+(1-\alpha)x_0$, implying
\[
\rho_{\alpha}(G)s\ge \alpha s+(1-\alpha)(d_0-1)x_0>(1+\alpha)s+(1+\alpha-\alpha d_0)x_0.
\]
Thus
\[
\alpha d_{0} \frac{(1-\alpha)s}{1+\alpha-\alpha d_0}+(1-\alpha)(s+x_0)\\
<\rho_{\alpha}(G)\frac{(1-\alpha)s}{1+\alpha-\alpha d_0}.
\]

Let $y$ be the vector with $y_0=\frac{(1-\alpha)s}{1+\alpha-\alpha d_0}$,  $y_z=x_z$ for $z\in V(G)\setminus \{0\}$, and $y_w=x_u=x_0$. Then $A_{\alpha}(G_{uv})y$ and $\rho_{\alpha}(G)y$ differ only in the entries corresponding to $j\in S$, $w$ and $u$, and  for $j\in S$ we have
\begin{align*}
\left(A_{\alpha}(G_{uv})y\right)_j&=\alpha d_G(j) y_j+(1-\alpha)\sum_{i\in N_G(j)}y_i\\
&=\alpha d_G(j) x_j+(1-\alpha)\left(\sum_{i\in N_G(j)\setminus\{0\}}x_i+\frac{(1-\alpha)s}{1+\alpha-\alpha d_0}\right)\\
&<\alpha d_G(j)x_j+(1-\alpha)\left(\sum_{i\in N_G(j)\setminus\{0\}}x_i+x_0\right)\\
&=\rho_{\alpha}(G)x_j\\
&=\rho_{\alpha}(G)y_j,
\end{align*}
\begin{align*}
\left(A_{\alpha}(G_{uv})y\right)_w&=2\alpha y_w+(1-\alpha)(y_0+y_1)\\
&=2\alpha x_0+(1-\alpha)\left(\frac{(1-\alpha)s}{1+\alpha-\alpha d_0}+x_1\right)\\
&\le\alpha d_{0} x_0+(1-\alpha)(s+x_1)\\
&=\rho_{\alpha}(G)x_0\\
&=\rho_{\alpha}(G)y_w,
\end{align*}
and
\begin{align*}
\left(A_{\alpha}(G_{uv})y\right)_u&=\alpha d_{0} y_u+(1-\alpha)(s+y_w)\\
&=\alpha d_{0} \frac{(1-\alpha)s}{1+\alpha-\alpha d_0}+(1-\alpha)(s+x_0)\\
&<\rho_{\alpha}(G)\frac{(1-\alpha)s}{1+\alpha-\alpha d_0}\\
&=\rho_{\alpha}(G)y_u.
\end{align*}
Now by Lemma \ref{comp}, $\rho_{\alpha}(G_{uv})< \rho_{\alpha}(G)$.

By combining the above cases, we have either $\rho_{\alpha}(G_{uv})< \rho_{\alpha}(G)$ or $\alpha=0$ and $G\cong W_n$ with $n\ge 6$ for which we have $\rho_{\alpha}(G_{uv})=\rho_{\alpha}(G)=2$.
\end{proof}

\begin{theorem} \label{Thm31}
For $k-2\geq r \geq 2$, let $G$ be a hypergraph with two pendant edges $e_{1}=\{w_{1}, \dots, w_{k-1},u\}$ and $e_{2}=\{v_{1}, \dots, v_{r-1},u\}$ at $u$. Let $G'$ be the hypergraph obtained from $G$ by moving vertex $w_{1}$ from $e_{1}$ to $e_{2}$. Then $\rho_{\alpha}(G')< \rho_{\alpha}(G)$.
\end{theorem}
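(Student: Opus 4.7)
The plan is to evaluate $x^{\top}(A_{\alpha}(G)-A_{\alpha}(G'))x$ on the $\alpha$-Perron vector $x=x(G')$ of $G'$, show it is strictly positive, and then invoke Rayleigh's principle to conclude $\rho_{\alpha}(G)>\rho_{\alpha}(G')$.

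First I would exploit the symmetries of $G'$. The vertices $w_2,\dots,w_{k-1}$ play identical roles in $G'$ (each is a degree-$1$ vertex of the pendant edge $e_1'=e_1\setminus\{w_1\}$), so by Lemma \ref{Lem21} they share a common entry $x_{w_2}=\cdots=x_{w_{k-1}}=:a$. Likewise, $w_1,v_1,\dots,v_{r-1}$ are interchangeable in $G'$ (degree-$1$ vertices of the pendant edge $e_2'=e_2\cup\{w_1\}$), so $x_{w_1}=x_{v_1}=\cdots=x_{v_{r-1}}=:b$.

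Next I would compare $a$ and $b$ via the eigenequations of $G'$ at $w_2$ and $w_1$. A short computation simplifies these to
\[
[\rho_{\alpha}(G')-(k-3)-\alpha]\,a=(1-\alpha)x_u, \qquad [\rho_{\alpha}(G')-(r-1)-\alpha]\,b=(1-\alpha)x_u.
\]
Since the right-hand sides are positive, both bracketed factors are positive, and because $k-3\ge r-1$ (from the hypothesis $k-2\ge r$), the first bracket does not exceed the second, giving $a\ge b>0$.

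Then I would compute $\Delta:=x^{\top}A_{\alpha}(G)x-x^{\top}A_{\alpha}(G')x$ via the edgewise identity
\[
x^{\top}A_{\alpha}(H)x=\sum_{e\in E(H)}\Bigl[\alpha(|e|-1)\sum_{v\in e}x_v^2+2(1-\alpha)\sum_{\{u,v\}\subseteq e}x_ux_v\Bigr].
\]
Only the edges $\{e_1,e_2\}$ of $G$ and $\{e_1',e_2'\}$ of $G'$ contribute to $\Delta$; all other edges cancel. Plugging in the symmetry values $a,b$ and noting that the $x_u^2$ coefficient cancels (because the row sum at $u$ is preserved by the move), I expect to obtain, after routine expansion,
\[
\Delta=\alpha\bigl[(k-2r)b^2+(k-2)a^2\bigr]+2(1-\alpha)\,b\bigl[(k-2)a-(r-1)b\bigr].
\]

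Finally I would apply $a\ge b>0$ together with $k-2>0$ to replace $a^2$ and $ab$ by $b^2$ in the nonnegative-coefficient terms, producing
\[
\Delta\ge\bigl\{\alpha[(k-2)+(k-2r)]+2(1-\alpha)[(k-2)-(r-1)]\bigr\}b^2=2(k-r-1)\,b^2>0,
\]
since $k-r-1\ge 1$ by hypothesis. Rayleigh's principle then yields $\rho_{\alpha}(G)\ge x^{\top}A_{\alpha}(G)x>x^{\top}A_{\alpha}(G')x=\rho_{\alpha}(G')$, completing the proof. The main obstacle is purely bookkeeping: carefully expanding the four edge contributions so that the $x_u^2$ and $a^2$ cross-terms cancel correctly, and arranging the lower bound on $\Delta$ so that it remains positive even when $k<2r$, where the $\alpha$-part of $\Delta$ has a negative summand that must be absorbed by the $a^2$-term using $a\ge b$.
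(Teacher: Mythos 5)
Your proposal is correct and follows essentially the same route as the paper's proof: the test vector $x(G')$, the symmetry reduction via Lemma \ref{Lem21}, the eigenequation comparison yielding $x_{w_2}\ge x_{w_1}$, and the final bound $2(k-r-1)x_{w_1}^2>0$ on the Rayleigh-quotient difference all coincide with the paper's computation (your edgewise grouping of the quadratic form is just a reorganization of the paper's pairwise sum). The only cosmetic difference is that you deduce positivity of the bracketed factors directly from the eigenequations rather than via the $K_{k-1}$ principal-submatrix bound.
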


\begin{proof}  For $\{w,z\}\subseteq V(G)$,
it is easily seen that
\begin{align*}
&\quad a_{wz}(G')-a_{wz}(G)\\
&=\begin{cases}
1 & \mbox{if }  w=w_{1}, z\in \{v_{1}, \dots, v_{r-1}\}, \mbox{ or }w\in \{v_{1}, \dots, v_{r-1}\},  z=w_{1},\\
-1 & \mbox{if } w=w_{1}, z\in \{w_{2}, \dots, w_{k-1}\}, \mbox{ or }w\in \{w_{2}, \dots, w_{k-1}\},  z=w_{1},\\
%(1+r-k)\alpha & w=z=w_1,\\
%-\alpha & \mbox{if }  w=z=w_{i}, i\in\{2,\dots,k-1\},\\
%\alpha &\mbox{if } w=z=v_{i}, i\in\{1,\dots,r-1\},\\
0 & \mbox{otherwise}.
\end{cases}
\end{align*}
Let $x=x(G')$. By Lemma \ref{Lem21}, $x_{w_1}=x_{v_1}= \dots = x_{v_{r-1}}$ and $x_{w_2}= \dots = x_{w_{k-1}}$. Since  $\{w_2,\dots,w_{k-1},u\}\in E(G')$,  $A_{\alpha}(G'[\{w_2,\dots,w_{k-1},u\}])=A_{\alpha}(K_{k-1})$ is a principal submatrix of $A_{\alpha}(G')$. By Lemma \ref{Lem22}, we have $\rho_{\alpha}(G')> \rho_{\alpha}(K_{k-1})=k-2\geq r$. From the eigenequation of $G'$ at $w_1$ and $w_2$, we have
\[
\rho_{\alpha}(G')x_{w_1}=\alpha rx_{w_1}+(1-\alpha)((r-1)x_{w_1}+x_{u}),
\]
\[
\rho_{\alpha}(G')x_{w_2}=\alpha (k-2)x_{w_2}+(1-\alpha)((k-3)x_{w_2}+x_{u}),
\]
and thus  $(\rho_{\alpha}(G')-r-\alpha+1)x_{w_1}=(1-\alpha)x_{u}=(\rho_{\alpha}(G')-k-\alpha+3)x_{w_2}$, implying
$x_{w_1}\le x_{w_2}$. Therefore
\begin{align*}
\rho_{\alpha}(G')-\rho_{\alpha}(G)&\leq x^\top (A_{\alpha}(G')-A_{\alpha}(G)) x  \\%[1mm]
&= \sum_{\{w,z\}\subseteq V(G)} (a_{wz}(G')-a_{wz}(G))(\alpha(x_w^2+x_z^2)+2(1-\alpha)x_w x_z)  \\%[1mm]
&= \sum_{z\in \{v_{1}, \dots, v_{r-1}\}}(\alpha(x_{w_1}^2+x_z^2)+2(1-\alpha)x_{w_1} x_z)\\
&\quad -\sum_{z\in \{w_{2}, \dots, w_{k-1}\}}(\alpha(x_{w_{1}}^2+x_z^2)+2(1-\alpha)x_{w_1} x_z)\\%[1mm]
&= \alpha(2r-k)x_{w_1}^2-\alpha(k-2)x_{w_2}^2\\
&\quad +2(1-\alpha)x_{w_1}\left(\sum_{z\in \{v_{1}, \dots, v_{r-1}\}} x_z-\sum_{z\in \{w_{2}, \dots, w_{k-1}\}} x_z \right) \\%[1mm]
&=  \alpha(2r-k)x_{w_1}^2-\alpha(k-2)x_{w_2}^2+2(1-\alpha)x_{w_1}\left((r-1)x_{w_1}-(k-2)x_{w_2}\right) \\%[1mm]
&=  (\alpha(2r-k)+2(1-\alpha)(r-1))x_{w_1}^2-\alpha(k-2)x_{w_2}^2-2(1-\alpha)(k-2)x_{w_1}x_{w_2}\\
&\le  (\alpha(2r-k)+2(1-\alpha)(r-1)-\alpha(k-2)-2(1-\alpha)(k-2))x_{w_1}^2  \\%[1mm]
%&=& (2(r-k+1)-\alpha)x_{w_1}^2\\
&=2(r-k+1)x_{w_1}^2\\
&< 0,
\end{align*}
implying   $\rho_{\alpha}(G')< \rho_{\alpha}(G)$.
\end{proof}

\begin{theorem} \label{Thm33}
%Let $G, G_0, G_1, G_2$ be connected hypergraphs with $G_0$ being a cycle of length two, where  $E(G_0)=\{e_1, e_2\}$ with $e_1\cap e_2=\{u,v\}$, $V(G_1\cap G_0)=\{u\}$, $V(G_2\cap G_0)=\{v\}$, $V(G_1\cap G_2)=\emptyset$, and $E(G)=E(G_1)\cup E(G_2)\cup E(G_3)$.
%%$d_G(w)=1$ for $w\in V(G_0)\setminus\{u,v\}$.
%Let $|e_i|=n_i$ for $i=1,2$. If $n_1-2\geq n_2\geq 2$, let $u_1\in e_1\backslash \{u,v\}$ and $G'$ be the hypergraph obtained from $G$ by moving vertex $u_1$ from $e_1$ to $e_2$. Then $\rho_{\alpha}(G')<\rho_{\alpha}(G)$.
Let $G, G_0, G_1, G_2$ be connected hypergraphs with $G_0$ being a cycle of length two,  where  $E(G)=E(G_0)\cup E(G_1)\cup E(G_2)$, $E(G_0)=\{e_1, e_2\}$ with $e_1\cap e_2=\{u,v\}$, $V(G_1)\cap V(G_0)=\{u\}$, $V(G_2) \cap V(G_0)=\{v\}$, and $V(G_1)\cap V(G_2)=\emptyset$.
%$d_G(w)=1$ for $w\in V(G_0)\setminus\{u,v\}$.
Let $|e_i|=n_i$ for $i=1,2$. If $n_1-2\geq n_2\geq 2$, let $u_1\in e_1\backslash \{u,v\}$ and $G'$ be the hypergraph obtained from $G$ by moving vertex $u_1$ from $e_1$ to $e_2$. Then $\rho_{\alpha}(G')<\rho_{\alpha}(G)$.
\end{theorem}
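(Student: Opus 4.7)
My plan is to mirror the proof of Theorem~\ref{Thm31}: take $y=x(G')$ and show $y^{\top}(A_\alpha(G)-A_\alpha(G'))y>0$, which together with $\rho_\alpha(G)\geq y^{\top}A_\alpha(G)y$ and $\rho_\alpha(G')=y^{\top}A_\alpha(G')y$ yields $\rho_\alpha(G)>\rho_\alpha(G')$. First I would enumerate the adjacency changes. Writing $A:=e_1\setminus\{u,v,u_1\}$ and $B:=e_2\setminus\{u,v\}$, the hypotheses $V(G_1)\cap V(G_0)=\{u\}$, $V(G_2)\cap V(G_0)=\{v\}$ and $V(G_1)\cap V(G_2)=\emptyset$ force $u_1$, every $a\in A$, and every $b\in B$ to lie in exactly one edge of $G$, so the only pairs whose entries in $A(G)$ and $A(G')$ differ are
\[
a_{u_1,a}(G)-a_{u_1,a}(G')=1\ \ (a\in A),\qquad a_{u_1,b}(G)-a_{u_1,b}(G')=-1\ \ (b\in B);
\]
the multi-adjacency $a_{uv}=2$ and all entries touching $G_1,G_2$ are unchanged.

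Since $u_1$ and the vertices of $B$ are pendants of $e_2'=e_2\cup\{u_1\}$ in $G'$, and the vertices of $A$ are pendants of $e_1'=e_1\setminus\{u_1\}$, Lemma~\ref{Lem21} gives common values $y_B:=y_{u_1}=y_b$ for $b\in B$ and $y_A:=y_a$ for $a\in A$, and the quadratic form collapses to
\[
y^{\top}(A_\alpha(G)-A_\alpha(G'))y=(n_1-3)\bigl(\alpha(y_A^2+y_B^2)+2(1-\alpha)y_Ay_B\bigr)-2(n_2-2)y_B^2.
\]
To compare $y_A$ and $y_B$, I would read off the eigenequations of $G'$ at a vertex $a\in A$ (row sum $n_1-2$, neighbours $\{u,v\}\cup(A\setminus\{a\})$) and at $u_1$ (row sum $n_2$, neighbours $\{u,v\}\cup B$), obtaining
\[
(\rho_\alpha(G')-n_1+4-2\alpha)\,y_A=(1-\alpha)(y_u+y_v)=(\rho_\alpha(G')-n_2+2-2\alpha)\,y_B.
\]
The common right-hand side is positive, hence both coefficients on the left are positive; and $n_1\geq n_2+2$ forces the coefficient of $y_A$ to be at most that of $y_B$, giving $y_A\geq y_B>0$.

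Finally, combining $y_A\geq y_B$ with the trivial bound $\alpha(y_A^2+y_B^2)+2(1-\alpha)y_Ay_B\geq 2y_B^2$ yields
\[
y^{\top}(A_\alpha(G)-A_\alpha(G'))y\geq 2(n_1-3)y_B^2-2(n_2-2)y_B^2=2(n_1-n_2-1)y_B^2>0,
\]
the last strict inequality using $n_1\geq n_2+2$ once more. I expect the main obstacle to be the adjacency bookkeeping in the first paragraph: one must verify that the edge multiplicity $a_{uv}=2$, the entries between $\{u,v\}$ and $A\cup B$, and all entries inside $V(G_1)\cup V(G_2)$ are genuinely unchanged, so that the quadratic form really is supported only on the pairs $\{u_1,a\}$ and $\{u_1,b\}$; once this is settled, the symmetry reduction and the two eigenequations do the rest.
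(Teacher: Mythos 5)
Your proof is correct and follows essentially the same route as the paper's: you take the Perron vector of $G'$, use the automorphism lemma to collapse the pendant vertices of the two edges to two values, compare them via the eigenequations (obtaining exactly the paper's identity $(\rho_{\alpha}(G')-n_1+4-2\alpha)y_A=(1-\alpha)(y_u+y_v)=(\rho_{\alpha}(G')-n_2+2-2\alpha)y_B$), and bound the Rayleigh-quotient difference, arriving at the same final quantity $2(n_1-n_2-1)y_B^2>0$. The only cosmetic difference is that you deduce positivity of the eigenequation coefficients directly from the positive right-hand side, where the paper instead invokes $\rho_{\alpha}(G')>\rho_{\alpha}(K_{n_1-1})=n_1-2$.
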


\begin{proof}
Let $e_1=\{u, u_1, u_2, \dots, u_{n_1-2},v\}$, $e_2=\{u, v_1, \dots, v_{n_2-2},v\}$. For $\{w, z\}\subseteq V(G)$,
it is easily seen that
\begin{align*}
&\quad a_{wz}(G')-a_{wz}(G)\\
&=\begin{cases}
1 & \mbox{if } w=u_{1}, z\in \{v_{1}, \dots, v_{n_2-2}\}, \mbox{ or }w\in \{v_{1}, \dots, v_{n_2-2}\}, z=u_{1},\\
-1 & \mbox{if } w=u_{1}, z\in \{u_{2}, \dots, u_{n_1-2}\}, \mbox{ or }w\in \{u_{2}, \dots, u_{n_1-2}\}, z=u_{1},\\
%\alpha & \mbox{if } w=z=v_{i}, i\in\{1,\dots,n_2-2\},\\
%-\alpha & \mbox{if } w=z=u_{1}, i\in\{2,\dots,n_1-2\},\\
0 & \mbox{otherwise}.
\end{cases}
\end{align*}
Let $x=x(G')$. By Lemma \ref{Lem21}, $x_{u_1}=x_{v_1}= \dots = x_{v_{n_2-2}}$ and $x_{u_2}= \dots = x_{u_{n_1-2}}$. By Lemma \ref{Lem22}, we also have $\rho_{\alpha}(G')> \rho_{\alpha}(K_{n_1-1})=n_1-2\geq n_2$. From the eigenequation of $G'$ at $u_1$ and $u_2$, we have
\begin{align*}
\rho_{\alpha}(G')x_{u_1}&=\alpha n_2x_{u_1}+(1-\alpha)\left(x_{u}+x_{v}+(n_2-2)x_{u_1}\right),\\
\rho_{\alpha}(G')x_{u_2}&=\alpha (n_1-2)x_{u_2}+(1-\alpha)\left(x_{u}+x_{v}+(n_1-4)x_{u_2}\right),
\end{align*}
and thus  $(\rho_{\alpha}(G')-n_2+2-2\alpha)x_{u_1}=(1-\alpha)(x_{u}+x_{v})=(\rho_{\alpha}(G')-n_1+4-2\alpha)x_{u_2}$, implying  $x_{u_1}\leq x_{u_2}$.
Therefore
\begin{align*}
\rho_{\alpha}(G')-\rho_{\alpha}(G)&\leq  x^\top (A_{\alpha}(G')-A_{\alpha}(G)) x \\
&= \sum_{\{w,z\}\subseteq V(G)} (a_{wz}(G')-a_{wz}(G))(\alpha(x_w^2+x_z^2)+2(1-\alpha)x_w x_z)\\
&= \sum_{z\in \{v_{1}, \dots, v_{n_2-2}\}}(\alpha(x_{u_1}^2+x_z^2)+2(1-\alpha)x_{u_1} x_z)\\
&\quad -\sum_{z\in \{u_{2}, \dots, u_{n_1-2}\}}(\alpha(x_{u_{1}}^2+x_z^2)+2(1-\alpha)x_{u_1} x_z)\\
&= \alpha(2n_{2}-n_{1}-1)x_{u_1}^2-\alpha(n_1-3)x_{u_2}^2\\
&\quad +2(1-\alpha)x_{u_1}\left(\sum_{z\in \{v_{1}, \dots, v_{n_2-2}\}} x_z-\sum_{z\in \{u_{2}, \dots, u_{n_1-2}\}}x_z \right)\\
&= \alpha(2n_{2}-n_{1}-1)x_{u_1}^2-\alpha(n_1-3)x_{u_2}^2\\
&\quad +2(1-\alpha)x_{u_1}\left((n_2-2)x_{u_1}-(n_1-3)x_{u_2}\right)\\
&= (\alpha(2n_{2}-n_{1}-1)+2(1-\alpha)(n_2-2))x_{u_1}^2-\alpha(n_1-3)x_{u_2}^2\\
&\quad -2(1-\alpha)(n_1-3)x_{u_1}x_{u_2}\\
&\le  (\alpha(2n_{2}-n_{1}-1)+2(1-\alpha)(n_2-2)-\alpha(n_1-3)-2(1-\alpha)(n_1-3))x_{u_1}^2\\
&=2(n_2-n_1+1)x_{u_1}^2 \\
&< 0,
\end{align*}
implying $\rho_{\alpha}(G')< \rho_{\alpha}(G)$.
\end{proof}

\begin{theorem} \label{Thm34}
For $k\geq 3$, let $e=\{w_1, \dots,w_k\}$ be an edge of a connected hypergraph $G$, and $e\setminus \{w_2\}$, $\{w_1,w_2\}$ $\notin E(G)$. Let $G'$ be the hypergraph obtained from $G$ by removing $w_2$ from $e$ and attaching an edge $\{w_1,w_2\}$ to $w_1$. Then $\rho_{\alpha}(G')< \rho_{\alpha}(G)$.
\end{theorem}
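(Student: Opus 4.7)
The plan is to apply Rayleigh's principle with the $\alpha$-Perron vector of $G'$, after first tracking exactly which entries of the adjacency matrix change under the operation.

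First I would pin down the effect of replacing the edge $e=\{w_1,\ldots,w_k\}$ by the two edges $\{w_1,w_2\}$ and $e\setminus\{w_2\}$. For a pair $\{w_i,w_j\}$ with $i,j\neq 2$, the contribution from $e$ is exactly recovered by $e\setminus\{w_2\}$; for the pair $\{w_1,w_2\}$, the contribution lost from $e$ is replaced by the new $2$-edge (permitted because $\{w_1,w_2\}\notin E(G)$); but for $\{w_2,w_i\}$ with $3\le i\le k$, the contribution from $e$ vanishes with no replacement. Hence $a_{w_2 w_i}(G')-a_{w_2 w_i}(G)=-1$ for $3\le i\le k$, and all other adjacency entries are unchanged. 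Consequently the row sums satisfy $r_{G'}(w_2)=r_G(w_2)-(k-2)$, $r_{G'}(w_i)=r_G(w_i)-1$ for $3\le i\le k$, and are unchanged elsewhere.

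Next, any walk in $G$ traversing $e$ can be rerouted in $G'$ through $w_1$, so $G'$ is connected and admits a positive unit $\alpha$-Perron vector $x:=x(G')$. Applying Rayleigh's principle to the symmetric matrix $A_\alpha(G)$ at $x$ gives
\[
\rho_\alpha(G)\ge x^\top A_\alpha(G)x=\rho_\alpha(G')+x^\top\bigl(A_\alpha(G)-A_\alpha(G')\bigr)x.
\]
From the entrywise description above, the diagonal entries of $A_\alpha(G)-A_\alpha(G')$ are $\alpha(k-2)$ at $w_2$ and $\alpha$ at each $w_i$ with $3\le i\le k$, the off-diagonal entries are $1-\alpha$ on each pair $\{w_2,w_i\}$, and all remaining entries are zero. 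Expanding yields
\[
x^\top\bigl(A_\alpha(G)-A_\alpha(G')\bigr)x=\sum_{i=3}^{k}\Bigl[\alpha\bigl(x_{w_2}^2+x_{w_i}^2\bigr)+2(1-\alpha)x_{w_2}x_{w_i}\Bigr].
\]

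Since $\alpha\in[0,1)$ and every entry of $x$ is strictly positive, each summand is strictly positive (for $\alpha=0$ by the factor $2x_{w_2}x_{w_i}$, and for $\alpha>0$ by the diagonal contribution as well); the sum is nonempty because $k\ge 3$, so $\rho_\alpha(G)>\rho_\alpha(G')$. The only delicate step is the entrywise bookkeeping of the change in $A(G)$, particularly the cancellation at $\{w_1,w_2\}$ between the removed contribution from $e$ and the added $2$-edge; once that is handled correctly, the remainder is a clean Rayleigh-quotient comparison of exactly the same flavour as Theorems \ref{Lem23}, \ref{Thm31}, and \ref{Thm33}.
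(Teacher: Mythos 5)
Your proof is correct and rests on the same key observation as the paper's: the operation leaves every entry of the adjacency matrix unchanged except $a_{w_2w_i}$ for $3\le i\le k$, each of which drops by $1$, so $A_\alpha(G')<A_\alpha(G)$ entrywise. The paper then simply invokes the comparison lemma for nonnegative matrices (Lemma \ref{Lem22}) with the irreducible matrix $A_\alpha(G)$, whereas you unpack that lemma into an explicit Rayleigh-quotient computation with the Perron vector of $G'$ (verifying connectedness of $G'$ along the way); this is a self-contained but essentially equivalent route, and your bookkeeping of the diagonal and off-diagonal changes is accurate.
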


\begin{proof}
Note that $a_{wz}(G')-a_{wz}(G)\leq 0$ for $\{w,z\}\subseteq V(G)$, and $a_{w_2w_k}(G')- a_{w_2w_k}(G)=-1$.  %$a_{w_iw_i}(G')- a_{w_iw_i}(G)=-\alpha$ for $i=3,\dots, k$, $a_{w_2w_2}(G')- a_{w_2w_2}(G)=-(k-2)\alpha$,
Then   $A(G')<A(G)$, and thus  $A_{\alpha}(G')<A_{\alpha}(G)$. By Lemma \ref{Lem22}, $\rho_{\alpha}(G')< \rho_{\alpha}(G)$.
\end{proof}

\section{Extremal $\alpha$-spectral radius of hypertrees}

%The $0$-spectral radius of a tree has been studied thoroughly,  see, e.g. \cite{Ho,ST,SZ}.

In this section, we determine the unique hypertrees with maximum and/or minimum  $\alpha$-spectral radius among some classes of hypertrees. %among hypertrees on $n$ vertices with $m$ edges, where $1\leq m \leq n-1$. We also determine the unique hypertrees with the first three largest $\alpha$-spectral radii as well as  the first two smallest $\alpha$-spectral radii among hypertrees on $n$ vertices, and  the unique hypertrees with minimum $\alpha$-spectral radius  among the hypertrees that are not $2$-uniform.

\subsection{The $\alpha$-spectral radius of uniform hypertrees}

Let $D_{n,k,c}$ be the $k$-uniform hypertree obtained from $S_{k,k}$ by attaching $c$ pendant edges at one vertex and $\frac{n-1}{k-1}-1-c$ pendant edges at another vertex, where $\frac{n-1}{k-1}\geq 3 $ and $1\leq c\leq \left\lfloor\frac{n-k}{2(k-1)}\right\rfloor$.

\begin{theorem}\label{NSmax}
Let $T$ be a $k$-uniform hypertree of order $n$, where $2\le k\le n$.
Then $\rho_{\alpha}(T)\leq \frac{n\alpha+k-2+\sqrt{(n\alpha+k-2)^2-4(n-1)(k\alpha-1)}}{2}$ with equality if and only if $T\cong S_{n,k}$. Moreover, if $T\ncong S_{n,k}$ with $\frac{n-1}{k-1}\ge4$, then
 $\rho_{\alpha}(T)\le \rho (D_{n,k,1})$ with equality if and only if $T\cong D_{n,k,1}$.
%
%Let $T$ be a $k$-uniform hypertree of order $n$ and $T\ncong S_{n,k}$ with maximum $\alpha$-spectral radius, where $2\le k\le n$ and $\frac{n-1}{k-1}\ge4$. Then $T\cong D_{n,k,1}$.
\end{theorem}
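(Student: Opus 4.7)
The plan is to prove the two bounds separately, both via the grafting operation of Theorem \ref{Lem23}. For the bound involving $S_{n,k}$, I would show that $S_{n,k}$ uniquely maximizes $\rho_\alpha$ among $k$-uniform hypertrees on $n$ vertices and then compute $\rho_\alpha(S_{n,k})$ directly. If $T \not\cong S_{n,k}$, then $T$ has at least two internal vertices; pick a pair $u, v$ sharing a common edge with $x_u \geq x_v$ (where $x = x(T)$). Linearity of hypertrees guarantees that every edge at $v$ other than the shared one avoids $u$ and does not coincide with existing edges after being shifted, so Theorem \ref{Lem23} strictly increases $\rho_\alpha$; iterating reaches $S_{n,k}$. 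To evaluate $\rho_\alpha(S_{n,k})$, Lemma \ref{Lem21} collapses the Perron vector to two entries, $x$ at the center and $y$ at every leaf, and the eigenequations
\begin{align*}
\rho x &= \alpha(n-1)x + (1-\alpha)(n-1)y,\\
\rho y &= \alpha(k-1)y + (1-\alpha)(x + (k-2)y)
\end{align*}
eliminate $y$ to give $\rho^2 - (n\alpha + k - 2)\rho + (n-1)(k\alpha - 1) = 0$, whose larger root is the claimed expression.

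For the \emph{moreover} clause, I would show that the maximizer $T$ of $\rho_\alpha$ among non-hyperstar hypertrees with $m = (n-1)/(k-1) \geq 4$ has exactly one \emph{non-pendant} edge (an edge containing $\geq 2$ internal vertices), and that edge contains exactly two internal vertices; this forces $T \cong D_{n,k,c}$ for some $c \in \{1,\ldots,\lfloor(m-1)/2\rfloor\}$. If $T$ had two non-pendant edges, among all such pairs choose one $e_0, e_1$ minimizing the edge-tree distance; this distance must equal $1$, so $e_0 \cap e_1 = \{v\}$ for some $v$ (else any intermediate edge would itself be non-pendant, contradicting minimality). Let $u \in e_0 \setminus \{v\}$ and $w \in e_1 \setminus \{v\}$ be internal with (WLOG) $x_u \geq x_w$. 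Moving any edge $e' \ni w$ with $e' \ne e_1$ from $w$ to $u$ is valid by linearity, and since $v$ remains internal (still in $e_0$ and $e_1$), $T$ stays non-hyperstar while $\rho_\alpha$ strictly increases by Theorem \ref{Lem23} --- contradiction. An analogous argument, moving an edge at the smallest-Perron internal vertex, rules out a unique non-pendant edge with three or more internal vertices.

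To rank the $D_{n,k,c}$'s, I would use Lemma \ref{Lem21} to collapse the Perron vector into five symmetry classes and combine the eigenequations at the internal vertices $w_0$ (degree $c+1$) and $w_1$ (degree $m-c$). Substituting the pendant and central-edge relations
\begin{equation*}
z_i = \frac{(1-\alpha)x_{w_i}}{\rho_\alpha - k + 2 - \alpha}, \qquad y = \frac{(1-\alpha)(x_{w_0} + x_{w_1})}{\rho_\alpha - k + 3 - 2\alpha},
\end{equation*}
and subtracting the two eigenequations yields $(C_0 + D) x_{w_0} = (C_1 + D) x_{w_1}$ with $C_0 - C_1 = (m - 2c - 1)(\alpha(k-1) + \beta) \geq 0$ for some $\beta > 0$; hence $x_{w_0} \leq x_{w_1}$ whenever $c \leq (m-1)/2$. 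Moving a pendant edge from $w_0$ to $w_1$ then produces $D_{n,k,c-1}$ with strictly larger $\rho_\alpha$ by Theorem \ref{Lem23}, and iteration terminates at $D_{n,k,1}$. The main obstacles are the careful bookkeeping at each grafting step --- ensuring the result stays a hypertree, stays non-hyperstar, and contains no duplicate edges, all of which are ultimately handled by linearity --- together with the algebraic derivation of the Perron-entry inequality $x_{w_0} \leq x_{w_1}$ needed at the final ranking step.
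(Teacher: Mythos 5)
Your proposal is correct and follows the paper's strategy in all essentials: reduce to $S_{n,k}$ (and then to $D_{n,k,c}$) by repeatedly applying the edge-moving operation of Theorem~\ref{Lem23}, and evaluate $\rho_{\alpha}(S_{n,k})$ from the two-class eigenequations, which give exactly the quadratic $t^2-(n\alpha+k-2)t+(n-1)(k\alpha-1)$. The one place where you genuinely diverge is the final ranking of the $D_{n,k,c}$: you propose to prove the Perron-entry inequality $x_{w_0}\le x_{w_1}$ by explicitly eliminating the pendant and central-edge entries and comparing coefficients, whereas the paper sidesteps this computation entirely --- for a maximizer $D_{n,k,c}$ with $c\ge 2$, one simply moves a pendant edge from whichever of the two internal vertices has the smaller Perron entry to the other; in either direction the result is again some $D_{n,k,c'}$ that is still not a hyperstar, so Theorem~\ref{Lem23} contradicts maximality unless $c=1$. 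Your computation, if carried through, buys the full ordering $\rho_{\alpha}(D_{n,k,1})>\rho_{\alpha}(D_{n,k,2})>\cdots$, which is more than the theorem needs; the paper's direction-agnostic argument is shorter and avoids the sign bookkeeping. Your structural step (a closest pair of non-pendant edges must share a vertex) is a harmless reformulation of the paper's diametrical-path argument; just make sure to record the small observation that a non-hyperstar hypertree always contains an edge with two vertices of degree at least two, which is what makes your ``pick a pair $u,v$ sharing a common edge'' and the termination of the iteration legitimate.
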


\begin{proof}
Let $T$ be a $k$-uniform hypertree of order $n$ with maximum $\alpha$-spectral radius. Let $d$ be the diameter of $T$ and $u_0 e_1 u_1\ldots u_{d-1}e_du_d$ a diametrical path   in $T$. Suppose that $d\geq 3$. Let $x=x(T)$. Assume that  $x_{u_1}\geq x_{u_{d-1}}$. Let $T'$ be the $k$-uniform hypertree obtained from $T$ by moving all edges in $E_T(u_{d-1})\setminus\{e_{d-1}\}$ from $u_{d-1}$ to $u_1$. By Theorem \ref{Lem23}, $\rho_{\alpha}(T')>\rho_{\alpha}(T)$, a contradiction.
Thus $d=2$ and $T\cong S_{n,k}$.

Let $w$ be the center of $S_{n,k}$, and $c$ the entry of $x(S_{n,k})$ at $w$. By Lemma \ref{Lem21}, for $u,v\in V(S_{n,k})\setminus \{w\}$, $x_u=x_v$, which we denote by $p$.
From the eigenequation of $S_{n,k}$ at $w$ and a vertex of degree $1$ in $S_{n,k}$,  we have
\begin{align*}
\left(\rho_{\alpha}(S_{n,k})-(n-1)\alpha\right)c-(1-\alpha)(n-1)p&=0,\\
-(1-\alpha)c+\left(\rho_{\alpha}(S_{n,k})-\alpha-k+2\right)p&=0.
\end{align*}
%These two linear equations in unknowns $c$ and $p$ forms  a homogeneous system with  a non-trivial solution. Thus
As $c,p\ne 0$, we have
\[
\det \begin{pmatrix}
\rho_{\alpha}(S_{n,k})-(n-1)\alpha  &-(1-\alpha)(n-1)\\
-(1-\alpha) & \rho_{\alpha}(S_{n,k})-\alpha-k+2
\end{pmatrix} =0,
\]
i.e., $f(\rho_{\alpha}(S_{n,k}))=0$, where
$f(t)=t^2-(n\alpha+k-2)t+(n-1)(k\alpha-1)$.
Then $\rho_{\alpha}(S_{n,k})$ is the largest root of $f(t)=0$, i.e.,  $\rho_{\alpha}(S_{n,k})=\frac{n\alpha+k-2+\sqrt{(n\alpha+k-2)^2-4(n-1)(k\alpha-1)}}{2}$.
This proves the first part.

Next, we prove the second part.  Suppose that $T\ncong S_{n,k}$ with $\frac{n-1}{k-1}\ge 4$.
Let $T$ be a hypertree nonisomorphic to $S_{n,k}$ with maximum $\alpha$-spectral radius among $k$-uniform hypertrees of order $n$.
Let $d$ be the diameter of $T$.  As $T\ncong S_{n,k}$, we have $d\ge 3$.
%Suppose that $d\ge4$. Let $u_0 e_1 u_1\ldots u_{d-1} e_d u_d$ be a diametrical path in $T$. Let $x=x(T)$. We may assume that $x_{u_1}\ge x_{u_{d-1}}$. Let $T'$ be the $k$-uniform hypertree obtained from $T$ by moving all edges of $E_T(u_{d-1})\setminus\{e_{d-1}\}$ from $u_{d-1}$ to $u_{1}$. Obviously,  $T\ncong S_{n,k}$. By Theorem \ref{Lem23}, $\rho_{\alpha}(T')>\rho_{\alpha}(T)$, a contradiction. Thus
By similar argument as above, we have
$d =3$. Obviously, $T$ has a unique edge, say $e=\{w_1,\ldots,w_k\}$, which is not a pendant edge, and  $T$ is  obtainable   by
attaching some pendant edges at vertices in $e$.
If there are three vertices $w_i, w_j, w_{\ell}$ with degree at least $2$ in $T$, then we may move all pendant edges at $w_j$ to $w_i$ or vice versa to obtain a $k$-uniform hypertree $T''$ of order $n$ with diameter $3$, and by Theorem \ref{Lem23}, $\rho_{\alpha}(T'')>\rho_{\alpha}(T)$, a contradiction. Thus, among vertices $w_1, \dots, w_k$, there are exactly two vertices, say $w_1$ and $w_2$, with degree at least $2$ in $T$.
It follows that $T\cong D_{n,k,c}$ for some  $c$ with $1\leq c\leq \left\lfloor\frac{n-k}{2(k-1)}\right\rfloor$.
Now by Theorem \ref{Lem23}, we have
 $=1$, and thus $T\cong D_{n,k,1}$.
%If $a\geq 2$, then we may  move all pendant edges at $w_1$ except one from $w_1$ to $w_2$ or vice versa to obtain $D_{n,k,1}$,
%and by Theorem \ref{Lem23}, $\rho_{\alpha}(D_{n,k,1})>\rho_{\alpha}(T)$, a contradiction.
%Thus $a=1$, and $T\cong D_{n,k,1}$.
\end{proof}

Let $G$ be a $k$-uniform hypergraph with $e_{1},e_{2}\in E(G)$ and $u,v\in V(G)$ such that $u\in e_{1}\setminus e_{2}$ and $v\in e_{2}\setminus e_{1}$. Let $e'_1=(e_{1}\setminus\{u\})\cup \{v\}$ and $e'_2=(e_{2}\setminus\{v\})\cup \{u\}$. Suppose that $e'_1 ,e'_2\notin E(G)$. Let $G'$ be the hypergraph with $V(G')=V(G)$ and $E(G')=(E(G)\setminus \{e_{1}, e_{2}\})\cup \{e'_{1}, e'_{2}\}$. Then we say that $G'$ is obtained from $G$ by exchanging vertices $u$ in $e_1$ and $v$ in $e_{2}$.

For $k\geq 3$, $\frac{n-1}{k-1}\geq 4$ and a hyperstar $S_{k,k}$ with edge $e=\{w_1,\ldots,w_k\}$,
let $H_{n,k}$ be the $k$-uniform hypertree obtained from $e$ by attaching $\frac{n-1}{k-1}-3$ pendant edges at $w_1$ and attaching
a pendant edge at $w_2$ and $w_3$, respectively. Let $H_{4,2}=P_{4,2}$.

%Let $T_{n,k,s}$ be the $k$-uniform hypergraph obtained from $P_{4k-3,k}=u_0 e_1 u_1 e_2 u_2 e_3 u_3 e_4 u_4$ by attaching $s$ pendant edges at $u_1$ and attaching $\frac{n-1}{k-1}-4 -s$ pendant edges at $u_3$, where $2\le k\le n$, $\frac{n-1}{k-1}\ge 5$ and $0\le s \le \lfloor \frac{n-1}{2(k-1)}-2\rfloor$.

\begin{theorem} \label{FG22} Let $T$ be a $k$-uniform hypertree of order $n$ and $T\ncong S_{n,k}, D_{n,k,1}$ with maximum $\alpha$-spectral radius, where $2\le k\le n$. Then
\[
T\cong \begin{cases}
P_{5,2} & \mbox{if $(n,k)=(5,2)$},\\
H_{n,k} & \mbox{if $\frac{n-1}{k-1}=4$ with $k\ge 3$},\\
D_{n,k,2} & \mbox{if  $\frac{n-1}{k-1}\ge 5$}.
\end{cases}
\]
\end{theorem}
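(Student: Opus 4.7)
Let $T$ be a $k$-uniform hypertree of order $n$ with $T\ncong S_{n,k}, D_{n,k,1}$ attaining the maximum $\alpha$-spectral radius, and set $d=\mathrm{diam}(T)$ and $m=(n-1)/(k-1)$. The plan is to rule out $d\ge 4$ (with the small exception $(n,k)=(5,2)$) and then classify the $d=3$ candidates via repeated use of the edge-grafting lemma (Theorem \ref{Lem23}).

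\textbf{Reducing the diameter.} If $d=2$ then $T\cong S_{n,k}$, contradiction. Assume $d\ge 4$ with diametrical path $u_0 e_1 u_1\cdots e_d u_d$, set $x=x(T)$, and WLOG $x_{u_1}\ge x_{u_{d-1}}$. By Theorem \ref{Lem23}, moving $E_T(u_{d-1})\setminus\{e_{d-1}\}$ from $u_{d-1}$ to $u_1$ yields $T'$ with $\rho_\alpha(T')>\rho_\alpha(T)$. For $d\ge 5$, the subpath $u_0 e_1\cdots e_{d-1}u_{d-1}$ of length $d-1\ge 4$ survives in $T'$, so $\mathrm{diam}(T')\ge 4$ and hence $T'\notin\{S_{n,k},D_{n,k,1}\}$, contradicting maximality. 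Thus $d\le 4$. For $d=4$ and $m\ge 5$, every extra edge of $T$ is a pendant edge at $u_1$, $u_2$, or $u_3$ (otherwise the diameter would exceed $4$); a case analysis shows that an appropriate choice of grafting target (moving $u_3$'s edges to $u_2$ when $x_{u_2}\ge x_{u_3}$, or moving $u_1$'s edges to $u_3$ when the extras cluster near $u_3$, etc.) produces $T'\notin\{S_{n,k},D_{n,k,1}\}$ in all configurations, except for a handful of ``asymmetric broom'' subcases in which every one-step grafting hits $D_{n,k,1}$; these are settled by the direct comparison described in the last paragraph. In the remaining case $d=4$ and $m=4$ we have $T\cong P_{n,k}$: for $k=2$ then $n=5$ and $T\cong P_{5,2}$ is the only non-$S_5, D_{5,2,1}$ tree on $5$ vertices, while for $k\ge 3$ the inequality $\rho_\alpha(P_{n,k})<\rho_\alpha(H_{n,k})$ is invoked.

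\textbf{Classifying the $d=3$ candidates.} When $d=3$, $T$ has a unique non-pendant edge $e=\{w_1,\dots,w_k\}$ with $c_i$ pendant edges at $w_i$, $c_1\ge\cdots\ge c_k\ge 0$, $\sum c_i=m-1$; let $s$ be the number of positive $c_i$'s. If $s=2$ then $T\cong D_{n,k,c_2}$ with $c_2\ge 2$, and iterating Theorem \ref{Lem23} shows $c\mapsto\rho_\alpha(D_{n,k,c})$ is strictly decreasing, so the optimum here is $D_{n,k,2}$ (which exists iff $m\ge 5$). If $s\ge 3$ and $c_2\ge 2$, successively moving $w_3,\dots,w_s$'s pendants onto $w_1$ (all valid since $x_{w_1}\ge x_{w_j}$ for $j\ge 3$) produces $D_{n,k,c_2}$ with $c_2\ge 2$, whence $\rho_\alpha(T)<\rho_\alpha(D_{n,k,2})$. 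If $s\ge 3$ and $c_2=1$, then $c_2=\cdots=c_s=1$ and $c_1=m-s$; iterating grafting reduces to $s=3$, i.e.\ $T\cong H_{n,k}$, and for $m\ge 5$ a further grafting moves $w_3$'s remaining pendant onto $w_2$ (legitimate since $x_{w_2}=x_{w_3}$ by Lemma \ref{Lem21} and the automorphism exchanging $w_2,w_3$) to produce $D_{n,k,2}$, giving $\rho_\alpha(H_{n,k})<\rho_\alpha(D_{n,k,2})$. For $m=4$, $D_{n,k,2}$ coincides with the excluded $D_{n,k,1}$, so $H_{n,k}$ survives as the maximum.

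\textbf{Main obstacle.} The hardest step is the direct comparison $\rho_\alpha(T)<\rho_\alpha(D_{n,k,2})$, respectively $\rho_\alpha(P_{n,k})<\rho_\alpha(H_{n,k})$ for $m=4$, $k\ge 3$, in the leftover $d=4$ subcases where every one-step grafting happens to terminate at the excluded $D_{n,k,1}$. I would attack these by passing to equitable partitions into automorphism orbits, reducing each $\rho_\alpha$ to the largest eigenvalue of a small quotient matrix (of size $3\times 3$ for $H_{n,k}$ and $4\times 4$ for $P_{n,k}$, or the analogous quotients for the broom trees versus $D_{n,k,2}$), and then either comparing the resulting characteristic polynomials on $[k-2,\infty)$ or applying Lemma \ref{comp} to a positive test vector $y$ built by lifting $x(H_{n,k})$ (respectively $x(D_{n,k,2})$) to the larger vertex set, chosen so that $A_\alpha(P_{n,k})y<\rho_\alpha(H_{n,k})y$ (respectively $A_\alpha(T)y<\rho_\alpha(D_{n,k,2})y$) componentwise.
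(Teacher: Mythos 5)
Your overall skeleton (bound the diameter, then graft repeatedly via Theorem \ref{Lem23} until an admissible tree of larger $\alpha$-spectral radius appears) is also the paper's skeleton, but there is a genuine gap at the step you yourself flag as the main obstacle: the inequality $\rho_{\alpha}(P_{4k-3,k})<\rho_{\alpha}(H_{4k-3,k})$ for $k\ge 3$ is never proved, only assigned to a future computation (quotient matrices of an equitable partition, or a lifted test vector). Neither route is carried out, and the characteristic-polynomial comparison in particular is a two-parameter computation in $k$ and $\alpha$ with no guarantee of a clean sign analysis on $[k-2,\infty)$. The paper avoids all of this with a short structural argument: let $x=x(P_{4k-3,k})$ with diametrical path $u_0e_1u_1e_2u_2e_3u_3e_4u_4$ and pick $w\in e_2\setminus\{u_1,u_2\}$. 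If $x_w\ge x_{u_3}$, moving $e_4$ from $u_3$ to $w$ lands on $H_{4k-3,k}$ and Theorem \ref{Lem23} applies; if $x_w<x_{u_3}$, exchanging the vertices $w$ (in $e_2$) and $u_3$ (in $e_3$) also lands on $H_{4k-3,k}$, and a direct Rayleigh-quotient computation shows the gain is exactly $2(1-\alpha)(x_{u_3}-x_w)^2>0$. You need either this trick or a completed computation; as written, the case $\frac{n-1}{k-1}=4$ with $k\ge3$ is open.

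There are also two flaws in your $d=4$, $m\ge5$ analysis. First, the enumeration is wrong: a pendant edge attached to a vertex of $e_2$ or $e_3$ other than $u_1,u_2,u_3$ does not force the diameter above $4$ (the new leaf is at distance $3$ from $u_0$ and $4$ from $u_4$), so such configurations must also be grafted away before you may assume $T\cong T_{n,k,s}$. Second, the ``asymmetric broom'' subcases you reserve for direct eigenvalue comparison do not occur when $m\ge5$: after reducing to $T_{n,k,s}$ and then to $s=1$, moving the single pendant edge at $u_1$ to $u_2$ (or $e_3$ from $u_2$ to $u_1$, whichever direction Theorem \ref{Lem23} permits) produces a tree with two pendant edges at one vertex of the surviving central edge and $m-3\ge2$ at another, i.e.\ exactly $D_{n,k,2}$, which is admissible and yields the desired contradiction; only at $m=4$ does this graft collapse onto the excluded $D_{n,k,1}$, and that is precisely the $P_{n,k}$-versus-$H_{n,k}$ comparison above. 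So the deferred direct comparison is needed in exactly one place, and not the one you identify. Finally, assertions such as ``$x_{w_1}\ge x_{w_j}$ for $j\ge3$'' are not automatic from the structure; they should be replaced throughout by the usual ``move in whichever direction the Perron entries allow'' argument, which suffices everywhere you invoke them.
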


\begin{proof} Suppose first that $\frac{n-1}{k-1}=4$.
As $T\ncong S_{n,k}, D_{n,k,1}$, we have $T\cong P_{5,2}$ if $k=2$, and  $T\cong P_{n,k}, H_{n,k}$ if $k\ge 3$. Thus we need only to show that $\rho_{\alpha}(P_{4k-3,k}) <\rho_{\alpha}(H_{4k-3,k})$ for $k\ge 3$.

Let $P_{4k-3,k}=u_0 e_1 u_1 e_2 u_2 e_3 u_3 e_4 u_4$ and $x=x(P_{4k-3,k})$. By Lemma \ref{Lem21}, $x_{u_1}=x_{u_3}$, $x_u=x_v$ for $u,v\in (e_1\cup e_4)\setminus \{u_1,u_3\}$, and $x_w=x_z$ for $w,z\in (e_2\cup e_3)\setminus \{u_1,u_2,u_3\}$. Let $w\in e_2\setminus\{u_1,u_2\}$.
Suppose that  $x_w\ge x_{u_3}$. Then we form a $k$-uniform hypertree $T'$ from $P_{4k-3,k}$ by moving $e_4$ from $u_{3}$ to $w$.  Obviously, $T'\cong H_{4k-3,k}$. By Theorem \ref{Lem23}, $\rho_{\alpha}(P_{4k-3,k})<\rho_{\alpha}(H_{4k-3,k})$, as desired. Suppose that  $x_w< x_{u_3}$. Then we form a $k$-uniform hypertree $T''$ from $P_{4k-3,k}$ by exchange vertices $w$ in $e_2$ and $u_3$ in $e_3$. Obviously,  $T''\cong H_{4k-3,k}$. Note that
\begin{align*}
&\quad a_{uv}(T'')-a_{uv}(P_{4k-3,k})\\
&=\begin{cases}
1 & \mbox{if } u=u_{3}, v\in e_{2}\setminus \{u_{2},w\}, \mbox{ or }u\in e_{2}\setminus \{u_{2},w\}, v=u_{3},\\
1 & \mbox{if } u=w, v\in e_{3}\setminus \{u_{2},u_{3}\}, \mbox{ or }u\in e_{3}\setminus \{u_{2},u_{3}\}, v=w,\\
-1 & \mbox{if } u=u_{3}, v\in e_{3}\setminus \{u_{2},u_{3}\}, \mbox{ or }u\in e_{3}\setminus \{u_{2},u_{3}\}, v=u_{3},\\
-1 & \mbox{if } u=w, v\in e_{2}\setminus \{u_{2},w\}, \mbox{ or }u\in e_{2}\setminus \{u_{2},w\}, v=w,\\
0 & \mbox{otherwise}.
\end{cases}
\end{align*}
Thus
\begin{align*}
&\quad \rho_{\alpha}(T'')-\rho_{\alpha}(P_{4k-3,k})\\
&\ge x^\top(A_{\alpha}(T'')-A_{\alpha}(P_{4k-3,k}))x\\
&= \sum_{\{u,v\}\subseteq V(T)} (a_{uv}(T'')-a_{uv}(P_{4k-3,k}))(\alpha(x_u^2+x_v^2)+2(1-\alpha)x_u x_v)\\
&= \sum_{v\in e_{2}\setminus \{u_{2},w\}}\left(\alpha(x_{u_3}^2+x_v^2)+2(1-\alpha)x_{u_3} x_v\right)\\
&\quad +\sum_{v\in e_{3}\setminus \{u_{2},u_{3}\}}\left(\alpha(x_{w}^2+x_v^2)+2(1-\alpha)x_{w} x_v\right)\\
&\quad -\sum_{v\in e_{3}\setminus \{u_{2},u_{3}\}}\left(\alpha(x_{u_{3}}^2+x_v^2)+2(1-\alpha)x_{u_3} x_v\right)\\
&\quad -\sum_{v\in e_{2}\setminus \{u_{2},w\}}\left(\alpha(x_{w}^2+x_v^2)+2(1-\alpha)x_{w} x_v\right)\\
&=\sum_{v\in e_{2}\setminus \{u_{2},w\}}\left(\alpha(x_{u_3}^2-x_w^2)+2(1-\alpha)x_v(x_{u_3}-x_w)\right)\\
&\quad +\sum_{v\in e_{3}\setminus \{u_{2},u_{3}\}}\left(\alpha(x_w^2-x_{u_3}^2)+2(1-\alpha)x_v(x_w-x_{u_3})\right)\\
&=2(1-\alpha)(x_{u_3}-x_w)\left(\sum_{v\in e_{2}\setminus \{u_{2},w\}}x_v-\sum_{v\in e_{3}\setminus \{u_{2},u_{3}\}}x_v\right)\\
&=2(1-\alpha)(x_{u_3}-x_w)^2\\
&> 0,
\end{align*}
implying  $\rho_{\alpha}(P_{4k-3,k})<\rho_{\alpha}(H_{4k-3,k})$, as desired. Therefore, $T\cong H_{n,k}$.

Suppose in the following that  $\frac{n-1}{k-1}\ge 5$ and $x=x(T)$. Let $d$ be the diameter of $T$. %Since $T\ncong S_{n,k}$, we have $d\ge 3$.
By similar argument as in the proof of Theorem \ref{NSmax}, we have $d=3$, or $d=4$ and $T\cong T_{n,k,s}$ with $1\le s \le \lfloor \frac{n-1}{2(k-1)}-1\rfloor$, where
 $T_{n,k,s}$ is the $k$-uniform hypertree  obtained from $S_{2k-1,k}$  by attaching $s$ pendant edges at a vertex of degree one in one edge and $\frac{n-1}{k-1}-2 -s$ pendant edges at a vertex of degree one in the other edge.
Suppose that  the latter case occurs.
%Suppose that $d=4$ and $T\cong T_{n,k,s}$ with $1\le s \le \lfloor \frac{n-1}{2(k-1)}-1\rfloor$.
Let $P=u_0 e_1 u_1 e_2 u_2 e_3 u_3 e_4 u_4$ be a diametrical path in $T$. If  $s\ge 2$, then,  by Theorem \ref{Lem23}, we may move all edges of $E_T(u_3)\setminus\{e_3, e_4\}$ from $u_3$ to $u_1$ or move all edges of $E_T(u_1)\setminus\{e_1, e_2\}$ from $u_1$ to $u_3$ to form a $k$-uniform hypertree  with larger $\alpha$-spectral radius, which is impossible.
% to form a $k$-uniform hypertree $T''$, and we note that both $T'$ and $T''$ have diameter $4$,  and
Thus  $s=1$.
Let $T'$ be the $k$-uniform hypertree obtained from $T$ by moving $e_1$ from $u_1$ to $u_2$ or by moving $e_3$ from $u_2$ to $u_1$. Obviously, $T'\ncong S_{n,k}, D_{n,k,1}$. By Theorem \ref{Lem23}, $\rho_{\alpha}(T')>\rho_{\alpha}(T)$, a contradiction.
It follows that $d=3$, and $T$ is a $k$-uniform hypertree obtainable from a hyperstar $S_{k,k}$ with a single edge, say $e =\{w_1,\dots, w_k\}$, by attaching $a_i$ pendant edges at $w_i$, where $\sum_{i=1}^ka_i+1=\frac{n-1}{k-1}$, $a_i\ge0$ for $1\le i\le k$ and $a_p, a_q\ge1$ for some $p$ and $q$ with $1\le p <q\le k$. By relabeling the vertices in $e$, we may assume that $a_2\ge a_1\ge a_3\ge\dots\ge a_k$. Obviously, $a_1\ge1$.
Suppose that  $a_3\ge 1$. Then we may move all the  pendant edges at $w_1$ from $w_1$ to $w_3$ or move all the  pendant edges at $w_3$ from $w_3$ to $w_1$ to form a $k$-uniform hypertree $T''$.  Obviously, $T''\ncong S_{n,k}, D_{n,k,1}$. By Theorem \ref{Lem23}, $\rho_{\alpha}(T'')>\rho_{\alpha}(T)$, a contradiction. Thus $a_3=0$. Thus, for $k\ge 2$, $T\cong D_{n,k,a_1}$ with $a_1\ge 2$. By Theorem \ref{Lem23}, $T\cong D_{n,k,2}$.
%
%If $k=2$, then $T\cong D_{n,k,a_1}$ with $a_1\ge2$. Suppose that $k\ge3$. Assume that $a_1=1$. Since $T\ncong D_{n,k,1}$, we have $a_3=1$. Let $T''$ be the $k$-uniform hypertree obtained from $T$ by moving the pendant edge at $w_3$ from $w_3$ to $w_1$. Since $\frac{n-1}{k-1}\ge 5$, $T\ncong S_{n,k}, D_{n,k,1}$. By Lemma \ref{Lem21}, we have $x_{w_1}=x_{w_3}$. By Theorem \ref{Lem23}, $\rho_{\alpha}(T'')>\rho_{\alpha}(T)$, a contradiction. Thus $b_1\ge2$.
%Suppose that $a_3\ge1$. We may assume that $x_{w_1}\ge x_{w_3}$. Let $T^*$ be the $k$-uniform hypertree obtained from $T$ by moving the $a_3$ pendant edges at $w_3$ from $w_3$ to $w_1$. Obviously, $T^*\ncong S_{n,k}, D_{n,k,1}$. By Theorem \ref{Lem23}, $\rho_{\alpha}(T^*)>\rho_{\alpha}(T)$, a contradiction. Thus $a_3=0$. Therefore, for $k\ge 2$, $T\cong D_{n,k,b_1}$ with $b_1\ge 2$. By Theorem \ref{Lem23}, $T\cong D_{n,k,2}$.
\end{proof}

%In part of the proof of the above two  theorems, we use techniques from

\subsection{The $\alpha$-spectral radius of hypertrees that are not necessarily uniform}

%In the rest of this section,  we consider  hypertrees that are not necessarily uniform.

For $1\leq m\leq n-1$, let $S^m_n$ be the hyperstar on $n$ vertices with $m-1$ edges of size $2$ and one edge of size $n-m+1$. Particularly, $S_n^1$ consists of a single edge with size $n$.

\begin{theorem} \label{Thm41}
Let $T$ be a hypertree on $n$ vertices with $m$ edges, where $1\leq m\leq n-1$. Then $\rho_{\alpha}(T)\leq \rho_{\alpha}(S^m_n)$ with equality if and only if $T\cong S^m_n$.
\end{theorem}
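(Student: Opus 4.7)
My plan is to prove this in two stages: first reduce to the case where $T$ is a hyperstar, then optimize the edge sizes among hyperstars. Both stages rely on the grafting lemmas established in Section~3.

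\emph{Stage 1 (reduction to a hyperstar).} Let $T$ maximize $\rho_\alpha$ among hypertrees with $n$ vertices and $m$ edges, set $x = x(T)$, and choose $u^* \in V(T)$ with $x_{u^*} = \max_w x_w$. I would show that every edge of $T$ contains $u^*$, so $T$ is a hyperstar with center $u^*$. Suppose not. Then, by considering a shortest path in $T$ from $u^*$ to some edge that misses $u^*$, one finds a vertex $v \ne u^*$ sharing an edge $e_0$ with $u^*$ and satisfying $d_T(v) \ge 2$. Form $T'$ by moving each edge $e \in E_T(v) \setminus \{e_0\}$ from $v$ to $u^*$, so $e$ is replaced by $e' := (e\setminus\{v\}) \cup \{u^*\}$. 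Linearity and acyclicity of $T$ make the move legitimate: $u^* \notin e$ (else $e$ and $e_0$ would share $\{u^*, v\}$) and $e' \notin E(T)$ (else $e,\,e_0,\,e'$ would form a triangle through $u^*$, $v$ and another vertex of $e$). By Theorem~\ref{Lem23}, since $x_{u^*} \ge x_v$, we get $\rho_\alpha(T') > \rho_\alpha(T)$.

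To justify the comparison, one must check that $T'$ is itself a hypertree. It has the same vertex set, the same number of edges, and the same multiset of edge sizes as $T$, so $\sum_{f \in E(T')}(|f|-1) = n-1$. Connectedness of $T'$ is immediate (every vertex formerly reached from $u^*$ through $v$ via $e$ is now reached directly through $e'$). Linearity follows from a short case analysis: the only potentially troublesome intersection is between a new edge $e'$ and an edge $f \in E(T)$ with $u^* \in f$; but if $e\cap f$ were to contain some $w \ne v$, then $e_0, e, f$ would form a cycle in $T$ through $u^*, v, w$; while two new edges $e_i', e_j'$ meet only in $u^*$ because the originals meet only in $v$. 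A connected linear hypergraph with $\sum(|f|-1) = |V|-1$ is a hypertree, so $T'$ is a hypertree, contradicting the maximality of $T$. Hence $T$ is a hyperstar.

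\emph{Stage 2 (optimizing edge sizes among hyperstars).} Now $T$ is a hyperstar with center $u^*$ and edge sizes $a_1, \ldots, a_m$ satisfying $a_i \ge 2$ and $\sum_{i=1}^m (a_i - 1) = n-1$. If $T \not\cong S_n^m$, then $m \ge 2$ and there exist distinct indices $i, j$ with $a_i \ge a_j \ge 3$. Since $m \ge 2$, all edges at $u^*$ are pendant edges, so Theorem~\ref{Thm31} applies to the pair $(e_i, e_j)$ with parameters $k = a_i + 1$ and $r = a_j - 1$: the conditions $k - 2 \ge r$ and $r \ge 2$ translate to $a_i \ge a_j$ and $a_j \ge 3$. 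Letting $T''$ be the hyperstar obtained from $T$ by moving one vertex from $e_j$ into $e_i$ (so its edge sizes are $(a_i + 1, a_j - 1, a_1, \ldots)$), Theorem~\ref{Thm31} applied with $G = T''$ and $G' = T$ gives $\rho_\alpha(T) < \rho_\alpha(T'')$, contradicting extremality. Hence the extremal hyperstar has at most one edge of size $\ge 3$, which forces $T \cong S_n^m$. The case $m = 1$ is trivial.

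The main technical obstacle is the Stage~1 verification that the grafted hypergraph $T'$ remains a hypertree (in particular, linear); this requires simultaneously using the linearity and acyclicity of $T$ in the case analysis for edge intersections. Once this is settled, the grafting theorems~\ref{Lem23} and \ref{Thm31} drive the proof cleanly and simultaneously yield the uniqueness of $S_n^m$.
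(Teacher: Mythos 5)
Your proposal is correct and follows essentially the same route as the paper: reduce to a hyperstar by moving edges onto a single vertex via Theorem~\ref{Lem23}, then rule out two edges of size at least $3$ by moving a vertex from the smaller into the larger via Theorem~\ref{Thm31} (applied, as you do, with the modified hypergraph playing the role of $G$ and the original playing $G'$). The only differences are cosmetic — the paper phrases Stage~1 as "no edge contains two vertices of degree at least $2$" rather than anchoring at the maximum Perron entry, and it leaves the verification that the grafted hypergraph is again a hypertree implicit, which you spell out.
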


\begin{proof}
It is trivial if $m=1$.
Suppose that $m\geq 2$. Let $T$ be a hypertree on $n$ vertices with $m$ edges having maximum $\alpha$-spectral radius.

Suppose that there is an edge $e\in E(T)$, which has two vertices, say $v_1$ and $v_2$,  of degree at least $2$. Since $T$ is a hypertree,
$(E_T(v_1)\setminus \{e\})\cap (E_T(v_2)\setminus \{e\})=\emptyset$. Let $x=x(T)$. We may assume that $x_{v_1}\geq x_{v_2}$. Let $T'$ be the hypergraph obtained from $T$ by moving all edges containing $v_2$ except $e$ from $v_2$ to $v_1$. Obviously, $T'$ is a hypertree on $n$ vertices with $m$ edges.  By Theorem \ref{Lem23}, $\rho_{\alpha}(T')> \rho_{\alpha}(T)$, a contradiction. Thus all the edges in $T$ are pendant edges at a common vertex, i.e., $T$ is a hyperstar.

Suppose that there are two edges, say  $e_1$ and $e_2$, of size at least $3$.  Assume that $|e_1|\geq |e_2|$. Let $T''$ be the hypertree obtained from $T$  by moving a vertex $w\in e_2\setminus e_1$ from $e_2$ to $e_1$. By Theorem \ref{Thm31}, $\rho_{\alpha}(T'')> \rho_{\alpha}(T)$, a contradiction. It follows that there is at most one edge of size at least $3$, and thus $T\cong S^m_n$.
\end{proof}

\begin{lemma} \label{Lem41}
If $1\leq m_2 <m_1 \leq n-1$, then $\rho_{\alpha}(S^{m_1}_n)< \rho_{\alpha}(S^{m_2}_n)$.
\end{lemma}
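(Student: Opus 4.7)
The plan is to observe that $S_n^{m+1}$ is obtained from $S_n^m$ by precisely the ``edge splitting'' operation of Theorem \ref{Thm34}, and then chain the resulting strict inequalities.

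First I would reduce to consecutive indices: by transitivity it suffices to show that $\rho_\alpha(S_n^{m+1}) < \rho_\alpha(S_n^m)$ for every $1 \le m \le n-2$. A telescoping argument over $m = m_2, m_2+1, \ldots, m_1 - 1$ then yields the result for arbitrary $m_2 < m_1$.

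For the consecutive case, let $u$ be the center of $S_n^m$, and write its unique edge of size $\geq 3$ as $e = \{u, z_2, z_3, \ldots, z_k\}$, where $k = n - m + 1 \geq 3$ (using $m \le n-2$). The remaining $m-1$ edges of $S_n^m$ are pendant $2$-edges $\{u, y\}$ where each $y$ is a pendant vertex disjoint from $e \setminus \{u\}$. Relabeling so that $w_1 = u$ and $w_2 = z_2$, the edge $e \setminus \{w_2\}$ contains $u$ and vertices of $e$ other than $z_2$, and $\{w_1, w_2\} = \{u, z_2\}$ contains the non-pendant vertex $z_2$; since the $2$-edges of $S_n^m$ only connect $u$ to pendant vertices, neither $e \setminus \{w_2\}$ nor $\{w_1, w_2\}$ lies in $E(S_n^m)$. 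Thus the hypotheses of Theorem \ref{Thm34} are satisfied, and applying it produces a hypergraph which, by inspection, is precisely $S_n^{m+1}$ (the big edge loses one vertex, and one new pendant $2$-edge at $u$ is added).

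By Theorem \ref{Thm34}, this operation strictly decreases the $\alpha$-spectral radius, giving $\rho_\alpha(S_n^{m+1}) < \rho_\alpha(S_n^m)$, and chaining completes the proof. There is no real obstacle beyond verifying that the two non-edge conditions in Theorem \ref{Thm34} hold, which follows immediately from the description of $S_n^m$.
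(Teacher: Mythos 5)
Your proposal is correct and is essentially identical to the paper's own proof: the paper also obtains $S_n^{m+1}$ from $S_n^m$ by removing a vertex $w$ from the large edge and attaching the pendant edge $\{u,w\}$ at the center $u$, invokes Theorem \ref{Thm34} for the strict inequality, and iterates. Your explicit verification of the two non-edge hypotheses of Theorem \ref{Thm34} is a small bonus the paper leaves implicit.
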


\begin{proof}
Let $T=S^{m_2}_n$ with center $u$ and let  $e\in E(T)$ of size $n-m_2+1$. Let $T'$ be the hypertree obtained from $T$ by removing one vertex, say $w$, in $e\setminus \{u\}$ from $e$ and attaching a pendant edge $\{u,w\}$ to $u$. Then $T' \cong S^{m_2+1}_n$. By Theorem \ref{Thm34}, $\rho_{\alpha}(S^{m_2+1}_n)< \rho_{\alpha}(S^{m_2}_n)$. By applying this process repeatedly, we finally have $\rho_{\alpha}(S^{m_1}_n)< \rho_{\alpha}(S^{m_2}_n)$.
\end{proof}

\begin{corollary} \label{Cor41}
Let $T$ be a hypertree on $n$ vertices with maximum degree $\Delta$, where $1\leq \Delta \leq n-1$. Then $\rho_{\alpha}(T)\leq \rho_{\alpha}(S^\Delta_n)$ with equality if and only if $T\cong S^\Delta_n$.
\end{corollary}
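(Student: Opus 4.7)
The plan is to deduce this corollary directly from Theorem \ref{Thm41} and Lemma \ref{Lem41} by relating the maximum degree $\Delta$ of a hypertree to its number of edges.

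First I would let $m = |E(T)|$ and observe that any vertex $u$ of maximum degree $\Delta$ is contained in $\Delta$ distinct edges, so $m \geq \Delta$. Combined with the standard identity $\sum_{e\in E(T)}(|e|-1) = n-1$ for a hypertree and $|e|\geq 2$ for every edge, this gives $\Delta \leq m \leq n-1$, so both $S_n^m$ and $S_n^\Delta$ are well-defined.

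Next I would apply Theorem \ref{Thm41} to get $\rho_\alpha(T) \leq \rho_\alpha(S_n^m)$, with equality if and only if $T \cong S_n^m$. Then I would invoke Lemma \ref{Lem41}: if $m > \Delta$ then $\rho_\alpha(S_n^m) < \rho_\alpha(S_n^\Delta)$, while if $m = \Delta$ the two $\alpha$-spectral radii coincide. Chaining the two bounds yields $\rho_\alpha(T) \leq \rho_\alpha(S_n^\Delta)$.

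For the equality case, equality throughout forces both $T \cong S_n^m$ (from Theorem \ref{Thm41}) and $m = \Delta$ (from Lemma \ref{Lem41}), whence $T \cong S_n^\Delta$. To close the loop I would verify that $S_n^\Delta$ itself has maximum degree exactly $\Delta$: its center vertex belongs to all $\Delta-1$ pendant edges of size $2$ and to the single edge of size $n-\Delta+1$, giving it degree $\Delta$, while every other vertex has degree $1$. There is no substantial obstacle here; the argument is essentially a combination of the two preceding results, with only the elementary bookkeeping $\Delta \leq m \leq n-1$ requiring explicit mention.
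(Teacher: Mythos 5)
Your argument is correct and is essentially the same as the paper's: both bound $\rho_\alpha(T)\le\rho_\alpha(S_n^{|E(T)|})\le\rho_\alpha(S_n^\Delta)$ via Theorem \ref{Thm41} and Lemma \ref{Lem41}, using the observation $|E(T)|\ge\Delta$, and read off the equality case. The only addition is your explicit verification that $S_n^\Delta$ has maximum degree $\Delta$, which the paper also notes in passing.
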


\begin{proof}
Note that maximum degree of $S^\Delta_n$ is $\Delta$ and  $|E(T)|\geq \Delta$. By Theorem \ref{Thm41} and Lemma \ref{Lem41}, we have $\rho_{\alpha}(T)\leq \rho_{\alpha} \left(S^{|E(T)|}_n\right) \leq \rho_{\alpha}(S^\Delta_n)$ with equalities if and only if $T\cong S^{|E(T)|}_n$ and $|E(T)|=\Delta$, i.e., $T\cong S^\Delta_n$.
\end{proof}

Similarly, we have

\begin{corollary} \label{Cor42}
Let $T$ be a hypertree on $n$ vertices with $p$ pendant edges, where $2\leq p \leq n-1$. Then $\rho_{\alpha}(T)\leq \rho_{\alpha}(S^p_n)$ with equality if and only if $T\cong S^p_n$.
\end{corollary}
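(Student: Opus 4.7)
The plan is to follow the template used for Corollary \ref{Cor41}, exploiting the two facts already at hand: Theorem \ref{Thm41} (the hyperstar $S^m_n$ maximizes $\rho_\alpha$ among hypertrees on $n$ vertices with $m$ edges) and Lemma \ref{Lem41} (among the hyperstars $S^m_n$, smaller $m$ gives larger $\alpha$-spectral radius). The key observation tying these to the pendant-edge count is that $S^p_n$ itself has exactly $p$ pendant edges.

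First I would verify the pendant-edge count of $S^p_n$. Write $u$ for its center. Each of the $p-1$ size-$2$ edges has $u$ of degree $p\ge 2$ as one endpoint and a vertex of degree $1$ as the other, so it is pendant. The single large edge of size $n-p+1$ contains $u$ (of degree $p\ge 2$) together with $n-p$ vertices of degree $1$; since $p\le n-1$ ensures this edge has size at least $2$, it too qualifies as pendant at $u$. Hence $S^p_n$ has precisely $p$ pendant edges, namely all of its edges.

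Second, for an arbitrary hypertree $T$ on $n$ vertices with $p$ pendant edges, since pendant edges are edges, we have $|E(T)|\ge p$. Applying Theorem \ref{Thm41} with $m=|E(T)|$ gives
\[
\rho_\alpha(T)\le \rho_\alpha\left(S^{|E(T)|}_n\right),
\]
with equality if and only if $T\cong S^{|E(T)|}_n$. Then Lemma \ref{Lem41}, applied to the inequality $p\le |E(T)|$, gives
\[
\rho_\alpha\left(S^{|E(T)|}_n\right)\le \rho_\alpha\left(S^p_n\right),
\]
with equality if and only if $|E(T)|=p$. Chaining the two bounds yields $\rho_\alpha(T)\le \rho_\alpha(S^p_n)$.

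For the equality discussion, both inequalities must be equalities, which forces $T\cong S^{|E(T)|}_n$ and $|E(T)|=p$, hence $T\cong S^p_n$. I do not anticipate any real obstacle here: the argument is essentially the mirror image of Corollary \ref{Cor41}, with the pendant-edge count playing the role formerly played by the maximum degree, and the only thing that must be verified carefully is that $S^p_n$ indeed has $p$ pendant edges so that the comparison chain closes at the claimed extremal graph.
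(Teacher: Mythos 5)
Your proposal is correct and is essentially the paper's own argument: the paper derives Corollary \ref{Cor42} "similarly" to Corollary \ref{Cor41}, i.e.\ by noting that $S^p_n$ has exactly $p$ pendant edges, that $|E(T)|\ge p$, and then chaining Theorem \ref{Thm41} with Lemma \ref{Lem41} together with the stated equality conditions. Your explicit verification that every edge of $S^p_n$ (including the large one) is pendant is exactly the detail the paper leaves implicit.
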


%\begin{proof}
%Note that $S^p_n$ has $p$ pendant edges. Since $|E(T)|\geq p$, then similar to the proof of Corollary \ref{Cor41}, $\rho_{\alpha}(T)\leq \rho_{\alpha}(S^{|E(T)|}_n) \leq \rho_{\alpha}(S^p_n)$ with equalities if and only if $T\cong S^{|E(T)|}_n$ and $|E(T)|=p$, i.e., $T\cong S^p_n$.
%\end{proof}

%\begin{theorem} \label{Thm42}
%Let $T$ be a hypertree on $n$ vertices. Then $\rho_{\alpha}(T)\leq n-1$ with equality if and only if $T\cong S^1_n$.
%\end{theorem}
%
%\begin{proof}
%Since $T$ is a hypertree, then $a_{uv}(T)\leq 1-\alpha$ for any distinct $u,v\in V(T)$ and $a_{uu}\le (n-1)\alpha$ for $u\in V(G)$. Note that $S^1_n$ is the only hypertree with $a_{uv}(T)= 1-\alpha$ for any distinct $u,v\in V(T)$. By Lemma \ref{Lem22}, $\rho_{\alpha}(T)\leq \rho_{\alpha}(S^1_n)$ with equality if and only if $T\cong S^1_n$. The result follows by noting the fact that the largest eigenvalue of $A_{\alpha}(S^1_n)=A_{\alpha}(K_n)$ is $n-1$.
%\end{proof}
%
%\begin{theorem} \label{Thm43}
%Let $T$ be a hypertree on $n\geq 3$ vertices, where $T\ncong S^1_n$. Then $\rho_{\alpha}(T)\leq \rho_{\alpha}(S^2_n)$ with equality if and only if $T\cong S^2_n$.
%\end{theorem}
%
%\begin{proof}
%Since $T\ncong S^1_n$, $|E(T)|\geq 2$. By Theorem \ref{Thm41} and Lemma \ref{Lem41}, $\rho_{\alpha}(T)\leq \rho_{\alpha} \left(S^{|E(T)|}_n\right) \leq \rho_{\alpha}(S^2_n)$ with equalities if and only if $T\cong S^{|E(T)|}_n$ and $|E(T)|=2$, i.e., $T\cong S^2_n$.
%\end{proof}
%

\begin{theorem} \label{Thm43}
Let $T$ be a hypertree on $n$ vertices. Then $\rho_{\alpha}(T)\leq n-1$ with equality if and only if $T\cong S^1_n$.
Moreover, if  $T\ncong S^1_n$, then
$\rho_{\alpha}(T)\leq \rho_{\alpha}(S^2_n)$ with equality if and only if $T\cong S^2_n$.
\end{theorem}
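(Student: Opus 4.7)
The proof is essentially a two-step combination of Theorem \ref{Thm41} (which pins down the unique maximizer among hypertrees with a \emph{fixed} number of edges $m$) and Lemma \ref{Lem41} (which says $\rho_{\alpha}(S_n^m)$ is strictly decreasing in $m$). So the plan is to first compute $\rho_{\alpha}(S_n^1)$ explicitly, then chain the two results to obtain both statements.

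First I would observe that $S_n^1$ consists of a single edge containing all $n$ vertices. Then $A(S_n^1) = J - I$ (where $J$ is the all-ones matrix), the row sums are all $n-1$, so $D(S_n^1) = (n-1)I$ and
\[
A_{\alpha}(S_n^1) = \alpha(n-1)I + (1-\alpha)(J-I).
\]
Since $J$ has eigenvalues $n$ (with eigenvector $\mathbf{1}$) and $0$ (with multiplicity $n-1$), the eigenvalues of $A_{\alpha}(S_n^1)$ are $n-1$ (simple, for $\mathbf{1}$) and $\alpha n - 1$ (with multiplicity $n-1$). Hence $\rho_{\alpha}(S_n^1) = n-1$.

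For the first assertion, let $m = |E(T)|$, so $1 \le m \le n-1$. By Theorem \ref{Thm41}, $\rho_{\alpha}(T) \le \rho_{\alpha}(S_n^m)$ with equality iff $T \cong S_n^m$. By Lemma \ref{Lem41}, $\rho_{\alpha}(S_n^m) \le \rho_{\alpha}(S_n^1) = n-1$ with equality iff $m = 1$. Combining these gives $\rho_{\alpha}(T) \le n-1$ with equality iff $T \cong S_n^m$ and $m = 1$, i.e., iff $T \cong S_n^1$. For the second assertion, if $T \ncong S_n^1$ then necessarily $m \ge 2$, and the same chain with $S_n^1$ replaced by $S_n^2$ in Lemma \ref{Lem41} yields $\rho_{\alpha}(T) \le \rho_{\alpha}(S_n^m) \le \rho_{\alpha}(S_n^2)$ with equality throughout iff $T \cong S_n^m$ and $m = 2$, i.e., iff $T \cong S_n^2$.

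There is no genuine obstacle: the heavy lifting has already been done in Theorem \ref{Thm41} (via the edge-moving grafting of Theorem \ref{Lem23} and the vertex-moving grafting of Theorem \ref{Thm31}) and in Lemma \ref{Lem41} (via Theorem \ref{Thm34}). The only small computation is the explicit diagonalization of $A_{\alpha}(S_n^1)$, which is immediate from the structure $\alpha(n-1)I + (1-\alpha)(J - I)$.
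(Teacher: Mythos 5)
Your proposal is correct and follows essentially the same route as the paper: chain Theorem \ref{Thm41} with Lemma \ref{Lem41} to reduce to $S_n^1$ (respectively $S_n^2$), and note that $\rho_{\alpha}(S_n^1)=n-1$ (the paper gets this by observing $A_{\alpha}(S_n^1)=A_{\alpha}(K_n)$, while you diagonalize $\alpha(n-1)I+(1-\alpha)(J-I)$ directly, which is the same computation). Nothing is missing.
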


\begin{proof}
By Theorem \ref{Thm41} and Lemma \ref{Lem41}, $\rho_{\alpha}(T)\leq \rho_{\alpha} \left(S^{|E(T)|}_n\right) \leq \rho_{\alpha}(S^1_n)$ with equalities if and only if $T\cong S^{|E(T)|}_n$ and $|E(T)|=1$, i.e., $T\cong S^1_n$. The first part follows by noting that $\rho_{\alpha}(S^1_n)=\rho_{\alpha}(K_n)=n-1$.

If  $T\ncong S^1_n$, then $|E(T)|\geq 2$, and thus by Theorem \ref{Thm41} and Lemma \ref{Lem41}, $\rho_{\alpha}(T)\leq \rho_{\alpha} \left(S^{|E(T)|}_n\right) \leq \rho_{\alpha}(S^2_n)$ with equalities if and only if $T\cong S^{|E(T)|}_n$ and $|E(T)|=2$, i.e., $T\cong S^2_n$. The second part follows.
\end{proof}

We note that the first part follows also from Lemma \ref{Lem22}.

For $n\geq 5$, let $S'^2_n$ be the hyperstar on $n$ vertices with one edge of size $3$ and one edge of size $n-2$.

\begin{theorem} \label{Thm44}
Let $T$ be a hypertree on $n\geq 5$ vertices, where $T\ncong S^1_n,S^2_n$. Then $\rho_{\alpha}(T)\leq \rho_{\alpha}(S'^2_n)$ with equality if and only if $T\cong S'^2_n$.
\end{theorem}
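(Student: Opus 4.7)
I would combine Theorem~\ref{Thm41} and Lemma~\ref{Lem41} with the grafting operations in Theorems~\ref{Thm31} and~\ref{Thm34} to show that $S'^2_n$ is the unique maximizer. Let $T$ be a hypertree on $n$ vertices with $T \ncong S^1_n, S^2_n$, and set $m = |E(T)|$; since $T \ncong S^1_n$, we have $m \geq 2$. By Theorem~\ref{Thm41}, $\rho_{\alpha}(T) \leq \rho_{\alpha}(S^m_n)$ with equality iff $T \cong S^m_n$, so it suffices to analyze the cases $m \geq 3$ and $m = 2$ separately and compare the resulting hyperstars to $S'^2_n$.

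For $m \geq 3$: Lemma~\ref{Lem41} gives $\rho_{\alpha}(S^m_n) \leq \rho_{\alpha}(S^3_n)$, so I need $\rho_{\alpha}(S^3_n) < \rho_{\alpha}(S'^2_n)$. I apply Theorem~\ref{Thm34} to the size-$3$ edge $\{u, a, b\}$ of $S'^2_n$ (where $u$ is the center), identifying the theorem's $w_1$ with $u$ and $w_2$ with $b$; removing $b$ from the edge and attaching the new pendant edge $\{u, b\}$ at $u$ produces exactly $S^3_n$, and the required nonmembership hypotheses $\{u, a\}, \{u, b\} \notin E(S'^2_n)$ are trivial. Theorem~\ref{Thm34} then delivers the strict inequality, and chaining it with Lemma~\ref{Lem41} closes this case.

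For $m = 2$: any $2$-edge hypertree is a $2$-edge hyperstar (its two edges must share exactly one vertex by acyclicity and connectedness), so write $T_s$ for the hyperstar with edge sizes $(s, n+1-s)$, where $2 \leq s \leq \lfloor (n+1)/2 \rfloor$; note $T_2 \cong S^2_n$ and $T_3 \cong S'^2_n$. The hypothesis $T \ncong S^2_n$ forces $s \geq 3$. I then apply Theorem~\ref{Thm31} to $T_s$ with $k = n+1-s$ and $r = s$: both edges are pendant at the center, the hypothesis $k-2 \geq r \geq 2$ reduces to $2 \leq s \leq (n-1)/2$, and moving a vertex from the larger edge to the smaller yields $T_{s+1}$ with strictly smaller $\rho_{\alpha}$. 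Iterating produces the strict chain $\rho_{\alpha}(T_2) > \rho_{\alpha}(T_3) > \cdots > \rho_{\alpha}(T_{\lfloor (n+1)/2 \rfloor})$, so $S'^2_n = T_3$ is the unique maximizer among $T_s$ with $s \geq 3$, completing the argument.

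The main obstacle I anticipate is purely bookkeeping: verifying that Theorem~\ref{Thm31}'s applicability range $s \in \{2, \ldots, \lfloor (n-1)/2 \rfloor\}$ really produces the full decreasing chain up through the balanced endpoint $T_{\lfloor (n+1)/2 \rfloor}$, which requires a short parity check ($n$ even versus odd) but introduces no new idea. Correctly matching the theorem's $w_1$ in Theorem~\ref{Thm34} with the center of $S'^2_n$ so that the output is literally $S^3_n$ is a second small subtlety worth making explicit; once both are in place, the argument is a direct application of the grafting tools already established.
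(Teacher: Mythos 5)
Your proposal is correct and follows essentially the same route as the paper: Theorem~\ref{Thm41} and Lemma~\ref{Lem41} reduce the case $m\ge 3$ to $S^3_n$, Theorem~\ref{Thm31} handles the two-edge case by showing the more unbalanced hyperstar has larger $\alpha$-spectral radius (the paper invokes this in one line where you spell out the chain $T_2>T_3>\cdots$), and Theorem~\ref{Thm34} applied to the size-$3$ edge of $S'^2_n$ gives the final comparison $\rho_{\alpha}(S^3_n)<\rho_{\alpha}(S'^2_n)$. The extra bookkeeping you flag (the parity check on the chain's endpoint and the identification of $w_1$ with the center) is handled correctly and introduces no gap.
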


\begin{proof}
Let $m=|E(T)|$. Since $T\ncong S^1_n,S^2_n$, we have $m\geq 2$. If $m\geq 3$, then we have by Theorem \ref{Thm41} and Lemma \ref{Lem41} that $\rho_{\alpha}(T)\leq \rho_{\alpha}(S^{m}_n) \leq \rho_{\alpha}(S^3_n)$ with equalities if and only if $T\cong S^{m}_n$ and $m=3$, i.e., $T\cong S^3_n$. If $m=2$, then by Theorem \ref{Thm31}, $\rho_{\alpha}(T)\leq \rho_{\alpha}(S'^{2}_n)$. By Theorem \ref{Thm34}, we have $\rho_{\alpha}(S^3_n) < \rho_{\alpha}(S'^{2}_n)$.
\end{proof}

If $T$ is a hypertree on $4$ vertices, where $T\ncong S^1_4,S^2_4$, then $T\cong P_4$ or $S_4$. By Theorem \ref{Lem23}, $\rho_{\alpha}(S_4)>\rho_{\alpha}(P_4)$.

%\section{Hypertrees with small spectral radius}

%In this section, we determine the unique hypertrees with minimum spectral radius  among the hypertrees that are not $2$-uniform on $n$ vertices, and
%the unique hypertrees with the first three smallest spectral radius.

Let $P'_n$ be the hypertree on $n\geq 4$ vertices obtained from $P_{n-2}=v_1\dots v_{n-2}$ by adding an edge $\{v_{n-2},v_{n-1},v_{n}\}$ with size $3$. Then $A_{\alpha}(P'_n)$ is permutation similar to $A_{\alpha}(U_{n,3})$, which implies  $\rho_{\alpha}(P'_n)=\rho_{\alpha}(U_{n,3})$.

\begin{theorem} \label{Thm54}
Let $T$ be a hypertree on $n\geq 4$ vertices that is not $2$-uniform. Then $\rho_{\alpha}(T)\geq \rho_{\alpha}(P'_n)$ with equality if and only if $T\cong P'_n$.
\end{theorem}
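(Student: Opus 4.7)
The plan is to reduce $T$ to $P'_n$ by a sequence of strict $\rho_\alpha$-decreasing operations, carried out in three stages: (i) apply Theorem~\ref{Thm34} repeatedly to reduce $T$ to a hypertree $T^*$ with exactly one edge of size $3$ and $n-3$ edges of size $2$; (ii) recognize that $A_\alpha(T^*)$ coincides with $A_\alpha(G^*)$ for a $2$-uniform unicyclic graph $G^*$ of girth $3$; and (iii) use Lemmas~\ref{Lem24} and~\ref{Lem25} to reduce $G^*$ to $U_{n,3}$, which is spectrally equivalent to $P'_n$.

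For stage (i), since $T$ is not $2$-uniform it contains some edge of size $\ge 3$. As long as $T$ has either an edge of size $\ge 4$ or two or more edges of size exactly $3$, I can pick such an edge $e=\{w_1,\dots,w_k\}$ with $k\ge 3$ and apply Theorem~\ref{Thm34}; the side conditions $\{w_1,w_2\},\, e\setminus\{w_2\}\notin E(T)$ follow from $T$ being acyclic, since any edge sharing two or more vertices with $e$ would form a $2$-cycle with $e$. The resulting hypergraph is a non-$2$-uniform hypertree on $n$ vertices and has strictly smaller $\rho_\alpha$. Iterating, I reach $T^*$ with exactly one size-$3$ edge and $n-3$ size-$2$ edges, and $\rho_\alpha(T)\ge \rho_\alpha(T^*)$ with equality iff $T=T^*$.

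For stage (ii), let $\{a,b,c\}$ be the unique size-$3$ edge of $T^*$ and form the $2$-uniform graph $G^*$ from $T^*$ by replacing $\{a,b,c\}$ with the triangle edges $\{a,b\},\{b,c\},\{a,c\}$. Both the triple edge and the triangle contribute the entry $1$ to every off-diagonal adjacency position for pairs in $\{a,b,c\}$, while all size-$2$ edges are untouched, so $A(T^*)=A(G^*)$ and hence $\rho_\alpha(T^*)=\rho_\alpha(G^*)$, where $G^*$ is a connected $2$-uniform unicyclic graph on $n$ vertices with girth $3$.

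For stage (iii), write $G^*$ as the triangle $abc$ together with trees $T_a,T_b,T_c$ rooted at $a,b,c$. First I reduce each $T_x$ to a pendant path based at $x$: if some $T_x$ contains a vertex of degree $\ge 3$, choose the \emph{deepest} such $w$, so that its branches in $G^*$ are themselves pendant paths; then Lemma~\ref{Lem24} at $u=w$, applied to two branches of lengths $k\ge\ell\ge 1$, strictly decreases $\rho_\alpha$. Iterating outward, $G^*$ becomes a triangle $abc$ with pendant paths of lengths $\ell_a,\ell_b,\ell_c$ (summing to $n-3$) at the triangle vertices. Whenever two of these are positive, say $\ell_a\ge\ell_b\ge 1$, apply Lemma~\ref{Lem25} with $u=a$, $v=b$ (adjacent triangle vertices of degree $\ge 2$) and base graph equal to the triangle plus the pendant path at $c$; this strictly decreases $\rho_\alpha$ upon transferring unit length from $b$ to $a$. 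Iterating concentrates all pendant length at a single triangle vertex, yielding $U_{n,3}$. Combining the three stages, $\rho_\alpha(T)\ge \rho_\alpha(T^*)=\rho_\alpha(G^*)\ge \rho_\alpha(U_{n,3})=\rho_\alpha(P'_n)$, with all equalities iff $T\cong P'_n$.

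The main obstacle is the pendant-path reduction in stage (iii): applying Lemma~\ref{Lem24} requires the chosen branching vertex $w$ to have branches that are already pendant paths, which is exactly why processing the deepest branching vertex first is essential. All other reductions are direct applications of the stated operations.
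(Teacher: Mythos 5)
Your proposal is correct and follows essentially the same route as the paper's proof: reduce to a hypertree with a single edge of size $3$ via repeated use of Theorem~\ref{Thm34}, pass to the permutation-similar unicyclic graph of girth $3$, and then invoke Lemmas~\ref{Lem24} and~\ref{Lem25} to arrive at $U_{n,3}\,(\cong_{\rho_\alpha} P'_n)$. The only difference is presentational — you run an explicit strictly decreasing chain of operations, whereas the paper fixes a minimizer and derives contradictions — and your tracking of the equality case through that chain is sound.
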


\begin{proof}
Let $T$ be a hypertree on $n$ vertices that is not $2$-uniform  with minimum $\alpha$-spectral radius.

Suppose that  $T$ has two edges, say $e$ and $e'$,  of size at least $3$. Since $T$ is a hypertree, we may assume that $\{u,v,w\}\subseteq e$ with $u,v\notin e'$.
Let $T'$ be the hypertree obtained from $T$ by removing $v$ from $e$ and attaching an edge $\{u,v\}$ to $u$. Obviously, $T'$ is not $2$-uniform. By Theorem \ref{Thm34}, $\rho_{\alpha}(T)>\rho_{\alpha}(T')$, a contradiction. Thus $T$ has exactly one edge of size at least $3$. By similar argument as above, $T$ has no edge of size at least $4$. It follows that $T$ has exactly one edge of size $3$ and all other edges are of size $2$. Let
 $e^*=\{w_1,w_2,w_3\}$ be the unique edge of size $3$ of $T$. Let $U$ be the hypergraph obtained from $T$ by deleting edge $e^*$ and adding edges $\{w_1,w_2\}$, $\{w_1,w_3\}$, and $\{w_2,w_3\}$. We can easily see that $U$ is a unicyclic graph with girth $3$. Note that $A_{\alpha}(T)$ is permutation similar to $A_{\alpha}(U)$. The deletion of $e^*$ from $T$ yields trees $T_1$, $T_2$ and $T_3$, where $w_i\in V(T_i)$ for $i=1,2,3$.
 %and $\sum_{1\leq i\leq 3}|T_i|=n$.
By applying Lemma \ref{Lem24} to $U$, we know that $T_i$ is a path with a terminal vertex $w_i$ for $i=1,2,3$. By Lemma \ref{Lem25}, only one of $T_1$, $T_2$ and $T_3$ is nontrivial. Thus $T\cong P'_n$.
\end{proof}

Let $T$ be a hypertree on $n$ vertices. If  $T$ is not $2$-uniform, then by Theorem \ref{Thm54}, we have  $\rho_{\alpha}(T)\ge \rho_{\alpha}(P'_n)$. If  $T$ is $2$-uniform and $T\ncong P_n$, then by Lemma \ref{Lem24},  we have $\rho_{\alpha}(T)\ge\rho_{\alpha}(Z_n)>\rho_{\alpha}(P_n)$ with equality if and only if $T\cong Z_n$. By Lemma \ref{Lem22},
$\rho_{\alpha}(Z_n)<\rho_{\alpha}(U_{n,3})=\rho_{\alpha}(P'_n)$.
Thus, among hypertrees on $n$ vertices,
\begin{enumerate}
\item[(a)] $P_n$ is the unique hypertree with smallest $\alpha$-spectral radius;
\item[(b)]  for $n\ge 4$, $Z_n$ is the unique tree with second smallest $\alpha$-spectral radius.
\end{enumerate}

\section{Extremal $\alpha$-spectral radius of unicyclic hypergraphs}

%Recall that the $0$-spectral radius of a unicyclic graph has been studied extensively, see, e.g. \cite{BLS,CR87,Si}.
In this section, we study the  $\alpha$-spectral radius of a unicyclic hypergraph.

We consider $k$-uniform unicyclic hypergraphs first. For a $k$-uniform unicyclic hypergraph $G$ with $V(G)=\{v_1, \dots, v_n\}$,  if $E(G)=\{e_1, \dots, $$e_m\}$,
where $e_i=\{v_{(i-1)(k-1)+1}, \dots,$ $v_{(i-1)(k-1)+k}\}$ for $i=1,\ldots,m$ and $v_{(m-1)(k-1)+k}=v_1$,
then we call $G$ a $k$-uniform loose cycle, denoted by $C_{n,k}$.
For integers $k\geq 2$, $g\ge 2$, $a\ge 1$,  and a $k$-uniform loose cycle $C_{g(k-1),k}$,
let $C^k_g(a)$ be the $k$-uniform hypergraph obtained from $C_{g(k-1),k}$ by
identifying the center of $S_{a(k-1)+1,k}$ and $v_1$.
%
%For integers $k\geq 2$, $g\ge 2$ and a $k$-uniform loose cycle $C_{n,k}$ with $n=g(k-1)$,
%let $C^k_g(t_1,\ldots,t_g)$ be the $k$-uniform hypergraph obtained from $C_{g(k-1),k}$ by
%identifying the center of $S_{t_i(k-1)+1,k}$ and $v_{(i-1)(k-1)+1}$, where $t_i\geq 0$ and $i=1,\ldots,g$.
%
For $k\geq 3$ and $\frac{n}{k-1}\geq g\geq 2$, let $F_{n,k,g}$ be a $k$-uniform unicyclic hypergraph obtained from $C_{g(k-1),k}$ with edges $e_i=\{v_{(i-1)(k-1)+1}, \dots, v_{(i-1)(k-1)+k}\}$ by attaching $\frac{n}{k-1}-g$ pendant edges at $v_2$, where $i=1,\ldots,g$ and $v_{g(k-1)+1}=v_1$.

\begin{theorem} \label{hypercycle-sign-girth-max0} Let $G$ be a $k$-uniform unicyclic hypergraph with order $n$ and girth $g$, where $2\le k\le n$.
Then $\rho_{\alpha}(G)\leq\rho_{\alpha}(C^k_g(m-g))$ with equality if and only if $G\cong C^k_g(m-g)$, where $m=\frac{n}{k-1}$.
\end{theorem}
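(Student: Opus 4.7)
Let $G$ be a $k$-uniform unicyclic hypergraph of order $n$ and girth $g$ attaining the maximum $\alpha$-spectral radius; let $x=x(G)$, and let $C$ denote the unique cycle of $G$. Writing $m=n/(k-1)$, $G$ has $m$ edges, of which $g$ lie in $C$ and the remaining $m-g$ form a hyperforest whose components are hypertrees, each containing exactly one vertex of $C$. My plan is to reduce $G$ to $C^k_g(m-g)$ via three grafting steps, each invoking Theorem \ref{Lem23} to strictly increase $\rho_\alpha$ and contradict the extremality of $G$.

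In the first step, I show each pendant hypertree $T_w$ attached at a cycle vertex $w$ is a hyperstar centered at $w$. If $T_w$ has diameter at least $3$, I take a diametrical path in $T_w$ and apply Theorem \ref{Lem23} exactly as in the proof of Theorem \ref{NSmax}: comparing $\alpha$-Perron entries at the two inner vertices along the path, shifting the appropriate pendant edges strictly increases $\rho_\alpha$. If $T_w$ has diameter $2$, it is a hyperstar centered at some $c$; when $c\neq w$ I compare $x_w$ with $x_c$. If $x_w\ge x_c$ I move the pendant edges of $T_w$ from $c$ to $w$ by Theorem \ref{Lem23}. If $x_w<x_c$ I instead move every cycle edge of $G$ containing $w$ from $w$ to $c$; this is admissible since $c\notin V(C)$ implies $c$ is absent from each such cycle edge, and replacing $w$ with $c$ in the affected cycle edges produces again a loose cycle of length $g$ with the remaining tree structure intact, yielding a strict increase of $\rho_\alpha$ via Theorem \ref{Lem23}.

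In the second step, I show all pendant edges are attached at a single cycle vertex: if distinct cycle vertices $u,v$ each carry pendant edges with $x_u\ge x_v$, then Theorem \ref{Lem23} allows me to move every pendant edge at $v$ to $u$ for a strict increase of $\rho_\alpha$, a contradiction. Let $w^*$ be the unique cycle vertex then carrying all $m-g$ pendant edges. In the third step, I show $w^*$ lies in two cycle edges, giving $G\cong C^k_g(m-g)$. Suppose for contradiction $w^*$ lies in only one cycle edge $e^*$, and let $u^*\in e^*$ be a cycle-meeting vertex lying also in an adjacent cycle edge $e'$. If $x_{u^*}\ge x_{w^*}$, I move all $m-g$ pendant edges from $w^*$ to $u^*$ by Theorem \ref{Lem23} for a strict increase. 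If $x_{u^*}<x_{w^*}$, I instead move $e'$ from $u^*$ to $w^*$: the new $e'$ meets $e^*$ only in $\{w^*\}$ and meets its other neighbor cycle edge in the unchanged meeting vertex, so the unique loose cycle of length $g$ persists, no existing edge is recreated, and Theorem \ref{Lem23} again yields a strict increase, contradicting extremality.

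The main obstacle is verifying admissibility of the cycle-edge moves in Steps $1$ and $3$ when the Perron-entry comparison forces the reversed direction: one must check that the newly formed cycle edge is distinct from all edges of $G$ and that the resulting hypergraph remains $k$-uniform unicyclic with the same girth $g$. These verifications reduce to a routine but careful case analysis of the loose-cycle meeting/non-meeting configuration and of the pendant and connecting-edge incidences around $c$ (resp.~$w^*$).
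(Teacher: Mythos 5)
Your proposal is correct and follows essentially the same route as the paper: repeated application of Theorem \ref{Lem23} first collapses all non-cycle edges into pendant edges at a single cycle vertex, and then the final step --- identical to the paper's elimination of $F_{n,k,g}$ --- uses the two alternative moves (pendant edges toward the meeting vertex, or the adjacent cycle edge toward the pendant-carrying vertex, according to the Perron-entry comparison) to force that vertex to lie in two cycle edges. The only difference is organizational: the paper reduces to pendant edges by moving edges between two degree-at-least-two vertices of a non-pendant edge rather than via diametrical paths of the attached hypertrees, but both versions rest on the same lemma and the same admissibility checks.
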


\begin{proof}  It is trivial if $g=m$. Suppose that $g<m$.  Let $C$ be the unique cycle of $G$ and $v$ a vertex in $C$.  Suppose that $e$ is an edge $e$ outside $C$ contains $v$. If $e$ is not a pendant edge at $v$, then there is another vertex in $e$, say $w$, with degree at least $2$ in $G$, and we may move all edges containing $v$ except $e$ from $v$ to $w$ or  vice versa to obtain
a $k$-uniform unicyclic hypergraph $G'$ with order $n$ and girth $g$, for which we have by   Theorem \ref{Lem23} that $\rho_{\alpha}(G')>\rho_{\alpha}(G)$, a contradiction. Thus every edge outside $C$ is a pendant edge at some vertex in $C$. If there are at least two vertices in $C$ at which there are pendant edges, then we may choose two such vertices, and  move all pendant edges from one vertex to the other to obtain a $k$-uniform unicyclic hypergraph $G''$ with order $n$ and girth $g$, for which we have by   Theorem \ref{Lem23} that $\rho_{\alpha}(G'')>\rho_{\alpha}(G)$, also a contradiction. Thus all pendant edges are at a common vertex of $C$.
Therefore $G\cong C^k_g(m-g)$ if $k=2$, and  $G\cong C^k_g(m-g)$ or $G\cong F_{n,k,g}$ if $k\ge 3$.

Suppose that $k\ge 3$ and $G\cong F_{n,k,g}$. Let $G^*$ be the $k$-uniform hypergraph obtained from $G$ by moving all edges of $E_G(v_2)\setminus\{e_1\}$ from $v_2$ to $v_1$, and $G^{**}$ be the $k$-uniform hypergraph obtained from $G$ by moving $e_g$ from $v_1$ to $v_2$.
Obviously,  $G^*\cong G^{**}\cong G_g^k(m-g)$. %both $G^*$ and $G^{**}$ are unicyclic hypergraphs with order $n$ and girth $g$.
By Theorem \ref{Lem23}, we have $\rho_{\alpha}(G^*)>\rho_{\alpha}(G)$ or $\rho_{\alpha}(G^{**})>\rho_{\alpha}(G)$, a contradiction.
Thus $G\cong C^k_g(m-g)$.
\end{proof}

\begin{theorem} \label{hypercycle-sign-max0} Let $G$ be a $k$-uniform unicylic hypergraph of order $n$, where $2\le k\le n$. Let $m=\frac{n}{k-1}$.
Then

(i) if $G$ is linear, then $m\geq 3$ and $\rho_{\alpha}(G)\leq\rho_{\alpha}(C^k_3(m-3))$ with equality if and only if $G\cong C^k_3(m-3)$;

(ii) if $k\geq 3$, then $m\geq 2$ and $\rho_{\alpha}(G)\leq\rho_{\alpha}(C^k_2(m-2))$ with equality if and only if $G\cong C^k_2(m-2)$.
\end{theorem}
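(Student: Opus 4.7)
The plan is to combine Theorem \ref{hypercycle-sign-girth-max0}, which identifies $C^k_g(m-g)$ as the unique $\rho_\alpha$-maximizer among $k$-uniform unicyclic hypergraphs of order $n$ with girth $g$, with a strict monotonicity statement asserting that $\rho_\alpha(C^k_g(m-g))$ decreases in $g$. Linearity in part (i) forces $g\geq 3$ and hence $m\geq 3$, while in part (ii) the assumption $k\geq 3$ permits $g=2$ and gives $m\geq 2$. So the task reduces to proving, for each $g\geq g_0+1$ (with $g_0=3$ in (i) and $g_0=2$ in (ii)),
\[
\rho_\alpha(C^k_g(m-g))<\rho_\alpha(C^k_{g-1}(m-g+1)).
\]

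To carry out this key step, label the cycle of $C^k_g(m-g)$ as $e_1,\ldots,e_g$ with $e_i\cap e_{i+1}=\{c_i\}$ and $c_0=c_g=v_1$ the pendant-center, and set $x=x(C^k_g(m-g))$. I would apply Theorem \ref{Lem23} to move $e_2$ from $c_1$ to $v_1$, replacing $e_2$ by $e_2'=(e_2\setminus\{c_1\})\cup\{v_1\}$. A routine check using the loose-cycle structure and $g\geq g_0+1$ (the latter ensuring $v_1\notin e_2$, together with $k\geq 3$ when $g_0=2$ to guarantee $e_2'\notin E$) shows the operation is valid and produces exactly $C^k_{g-1}(m-g+1)$: the vertex $c_1$ drops to degree one so $e_1$ becomes a new pendant edge at $v_1$; the edges $e_2',e_3,\ldots,e_g$ close a cycle of length $g-1$ at $v_1$; and the pendant count at $v_1$ grows by one. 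Theorem \ref{Lem23} then yields the desired strict inequality once $x_{v_1}\geq x_{c_1}$ is established.

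The main obstacle is proving this inequality $x_{v_1}\geq x_{c_1}$. When $m=g$ the hypergraph is the pure loose cycle $C_{g(k-1),k}$, which admits a rotational automorphism sending $c_1$ to $v_1$, so Lemma \ref{Lem21} gives $x_{v_1}=x_{c_1}$. When $m>g$, I would argue by contradiction: assume $x_{c_1}>x_{v_1}$, pick any pendant edge $f$ at $v_1$, and apply Theorem \ref{Lem23} to move $f$ from $v_1$ to $c_1$, producing a $k$-uniform unicyclic hypergraph $G^\star$ of order $n$ with the same girth $g$ and strictly larger $\alpha$-spectral radius. But either $G^\star\not\cong C^k_g(m-g)$ (which occurs when $m\geq g+2$, since pendants are then split between two cycle vertices) and Theorem \ref{hypercycle-sign-girth-max0} forces $\rho_\alpha(G^\star)<\rho_\alpha(C^k_g(m-g))$, or $G^\star\cong C^k_g(m-g)$ (which occurs when $m=g+1$, using the rotational symmetry of the cycle) in which case $\rho_\alpha(G^\star)=\rho_\alpha(C^k_g(m-g))$; either option contradicts the strict increase from Theorem \ref{Lem23}. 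Hence $x_{v_1}\geq x_{c_1}$.

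Iterating the key step from $g$ down to $g_0$ yields $\rho_\alpha(C^k_g(m-g))\leq\rho_\alpha(C^k_{g_0}(m-g_0))$ with equality precisely when $g=g_0$. Combining with Theorem \ref{hypercycle-sign-girth-max0} applied to $G$ (of girth $g$) then gives $\rho_\alpha(G)\leq\rho_\alpha(C^k_{g_0}(m-g_0))$ with equality if and only if $G\cong C^k_{g_0}(m-g_0)$, completing both parts.
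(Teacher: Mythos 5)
Your proposal is correct, but it takes a genuinely different route from the paper. The paper works directly with a global maximizer $G$: letting $v_1e_1v_2\dots v_ge_gv_1$ be its cycle and assuming (after relabeling) $x_{v_1}\ge x_{v_4}$ (resp.\ $x_{v_1}\ge x_{v_3}$), it moves all edges of $E_G(v_4)\setminus\{e_4\}$ (resp.\ $E_G(v_3)\setminus\{e_3\}$) from $v_4$ to $v_1$ (resp.\ from $v_3$ to $v_1$) via Theorem \ref{Lem23}; this shortens the cycle to length $3$ (resp.\ $2$) while preserving the relevant class, so a maximizer must already have girth $g_0$, and then Theorem \ref{hypercycle-sign-girth-max0} is invoked once. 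The advantage there is that the needed eigenvector inequality comes for free by choosing which of two symmetric cycle vertices to move toward. You instead prove the strict monotonicity $\rho_\alpha(C^k_g(m-g))<\rho_\alpha(C^k_{g-1}(m-g+1))$ on the girth-extremal graphs themselves, which forces you to establish $x_{v_1}\ge x_{c_1}$ on $C^k_g(m-g)$; your argument for this --- symmetry via Lemma \ref{Lem21} when $m=g$, and otherwise a contradiction obtained by moving a pendant edge to $c_1$ and comparing against the extremality statement of Theorem \ref{hypercycle-sign-girth-max0} --- is a nice extra idea not present in the paper, and you correctly separate the cases $m\ge g+2$ (non-isomorphic image, strict inequality from Theorem \ref{hypercycle-sign-girth-max0}) and $m=g+1$ (isomorphic image by rotation, equality). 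Your checks on the validity of the edge-moving operation (in particular that $g\ge 4$ is needed when $k=2$ so that $e_2'\ne e_3$, and $k\ge3$ when descending to girth $2$) are the right ones. Your route is longer but buys a genuinely stronger conclusion, namely the full ordering of $\rho_\alpha(C^k_g(m-g))$ as $g$ varies, which the paper's proof does not give.
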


\begin{proof}
Let $G$ be a $k$-uniform unicyclic hypergraph of order $n$ with maximum $\alpha$-spectral radius.

Let $g$ be the girth of $G$. Let $v_1 e_1 v_2 \ldots v_g e_g v_1$ be the unique cycle of $G$. Let $x=x(G)$.

Suppose first that $G$ is linear. Then $m\ge g\ge 3$.
Suppose that $g\geq 4$. Assume that $x_{v_1}\geq x_{v_4}$.
Let $G'$ be the $k$-uniform unicyclic hypergraph obtained from $G$ by moving all edges of $E_G(v_4)\setminus \{e_4\}$
from $v_4$ to $v_1$. Obviously, $G'$ is linear.
By Theorem \ref{Lem23}, we have $\rho_{\alpha}(G')>\rho_{\alpha}(G)$, a contradiction. Thus $g=3$. By Theorem \ref{hypercycle-sign-girth-max0}, $G\cong C^k_3(m-3)$.

Suppose that $k\geq 3$. Then $m\geq g\geq 2$. Suppose that $g\geq 3$.  We may assume that $x_{v_1}\geq x_{v_3}$. Let $G''$ be the $k$-uniform unicyclic hypergraph obtained from $G$  by moving all edges of $E_G(v_3)\setminus \{e_3\}$ from $v_3$ to $v_1$. %Obviously, $G'$ is also a $k$-uniform unicyclic hypergraph.
By Theorem \ref{Lem23}, we have $\rho_{\alpha}(G'')>\rho_{\alpha}(G)$, a contradiction. Thus $g=2$, and by Theorem \ref{hypercycle-sign-girth-max0}, we have $G\cong C^k_2(m-2)$. %This proves (ii).
\end{proof}

Now we move to consider unicyclic hypergraphs that are not necessarily uniform.
Let $U^1_n$ be the unicyclic hypergraph on $n\geq 3$ vertices obtained from $S^1_n$ by adding an edge of size $2$. Let $U^2_n$ be a cycle of length $2$ on $n\geq 4$ vertices such that one edge is of size $3$.

%\begin{theorem} \label{Thm61}
%Let $U$ be a unicyclic hypergraph on $n\geq 3$ vertices. Then $\rho_{\alpha}(U)\leq \rho_{\alpha}(U^1_n)$ with equality if and only if $U\cong U^1_n$.
%\end{theorem}
%
%\begin{proof}
%Let $U$ be the unicyclic hypergraph on $n\geq3$ vertices with maximum $\alpha$-spectral radius.
%
%Let $C=v_1 e_1 \ldots v_g e_g v_1$ be the unique cycle in $U$ with length $g$. Since $U$ is a unicyclic hypergraph, $a_{uv}(U)\leq 2(1-\alpha)$ for any pair of distinct vertices $u,v$, and at most one pair of distinct vertices $u,v$ such that $a_{uv}(U)= 2(1-\alpha)$ and $a_{uu}(U)\le n\alpha$ for $u\in V(G)$.
%
%Suppose that $g=2$. Then there exists only one pair of vertices, say $v_1,v_2\in V(U)$, such that $a_{v_1v_2}(U)= 2(1-\alpha)$, and other pairs of vertices $\{w,z\}\neq \{v_1,v_2\}$, $a_{wz}(U)\leq 1-\alpha$. It is easily seen that $U^1_n$ is the unique unicyclic hypergraph on $n$ vertices such that $a_{v_1v_2}(U^1_n)= 2(1-\alpha)$ for some pair of vertices $v_1, v_2$, $a_{wz}(U^1_n)=1-\alpha$ for any $\{w,z\}\neq \{v_1,v_2\}$ and $a_{v_1v_1}(U^1_n)=a_{v_2v_2}(U^1_n)=n\alpha$, $a_{uu}(U^1_n)=(n-1)\alpha$ for $u\in V(G)\setminus \{v_1,v_2\}$. By Lemma \ref{Lem22}, $U\cong U^1_n$.
%If $g\geq 3$, then $a_{uv}(U)\leq 1-\alpha$ for any pair of vertices $u,v\in V(U)$, and  by Lemma \ref{Lem22}, $\rho_{\alpha}(U)<\rho_{\alpha}(U^1_n)$, a contradiction. Thus $U\cong U^1_n$.
%\end{proof}

\begin{theorem} \label{Thm62}
Let $U$ be a unicyclic hypergraph on $n\geq 3$ vertices. Then $\rho_{\alpha}(U)\leq \rho_{\alpha}(U^1_n)$ with equality if and only if $U\cong U^1_n$. Moreover,
if $U\ncong U^1_n$ with $n\ge 4$, then $\rho_{\alpha}(U)\leq \rho_{\alpha}(U^2_n)$ with equality if and only if $U\cong U^2_n$.
\end{theorem}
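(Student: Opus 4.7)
My plan is to reduce an extremal unicyclic hypergraph to $U_n^1$ or $U_n^2$ via a sequence of $\alpha$-spectral-radius-increasing operations, drawing on Lemma \ref{Lem22} (monotonicity under adjacency-matrix dominance), Theorem \ref{Lem23} (moving edges between vertices based on $\alpha$-Perron entries), and Theorem \ref{Thm33} (vertex moves within a length-$2$ cycle). For the first statement, let $U$ be a unicyclic hypergraph on $n$ vertices of maximum $\alpha$-spectral radius. First, if the unique cycle has length $g\ge 3$ with consecutive edges $e_i,e_{i+1}$ sharing cycle vertex $v_i$, I would replace the pair by the single edge $e_i\cup e_{i+1}$; the result is unicyclic (cycle length $g-1$) and simple (since $v_i$ lies in no other cycle edge), with strictly larger adjacency matrix, so Lemma \ref{Lem22} gives a strict increase of $\rho_\alpha$, contradicting extremality. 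Next, if $U$ has any edge outside the cycle, the hanging tree contains a pendant edge $f$ at a vertex $v$ lying in some other edge $e'$; replacing $\{f,e'\}$ by the single edge $f\cup e'$ (which is genuinely new because the vertices of $f\setminus\{v\}$ have degree one) also strictly increases $\rho_\alpha$ by Lemma \ref{Lem22}. Hence $U$ has exactly two edges of sizes $n_1\ge n_2\ge 2$ with $n_1+n_2=n+2$ and $|e_1\cap e_2|=2$. If $n_2\ge 3$, moving a vertex from $e_2$ to $e_1$ gives sizes $(n_1+1,n_2-1)$, and Theorem \ref{Thm33} applied in reverse (its preconditions $n_1\ge n_2$ and $n_2-1\ge 2$ are satisfied) yields a strict increase of $\rho_\alpha$, a contradiction. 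So $(n_1,n_2)=(n,2)$ and $U\cong U_n^1$.

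For the second statement, let $U\ncong U_n^1$ on $n\ge 4$ vertices achieve the maximum $\alpha$-spectral radius among such hypergraphs. The reductions from the first part still apply except when their output is $U_n^1$; since each reduction drops the edge count by one, this can only happen when $U$ has exactly three edges. Thus either (A) $U$ has two edges, or (B) $U$ has three edges consisting of cycle edges $e_1,e_2$ and one pendant $f$ attached at some vertex $v$. In case (A), $(n_1,n_2)\ne(n,2)$ gives $n_2\ge 3$, and iterated reverse application of Theorem \ref{Thm33} strictly increases $\rho_\alpha$ until $n_2=3$, forcing $U\cong U_n^2$. For case (B) with $v\in e_1\cap e_2$, the pendant $f$ can be absorbed into either cycle edge; size bookkeeping (using that $|e_i|\ge 3$ whenever $|e_{3-i}|=2$, else $U$ would have duplicate edges) shows at least one absorption lands in case (A), strictly increases $\rho_\alpha$ by Lemma \ref{Lem22}, and yields $\rho_\alpha(U)<\rho_\alpha(U_n^2)$, a contradiction. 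Hence $v\in e_1\setminus e_2$ (say), and the $U_n^1$-absorption relation then forces $|e_2|=2$ and $|e_1|+|f|=n+1$.

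With $p,q$ the two vertices of $e_1\cap e_2$, the dichotomy $x_p(U)\ge x_v(U)$ versus $x_p(U)\le x_v(U)$ is the key. In the former, Theorem \ref{Lem23} allows us to move $f$ from $v$ to $p$ with strict increase of $\rho_\alpha$, producing a hypergraph whose pendant is now based at $p\in e_1\cap e_2$; in the latter, Theorem \ref{Lem23} allows us to move $e_2$ from $p$ to $v$, yielding new cycle edges $e_1$ and $\{q,v\}$ meeting in $\{q,v\}$, so that the pendant $f$ is again based at a shared cycle vertex. In either sub-case we reduce to the preceding $v\in e_1\cap e_2$ analysis, conclude $\rho_\alpha(U)<\rho_\alpha(U_n^2)$, and contradict extremality. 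Hence $U\cong U_n^2$.

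The main obstacle is case (B) with $v\in e_1\setminus e_2$, where direct absorption of $f$ would produce $U_n^1$ rather than $U_n^2$. The essential idea is that the trichotomy on $\alpha$-Perron entries at $p$ and $v$ always supplies an applicable Theorem \ref{Lem23} move relocating either $f$ or $e_2$, after which the pendant ends up based at a shared cycle vertex and the easier sub-case completes the argument.
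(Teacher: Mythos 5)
Your first part follows the paper's strategy essentially verbatim: reduce the girth to $2$ by merging consecutive cycle edges (Lemma \ref{Lem22}), absorb every edge off the cycle (the paper first relocates such an edge to a cycle vertex via Theorem \ref{Lem23} and then takes the union with a cycle edge, while you absorb a pendant edge directly into its parent edge --- both are fine), and then invoke Theorem \ref{Thm33} to force the smaller cycle edge to have size $2$. For the second part your treatment is in fact more careful than the paper's, which simply asserts that the second maximizer is again a two-edge cycle with $\min\{|e_1|,|e_2|\}\ge 3$; your explicit accounting of when a reduction step could land on $U^1_n$, and the Perron-entry dichotomy that relocates either the pendant edge $f$ or the size-$2$ cycle edge so that the pendant ends up at a shared cycle vertex, is a legitimate and genuinely needed completion of that step. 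I checked the individual moves (the new edges are not already present, unicyclicity is preserved, and the hypotheses of Theorem \ref{Lem23} and of Theorem \ref{Thm33} applied in reverse are met), and they go through.

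There is, however, one overlooked configuration in part two. When $U$ has exactly three edges, its unique cycle may have length $3$ rather than $2$, i.e., all three edges lie on the cycle and there is no pendant edge; your case (B) covers only girth $2$ plus a pendant. The omission is repairable by the same device you already use: merging $e_i$ and $e_{i+1}$ (which meet in a single vertex, by unicyclicity) yields a two-edge $2$-cycle with edge sizes $n-|e_{i+2}|+2$ and $|e_{i+2}|$, and this is $U^1_n$ only when $|e_{i+2}|=2$; since $n\ge 4$ rules out $|e_1|=|e_2|=|e_3|=2$, some choice of the merged pair leaves an edge of size at least $3$ untouched, and that merge produces a unicyclic hypergraph not isomorphic to $U^1_n$ with strictly larger $\alpha$-spectral radius --- the same contradiction you invoke elsewhere. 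With this case added the argument is complete. Two smaller points deserve a line each in a final writeup: the fact that the two edges of a $2$-cycle meet in exactly two vertices follows from unicyclicity (an intersection of size at least $3$ yields several distinct $2$-cycles), and each union edge used in an absorption or merge is never already in $E(U)$, since it would form a second cycle with an edge it contains.
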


\begin{proof}
Let $U$ be the unicyclic hypergraph on $n$ vertices with maximum $\alpha$-spectral radius.

Let $C=v_1 e_1 \ldots v_g e_g v_1$ be the unique cycle in $U$ with length $g$.
Suppose that $U\ne C$. Then there is an edge $e$ of $U$ not on $C$ and it contains some vertex in $V(C)$. Assume that $e$ contains some vertex $w$ of $e_1$. If $w\ne v_1, v_2$, then  by moving $e$ from $w$ to $v_1$ or moving $e_g$ from $v_1$ to $w$, we obtain a unicyclic hypergraph $U'$, and by Theorem \ref{Lem23}, we have $\rho_{\alpha}(U)<\rho_{\alpha}(U')$, a contradiction. Thus $w=v_1$ or $v_2$. Suppose that $w=v_1$. Let $U''$ be the unicyclic hypergraph obtained from $U$ by deleting edges $e$ and $e_1$ and adding an edge $e\cup e_1$.
%
%$e_i$ and adding an edge $e\cup e_i$, where $i\in \{1,g\}$ with $|e_i|\le |e_j|$ for $j=\{1,g\}\setminus\{i\}$.
Note that $A(U'')>A(U)$ and thus $A_{\alpha}(U'')>A_{\alpha}(U)$. Then by Lemma \ref{Lem22}, $\rho_{\alpha}(U'')>\rho_{\alpha}(U)$, a contradiction. Thus $U=C$. Assume that $|e_1|=\max\{|e_i|:1\le i\le g\}$.
If $g\geq 3$, then by setting $U^*$ to be the unicyclic hypergraph with girth $g-1\geq 2$ obtained from $U$ by deleting edges $e_2$ and $e_3$, and adding an edge $e_2\cup e_3$, we have by Lemma \ref{Lem22} that $\rho_{\alpha}(U^*)>\rho_{\alpha}(U)$, a contradiction. Thus $g=2$.
Then  $U=v_1 e_1 v_2 e_2 v_1$. By Theorem \ref{Thm33}, we have $|e_2|=2$, i.e., $U\cong U^1_n$.

If $U\ncong U^1_n$ with $n\ge 4$, then $\min\{|e_1|, |e_2|\}\geq 3$, and thus  by Theorem \ref{Thm33}, $\min\{|e_1|, |e_2|\}=3$, i.e., $U\cong U^2_n$.
\end{proof}

We note that the first part of the previous theorem follows also from Lemma \ref{Lem22}.

%
%\section{Unicyclic hypergraphs with small spectral radius}
%
%In this section, we determine  unicyclic hypergraphs on $n\geq 4$ vertices with the first two smallest spectral radius.

\begin{theorem} \label{Thm71}
Let $U$ be a unicyclic hypergraph on $n\geq 3$ vertices. Then $\rho_{\alpha}(U)\geq \rho_{\alpha}(C_n)$ with equality if and only if $U\cong C_n$.
\end{theorem}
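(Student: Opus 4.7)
The plan is to prove $\rho_\alpha(U)\ge 2$ for every unicyclic hypergraph $U$ on $n$ vertices via Rayleigh's principle applied to the constant vector $\mathbf{1}$, and to observe that $\rho_\alpha(C_n)=2$. For $C_n$, every vertex has degree $2$ with exactly two neighbors, so $D(C_n)\mathbf{1}=A(C_n)\mathbf{1}=2\mathbf{1}$, hence $A_\alpha(C_n)\mathbf{1}=2\mathbf{1}$ and $\rho_\alpha(C_n)=2$.

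For an arbitrary unicyclic hypergraph $U$ on $n$ vertices, I would compute $A_\alpha(U)\mathbf{1}$ entry-wise: $(A_\alpha(U)\mathbf{1})_u=\alpha r_U(u)+(1-\alpha)r_U(u)=r_U(u)$, so
\[
\mathbf{1}^\top A_\alpha(U)\mathbf{1}=\sum_{u\in V(U)} r_U(u)=\sum_{u}\sum_{e\ni u}(|e|-1)=\sum_{e\in E(U)}|e|(|e|-1).
\]
Applying Rayleigh's principle to the positive unit vector $\mathbf{1}/\sqrt{n}$ yields $\rho_\alpha(U)\ge \frac{1}{n}\sum_e|e|(|e|-1)$. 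The key combinatorial identity is $\sum_{e\in E(U)}(|e|-1)=n$ for every unicyclic hypergraph $U$ on $n$ vertices. I would prove this via the incidence bipartite graph $B(U)$ with vertex set $V(U)\cup E(U)$: since $U$ is connected with exactly one cycle, $B(U)$ is connected with cyclomatic number one (a hypergraph cycle of length $p$ corresponds precisely to a $2p$-cycle in $B(U)$), so $\sum_e|e|-(n+|E(U)|)+1=1$, giving $\sum_e(|e|-1)=n$. Combined with $|e|\ge 2$, we obtain $\sum_e|e|(|e|-1)\ge 2\sum_e(|e|-1)=2n$, hence $\rho_\alpha(U)\ge 2=\rho_\alpha(C_n)$.

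For the equality case, suppose $\rho_\alpha(U)=2$. The Rayleigh equality in the paper's formulation forces $\mathbf{1}/\sqrt{n}=x(U)$, so $A_\alpha(U)\mathbf{1}=2\mathbf{1}$ and therefore $r_U(u)=2$ for all $u\in V(U)$. On the other hand, equality in $\sum_e|e|(|e|-1)\ge 2\sum_e(|e|-1)$ forces $(|e|-1)(|e|-2)=0$ for every edge, and since $|e|\ge 2$ this gives $|e|=2$ for all $e$; that is, $U$ is $2$-uniform. In a $2$-uniform hypergraph $r_U(u)=d_U(u)$, so every vertex has degree $2$; a connected $2$-regular $2$-uniform hypergraph on $n$ vertices is precisely $C_n$, as required.

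The main (minor) obstacle is verifying the identity $\sum_e(|e|-1)=n$ carefully, since the paper allows cycles of length two (two edges sharing two vertices) and unicyclic hypergraphs need not be linear. The incidence-bipartite-graph argument handles this uniformly because the bijection between hypergraph cycles of length $p$ and bipartite cycles of length $2p$ in $B(U)$ is valid for $p=2$ as well. Everything else is a routine application of Rayleigh's principle and the Perron--Frobenius uniqueness of the positive eigenvector that the paper has already recorded.
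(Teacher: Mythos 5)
Your proof is correct, and it takes a genuinely different route from the paper's. The paper argues by contradiction on a minimizer: if $U$ is $2$-uniform it invokes the known fact that $C_n$ is the unique minimizer of the $\alpha$-spectral radius among $2$-uniform unicyclic graphs, and if $U$ has an edge of size $r\ge 3$ it gets $\rho_{\alpha}(U)>\rho_{\alpha}(K_r)=r-1\ge 2$ from Lemma~\ref{Lem22} applied to the clique induced inside that edge. You instead test $A_{\alpha}(U)$ against the all-ones vector, reduce the bound to the counting identity $\sum_{e}(|e|-1)=n$ (correctly justified via the incidence bipartite graph, including the degenerate girth-two case), and combine $|e|(|e|-1)\ge 2(|e|-1)$ with the equality case of Rayleigh's principle to force $2$-regularity and $2$-uniformity simultaneously. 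What your approach buys: it is self-contained (it does not import the extremal statement for $2$-uniform unicyclic graphs, which the paper asserts without proof), it treats $n=3$ and $n\ge 4$ uniformly, and the uniqueness of $C_n$ drops out of the same computation rather than requiring a separate case analysis. What the paper's approach buys: brevity on the non-uniform side and reuse of machinery already deployed throughout the paper. The only step you should make fully explicit in a polished write-up is that a connected graph has exactly one cycle precisely when its cyclomatic number is one, which is the hinge converting ``unicyclic'' into the Euler-type count for $B(U)$; this is standard and does not affect correctness.
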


\begin{proof}
Let $U$ be the unicyclic hypergraph on $n\geq3$ vertices with minimum $\alpha$-spectral radius.

It is trivial if $n=3$. Suppose that $n\geq 4$. If $U$ is a $2$-uniform unicyclic hypergraph, then $U\cong C_n$, since $C_n$ with $\rho_{\alpha}(C_n)=2$ is the unique unicyclic graph on $n$ vertices with minimum $\alpha$-spectral radius  among $2$-uniform unicyclic graphs. If $U$ is not a $2$-uniform unicyclic hypergraph, then there exists an edge $e_1=\{v_1, \ldots, v_r\}$ with size $r\geq 3$. By Lemma \ref{Lem22}, $\rho_{\alpha}(U)> \rho_{\alpha}(K_r)=r-1\geq 2$, a contradiction. Thus $U\cong C_n$.
\end{proof}

\begin{theorem} \label{Thm72}
Let $U$ be a unicyclic hypergraph on $n\geq 4$ vertices and $U\ncong C_n$. Then $\rho_{\alpha}(U)\geq \rho_{\alpha}(U_{n,n-1})$ with equality if and only if $U\cong U_{n,n-1}$.
\end{theorem}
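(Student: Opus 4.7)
The plan is to split the analysis into two stages: first reduce to the 2-uniform case, then identify $U_{n,n-1}$ among 2-uniform unicyclic graphs.

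Let $U$ be a minimizer of $\rho_{\alpha}$ over unicyclic hypergraphs on $n\ge 4$ vertices with $U\ncong C_n$. I first show $U$ is 2-uniform. Suppose otherwise; then $U$ contains a hyperedge $e^*$ of size $k\ge 3$. View the adjacency relations of $U$ as a 2-uniform multigraph $G_U$ on $V(U)$ with edge multiplicities $a_{uv}(G_U)=a_{uv}(U)$, so $A_{\alpha}(G_U)=A_{\alpha}(U)$ and hence $\rho_{\alpha}(G_U)=\rho_{\alpha}(U)$. Since $U$ is unicyclic on $n$ vertices, $\sum_{e\in E(U)}(|e|-1)=n$, while $|E(G_U)|=\sum_{e\in E(U)}\binom{|e|}{2}$, and the strict inequality $\binom{k}{2}>k-1$ for $k\ge 3$ forces $|E(G_U)|>n$. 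Choose any three vertices $a,b,c\in e^*$; then $\{a,b\},\{b,c\},\{a,c\}$ form a triangle in $G_U$. Extending the acyclic pair $\{a,b\},\{b,c\}$ to a spanning tree $S$ of $G_U$ and then re-adding $\{a,c\}$ yields a 2-uniform simple spanning unicyclic subgraph $U''$ of $G_U$ whose unique cycle is this triangle; so $U''$ has girth $3$, hence $U''\ncong C_n$ since $n\ge 4$. Because $|E(U'')|=n<|E(G_U)|$, we have $A_{\alpha}(U'')<A_{\alpha}(G_U)=A_{\alpha}(U)$ entrywise, so by Lemma \ref{Lem22}, $\rho_{\alpha}(U'')<\rho_{\alpha}(U)$, contradicting the minimality of $U$. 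Hence $U$ is 2-uniform.

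So assume $U$ is a 2-uniform unicyclic graph, $U\ncong C_n$, with cycle $C$ of length $g$. Applying Lemma \ref{Lem24} iteratively at each cycle vertex carrying a tree branch straightens each branch into a pendant path (working from the leaves inward), and then Lemma \ref{Lem25} applied repeatedly to adjacent cycle vertices consolidates all pendant-path attachments at a single cycle vertex; each application strictly decreases $\rho_{\alpha}$, so minimality of $U$ forces $U\cong U_{n,g}$. It remains to show $\rho_{\alpha}(U_{n,g+1})<\rho_{\alpha}(U_{n,g})$ for every $3\le g\le n-2$. Subdividing the cycle edge $\{v_g,v_1\}$ of $U_{n,g}$ (which lies on the internal path of type $(i)$ given by $C$) yields a graph $U^*$ on $n+1$ vertices, and since $U_{n,g}$ is not a tree and in particular $U_{n,g}\ncong W_n$, Theorem \ref{Lem26} gives $\rho_{\alpha}(U^*)<\rho_{\alpha}(U_{n,g})$. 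Deleting the terminal vertex of the pendant path of $U^*$ then produces $U_{n,g+1}$, and Lemma \ref{Lem22} yields $\rho_{\alpha}(U_{n,g+1})<\rho_{\alpha}(U^*)$, so $\rho_{\alpha}(U_{n,g+1})<\rho_{\alpha}(U_{n,g})$. Iterating down, $U_{n,n-1}$ is the unique minimizer and $U\cong U_{n,n-1}$.

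The main technical obstacle I foresee is the 2-uniform reduction, specifically the induction needed to straighten arbitrary tree branches into pendant paths, since Lemma \ref{Lem24} as stated addresses only two pendant paths at a common vertex and must be iterated carefully up each branch. Setting up this induction properly (for instance, on the total number of internal vertices of degree at least $3$ across all branches) and ensuring that the consolidation step via Lemma \ref{Lem25} remains applicable as branches are reshaped is where the bookkeeping lies. By contrast, the hypergraph-to-2-uniform reduction through $G_U$ is quite clean, and the girth-comparison step is a direct application of Theorem \ref{Lem26} combined with Lemma \ref{Lem22}.
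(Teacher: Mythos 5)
Your reduction to the $2$-uniform case is correct and genuinely different from (and slicker than) the paper's: the paper instead shrinks edge sizes step by step (first showing the largest edge has size at most $3$, then that there is a single edge of size $3$ lying on the cycle, then invoking Theorem \ref{Thm34}), whereas you produce in one stroke a spanning $2$-uniform unicyclic subgraph of girth $3$ whose $\alpha$-matrix is dominated entrywise by $A_{\alpha}(U)$ and finish with Lemma \ref{Lem22}. The girth-monotonicity step $\rho_{\alpha}(U_{n,g+1})<\rho_{\alpha}(U_{n,g})$ via Theorem \ref{Lem26} (strict, since a unicyclic graph is never $W_n$) followed by Lemma \ref{Lem22} is also sound and is exactly the paper's subdivide-then-delete device.

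The gap is in the consolidation step. Lemma \ref{Lem25} requires the two attachment vertices $u$ and $v$ to be \emph{adjacent}, and nothing in your toolkit merges pendant paths hanging at two non-adjacent cycle vertices: you cannot ``walk'' a pendant path along the cycle with Lemma \ref{Lem25}, because the intermediate cycle vertices carry no pendant path and the lemma needs $k\ge \ell\ge 1$. So after straightening the branches with Lemma \ref{Lem24}, the minimizer may a priori still have pendant paths at several mutually non-adjacent cycle vertices, and the claim that minimality forces $U\cong U_{n,g}$ does not follow as stated; this is not a bookkeeping issue but a structural limitation of Lemma \ref{Lem25}. The paper sidesteps consolidation entirely: if $g\le n-2$, the straightened branches give an internal path of type (ii) between two consecutive branch vertices of the cycle (or of type (i) when only one cycle vertex carries a branch); subdividing an edge of that internal path via Theorem \ref{Lem26} and then deleting a pendant vertex yields a unicyclic competitor on $n$ vertices with girth $g+1$, contradicting minimality and forcing $g=n-1$ directly. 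Your own comparison of $U_{n,g}$ with $U_{n,g+1}$ is precisely this device; applying it to the general straightened minimizer rather than only to $U_{n,g}$ closes the gap.
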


\begin{proof}
Let $U\ncong C_n$ be a unicyclic hypergraph on $n\geq 4$ vertices with minimum $\alpha$-spectral radius.

Let $C$ be the unique cycle of $U$ with length $2\le g\le n-1$.

Suppose that $U$ is $2$-uniform. Let $C=v_1\dots v_gv_1$. By Lemma \ref{Lem24}, the deletion of edges on $C$ yields  $g$ vertex-disjoint paths $T_1$, \dots, $T_g$, where $v_i$ is a terminal vertex of $T_i$.  Suppose that  $g\le n-2$. Then  some  such path is nontrivial.  If only one such path, say $T_i$, is nontrivial, then there is an internal path  (of type (i)) from $v_i$ to $v_i$ in $U$.
If there are at least two such nontrivial paths, then we may choose two such paths, say $T_i$ and $T_j$  such that  $i<j$ implies that $T_{k}$  is trivial for each $k$ with  $i<k<j$, and thus there is an internal path (of type (ii))  from $v_i$ to $v_j$ in $U$. In either case, we may apply Theorem \ref{Lem26} for an edge, say $\{u,v\}$, on the internal path to form a unicyclic graph $U_{uv}$  such that $\rho_{\alpha} (U_{uv})<\rho_{\alpha} (U)$. %Note that $U_{uv}$ possess girth $g+1$ and $n+1$ vertices.
Let $U'$ be the graph obtained from $U_{uv}$ by deleting a pendant vertex of the pendant path at $v_i$. Obviously, $U'$ is a unicyclic graph with $n$ vertices and girth $g+1$, and it is a proper subgraph of $U_{uv}$. By Lemma \ref{Lem22}, $\rho_{\alpha}(U')<\rho_{\alpha}(U_{uv})$. It follows that $\rho_{\alpha}(U')<\rho_{\alpha}(U)$, a contradiction. Thus $g=n-1$, and $U\cong U_{n,n-1}$.

% $U\cong U_{n,g}$ with $3\leq g\leq n-1$. If $g\leq n-2$, since there is an internal edge of $C$, by Lemma \ref{Lem26}, $\rho_{\alpha}(U_{n,g})>\rho_{\alpha}(U_{n+1,g+1})$. Obviously, $U_{n,g+1}$ is a proper subgraph of $U_{n+1,g+1}$.  By Lemma \ref{Lem22}, $\rho_{\alpha}(U_{n+1,g+1})>\rho_{\alpha}(U_{n,g+1})$. Thus  $\rho_{\alpha}(U_{n,g})>\rho_{\alpha}(U_{n,g+1})$ and $g+1\leq n-1$, a contradiction. Therefore $g=n-1$, i.e., $U\cong U_{n,n-1}$.

Now suppose that $U$ is not $2$-uniform. Then there is an edge $e=\{v_1, \ldots, v_r\}$ with  $r\geq 3$. Suppose that $r\geq 4$. Since $U$ is unicyclic, there are two  vertices in $e$, say $v_1, v_2$, such that $\{v_1, v_2\}\notin E(U)$. Let $U'$ be the unicyclic hypergraph obtained from $U$ by deleting edge $e$ and adding edges $\{v_1, v_2\}$ and $\{v_2, \ldots, v_r\}$. Obviously, $U'$ is  not $2$-uniform. By Lemma \ref{Lem22}, $\rho_{\alpha}(U')<\rho_{\alpha}(U)$, a contradiction. Thus the largest size of edges is $3$. Suppose that there are two edges $e_1=\{u_1,u_2,u_3\}$ and $e_2$ with size $3$.
Let $U''$ be the unicyclic hypergraph obtained from $U$ by deleting edge $e_1$ and adding edges $\{u_1, u_2\}$ and $\{u_2, u_3\}$, where  if $e_1$  lies on  $C$, then we relabel the vertices and edges of $C$ as $u_1e_1u_3\dots u_1$.
Then  $U''$ is also not $2$-uniform, and by Lemma \ref{Lem22}, $\rho_{\alpha}(U'')<\rho_{\alpha}(U)$, a contradiction. Thus there is exactly one edge, say $e^*=\{w_1, w_2, w_3\}$ with size $3$. If $e^*$ is not an edge on $C$ or $e^*$ is an edge on $C$ with $g\le n-2$, then let $U'''$ be the unicyclic hypergraph obtained from $U$ by deleting edge $e^*$ and adding edges $\{w_1, w_2\}$ and $\{w_2, w_3\}$, where
if in the latter case, the vertices and edges of $C$ are relabeled as $w_1e^*w_3\dots w_1$.
Then $U'''$ is  $2$-uniform and $U'''\ncong C_n$. By Lemma \ref{Lem22}, $\rho_{\alpha}(U''')<\rho_{\alpha}(U)$, a contradiction. Thus $e^*$ is an edge on the cycle of $U$  and $g=n-1$, i.e., $U\cong C$.

Let $U^*$ be the unicyclic graph obtained from $U$ by removing $w_2$ from $e^*$ and attaching an edge $\{w_1,w_2\}$ to $w_1$. Then $U^*\cong U_{n,n-1}$. By Theorem \ref{Thm34}, $\rho_{\alpha}(U^*)<\rho_{\alpha}(U)$, a contradiction.

Thus $U$ is $2$-uniform, and $U\cong U_{n,n-1}$.
\end{proof}

\section{Maximum $\alpha$-spectral radius of hypergraphs with fixed number of pendant edges}

For $2\le k\le n$, the complete $k$-uniform hypergraph, denoted by $K_n^{(k)}$, is a hypergraph $G$
of order $n$ such that $E(G)$ consists of all $k$-subsets of $V(G)$.

%For a $k$-uniform hypergraph $G$ with maximum degree $\Delta$, we have from \cite[Theorem 1.1]{Mi} that $\rho_{\alpha}(G)\le 2(k-1)\Delta$ with equality when $G$ is connected if and only $G$ is regular of degree $\Delta$.  If $G$ is a hypergraph with $n$ vertices, then $\rho_{\alpha}(G)\leq  2(k-1) {n-1\choose k-1}$ with equality if and only if $G\cong K_n^{(k)}$.

For $2\leq k\leq n$ and $p\geq 0$, let $\mathbb{F}_{n,k}(p)$ be the set of connected $k$-uniform hypergraphs of order $n$ with $p$ pendant edges.
Since $G\in \mathbb{F}_{n,k}(p)$ is connected, we have $n-p(k-1)=1$ with $p>0$ and then $G\cong S_{n,k}$, or $n-p(k-1)\geq k$. %If $n-p(k-1)=k$, then .... $p\ge 2$.

For $k\geq 2$ and $p\geq 0$, let $K^{(k)}_{n,p}$ be the $k$-uniform hypergraph obtained from $K^{(k)}_{n-pk+p}$ by attaching $p$ pendant edges at a vertex of $K^{(k)}_{n-pk+p}$.

\begin{theorem}\label{hypertree-adj-pendant}
For $2\leq k \leq n$, let $G\in \mathbb{F}_{n,k}(p)$. Then

(i) if $p=0$, then $\rho_{\alpha}(G)\le {n-2\choose k-2}(n-1)$ with equality if and only if $G\cong K^{(k)}_n$;

(ii) if $p\geq  1$, then $\rho_{\alpha}(G)\leq \rho_{\alpha}(D_{n,k,1})$ with equality if and only if $G\cong D_{n,k,1}$ if $n-p(k-1)=k$ with $p\ge 2$, and
$\rho_{\alpha}(G)\leq \rho_{\alpha}\left(K^{(k)}_{n,p}\right)$ with equality if and only if $G\cong K^{(k)}_{n,p}$ if $n-p(k-1)>k$.
\end{theorem}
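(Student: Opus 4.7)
The plan is to treat (i) by direct application of Lemma~\ref{Lem22} and to deduce (ii) by first forcing the non-pendant part of any extremal $G$ to be complete and then placing the $p$ pendant edges via Theorem~\ref{Lem23}. For (i), since $p=0$ the hypergraph $G$ is a spanning subhypergraph of $K_n^{(k)}$, and Lemma~\ref{Lem22} immediately gives $\rho_{\alpha}(G)\le \rho_{\alpha}(K_n^{(k)})$ with equality iff $G\cong K_n^{(k)}$. Because $K_n^{(k)}$ is regular with $a_{uv}=\binom{n-2}{k-2}$ for every pair of vertices, each row sum of $A_{\alpha}(K_n^{(k)})$ equals $(n-1)\binom{n-2}{k-2}$, and the all-ones vector witnesses this as the Perron eigenvalue.

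For (ii), let $G$ be extremal in $\mathbb{F}_{n,k}(p)$, let $V_p$ be the union of the $k-1$ degree-$1$ vertex sets of the $p$ pendant edges (so $|V_p|=p(k-1)$), and set $V_c=V(G)\setminus V_p$ (so $|V_c|=n-p(k-1)$). Since every degree-$1$ vertex lies in exactly one edge, the non-pendant edges of $G$ are precisely the edges of $G[V_c]$. I first claim $G[V_c]=K^{(k)}_{|V_c|}$: for any $k$-subset $S\subseteq V_c$ not already in $E(G)$, adding $S$ to $G$ keeps the hypergraph connected, preserves every pendant edge of $G$ (its $k-1$ degree-$1$ vertices lie in $V_p$, untouched by $S$), and keeps $S$ itself non-pendant in $G+S$ because each vertex of $V_c$ already has degree $\ge 1$ in the connected hypergraph $G$ and hence degree $\ge 2$ in $G+S$. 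Thus $G+S\in\mathbb{F}_{n,k}(p)$, and Lemma~\ref{Lem22} would force $\rho_{\alpha}(G+S)>\rho_{\alpha}(G)$, contradicting extremality.

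The argument then splits on $|V_c|$. If $|V_c|>k$, every core vertex has degree at least $\binom{|V_c|-1}{k-1}\ge k\ge 2$ in $G[V_c]$ alone, so whenever two vertices $u,v\in V_c$ both host pendant edges, taking $x=x(G)$ with (after relabeling) $x_u\ge x_v$ and using Theorem~\ref{Lem23} to transfer all pendant edges at $v$ onto $u$ yields a member of $\mathbb{F}_{n,k}(p)$ (since $u,v$ retain degree $\ge 2$, no edge changes pendant status) with strictly larger $\alpha$-spectral radius; this rules out multiple attachment vertices and forces $G\cong K^{(k)}_{n,p}$. If $|V_c|=k$, the core collapses to the single edge $e^*=V_c$; concentrating all $p$ pendant edges at one vertex of $e^*$ would leave the other $k-1$ vertices of $e^*$ with degree $1$, turning $e^*$ itself into a pendant edge and yielding a total of $p+1$ pendant edges, a contradiction. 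So at least two vertices of $e^*$ host pendant edges, and if three or more did, Theorem~\ref{Lem23} applied to a lower/higher-Perron-value attachment pair (with a third remaining to keep $e^*$ non-pendant) again contradicts extremality, so exactly two do and $G\cong D_{n,k,c}$ for some $1\le c\le \lfloor p/2\rfloor$. A short eigenequation comparison, using Lemma~\ref{Lem21} to pin down the common Perron-entries at each symmetric class of pendant vertices, shows the attachment vertex carrying $p-c$ pendant edges has Perron entry strictly larger than the other whenever $c<p-c$; Theorem~\ref{Lem23} then moves one pendant edge from the smaller to the larger side (permissible when $c\ge 2$ since $c-1\ge 1$ keeps $e^*$ non-pendant), giving $\rho_{\alpha}(D_{n,k,c-1})>\rho_{\alpha}(D_{n,k,c})$ and forcing $G\cong D_{n,k,1}$.

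The principal obstacle throughout is verifying that each grafting operation keeps us inside $\mathbb{F}_{n,k}(p)$, i.e., neither creates a spurious pendant edge nor destroys an existing one; the asymmetry between the two subcases of (ii) is exactly a manifestation of this, since when $|V_c|=k$ one cannot concentrate all $p$ pendant edges at a single vertex of $e^*$ without converting $e^*$ itself into an additional pendant edge, which is precisely why the extremum there is $D_{n,k,1}$ rather than some hyperstar-like concentration.
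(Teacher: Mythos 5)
Your proof is correct and follows essentially the same route as the paper's: part (i) by Lemma~\ref{Lem22} together with the regularity of $K^{(k)}_n$, and part (ii) by first completing the non-pendant core via Lemma~\ref{Lem22} and then concentrating the pendant edges at one vertex via Theorem~\ref{Lem23}. The only divergence is the degenerate case $n-p(k-1)=k$, where the paper observes that $G$ is then a $k$-uniform hypertree and simply invokes Theorem~\ref{NSmax}, whereas you re-derive $G\cong D_{n,k,1}$ directly by reducing to $D_{n,k,c}$ and shifting a pendant edge toward the larger Perron entry; your version is self-contained and makes explicit the point the paper leaves implicit, namely that all $p$ pendant edges cannot sit at a single vertex of the core edge without turning that edge into a $(p+1)$-st pendant edge.
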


\begin{proof}
Let $G$ be a hypergraph in $\mathbb{F}_{n,k}(p)$ with maximum $\alpha$-spectral radius.

Let $V_0$ be the set of vertices that are not vertices of degree $1$ in the $p$ pendant edges of $G$. By Lemma \ref{Lem22}, $G[V_0]$ is a complete $k$-uniform hypergraph.

Obviously, if $p=0$, then  $G=G[V_0]\cong K^{(k)}_n$, and then $\rho_{\alpha}(G)={n-2\choose k-2}(n-1)$. This is (i).

Suppose that $p\geq1$.

If $n-p(k-1)=k$, then $G[V_0]=K^{(k)}_k$ contains a single edge, which is the only non-pendant edge of $G$, and the $p$ pendant edges are attached at at least two vertices of this edge. Obviously, $G$ is a $k$-uniform hypertree with $p\ge 2$. By Theorem \ref{NSmax},  $G\cong D_{n,k,1}$.

If $n-p(k-1)>k$. Suppose that there are pendant edges at  different vertices $u$ and $v$ of $G$.  We may assume that   $x_{u}\geq x_{v}$, where $x=x(G)$.
Let $G'$ be the $k$-uniform hypergraph obtained from $G$ by moving all pendant edges at $v$ from $v$ to $u$. Obviously, $G'\in \mathbb{F}_{n,k}(p)$.
By Theorem \ref{Lem23}, $\rho_{\alpha}(G')>\rho_{\alpha}(G)$, a contradiction.
Thus all pendant edges share a common vertex, and
 $G\cong K^{(k)}_{n, p}$. This proves (ii).
\end{proof}

If $G$ is a hypergraph on $n$ vertices with $n-1$ pendant edges, then $G\cong S_n$.

Let $G(n)$ be the hypergraph on $n$ vertices  with all possible edges, that is, any vertex subset of cardinality at least two is an edge.

For $0\le p\le n-3$, let $H^p_n$ be the hypergraph on $n\geq p+3$ vertices obtained from $G(n-p)$ by attaching $p$ pendant edges of size two to one vertex.

By Corollary \ref{Cor42}, $S^p_n$ is the unique hypertree on $n$ vertices with $p$ pendant edges having maximum $\alpha$-spectral radius. To contrast with this, we have

\begin{theorem} \label{Thm81}
Let $G$ be a hypergraph on $n$ vertices with $p$ pendant edges having maximum $\alpha$-spectral radius, where $0\leq p \leq n-2$.

(i) if $p=n-2$, then $G\cong S^{n-2}_n$;

(ii) if $0\le p\le n-3$, then $G\cong H^p_n$.
\end{theorem}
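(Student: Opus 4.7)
Let $G$ be a hypergraph on $n$ vertices with exactly $p$ pendant edges achieving the maximum $\alpha$-spectral radius, and let $L\subseteq V(G)$ be the set of degree-one vertices appearing as leaves of pendant edges, with $V_0 = V(G)\setminus L$. The plan is to establish three successive structural reductions: (R1) $G[V_0]$ is the complete hypergraph on $V_0$; (R2) all $p$ pendant edges share a common anchor $u\in V_0$; and (R3, for part (ii)) every pendant edge has size $2$. Part (i) then follows from a direct counting argument, while part (ii) identifies $G$ with $H^p_n$.

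For (R1), I would argue that if some $S\subseteq V_0$ with $|S|\ge 2$ fails to be an edge of $G$, then adding $S$ preserves the pendant-edge count (no vertex of $L$ is touched, and $S$ itself is not pendant since every vertex of $V_0$ ends up with degree $\ge 2$) while strictly increasing $\rho_\alpha$ by Lemma~\ref{Lem22}, contradicting maximality. For (R2), if two pendant edges were anchored at distinct vertices $v,v'$ with $x_v \ge x_{v'}$ for $x=x(G)$, Theorem~\ref{Lem23} applied to move all pendant edges at $v'$ to $v$ would preserve $p$ and strictly increase $\rho_\alpha$.

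For (R3) under $p\le n-3$, I would suppose for contradiction that some pendant edge $e_1=\{u,l_1,\dots,l_{k-1}\}$ has $k\ge 3$, and form $G^*$ by replacing $e_1$ with $e_1\setminus\{l_{k-1}\}$ and adjoining $l_{k-1}$ to $V_0$ together with every subset of $V_0\cup\{l_{k-1}\}$ of size $\ge 2$ containing $l_{k-1}$ as a new edge. When $|V_0|\ge 2$, the enlarged central clique has at least three vertices, so each new edge sits on vertices of degree $\ge 2$ and is not pendant; hence $G^*$ still has exactly $p$ pendant edges. Using $x=x(G)$ as a Rayleigh test vector and exploiting the symmetries dictated by Lemma~\ref{Lem21} (equal entries on the leaves of $e_1$, on $V_0\setminus\{u\}$, etc.), the key step is to check $x^\top(A_\alpha(G^*)-A_\alpha(G))x>0$: the adjacencies $l_{k-1}$ gains with each vertex of $V_0$ scale as $2^{|V_0|-1}$, dominating the $k-2$ adjacencies $l_{k-1}\sim l_i$ lost inside $e_1$. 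The residual case $|V_0|=1$ is handled by detaching two leaves at once — either from a single pendant of size $\ge 4$ or from two pendants of size $\ge 3$, both guaranteed since $p\le n-3$ together with $|V_0|=1$ forces $\sum_i(k_i-2)\ge 2$ — landing in $|V_0^*|\ge 3$ where the same Rayleigh estimate applies. Either way $\rho_\alpha(G^*)>\rho_\alpha(G)$, a contradiction; hence $|V_0|=n-p$ and $G\cong H^p_n$.

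For part (i) with $p=n-2$, the counting $|L|=\sum_i(k_i-1)\ge p$ gives $|V_0|\le 2$. The case $|V_0|=2$ is ruled out because (R1) forces the unique pair $V_0=\{u,v\}$ to be an edge of $G$, which is then pendant at $u$ with $v$ as its sole leaf, contradicting $v\in V_0$. Hence $|V_0|=1$, so $\sum_i k_i = 2n-3$; with $n-2$ pendants each of size $\ge 2$, exactly one pendant has size $3$ and the remaining $n-3$ have size $2$, giving $G\cong S^{n-2}_n$. The main obstacle is the Rayleigh-quotient verification in (R3) for part (ii): one must confirm that the exponential scaling in $|V_0|$ of the gained adjacencies dominates the constantly many lost ones, and the boundary case $|V_0|=1$ requires the two-leaf move rather than a single-leaf move.
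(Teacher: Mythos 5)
Your reduction (R3) is where the argument breaks. You move a single leaf $l_{k-1}$ out of a large pendant edge $e_1$ into the clique on $V_0\cup\{l_{k-1}\}$ and claim $x^\top(A_\alpha(G^*)-A_\alpha(G))x>0$ because the $2^{|V_0|-1}$ gained adjacencies to $V_0$ dominate the $k-2$ lost adjacencies to the other leaves. But the comparison must be weighted by the Perron entries, and these can invert the count. Take $\alpha=0$, $V_0=\{u,v,w\}$ (so each pair inside $V_0$ has adjacency weight $2$) and a single pendant edge of size $k=n-2$ at $u$. Then $\rho_0(G)$ is governed by the $K_k$ sitting inside $e_1$, so $\rho_0(G)$ is just above $k-1$, the leaf entries satisfy $x_{l_i}\approx x_u$, while the eigenequation at $v$ gives $x_v=2x_u/(\rho_0(G)-2)$, of order $x_u/k$. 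The gain is proportional to $x_{l_{k-1}}\bigl((2^{|V_0|-1}-1)x_u+2^{|V_0|-1}(x_v+x_w)\bigr)\approx 3x_ux_{l_{k-1}}$, against a loss of $(k-2)x_{l_1}x_{l_{k-1}}\approx(k-2)x_ux_{l_{k-1}}$; the Rayleigh difference is negative for large $k$ (and indeed $G^*$ trades a $K_k$ for a $K_{k-1}$, so one expects $\rho_\alpha$ genuinely to drop). The paper avoids this by first invoking Theorem \ref{Thm31} to reduce to at most one pendant edge of size $\ge 3$ and then moving \emph{all} surplus leaves $v_3,\dots,v_r$ of that edge into the clique simultaneously, keeping $\{v_1,v_2\}$ as the pendant $2$-edge: every lost pair $(v_2,v_j)$ is then offset by the gained pair $(v_1,v_j)$ of multiplicity $2^{n-p-2}-1\ge 1$ together with $x_{v_1}>x_{v_2}$, and no leaf--leaf adjacency is destroyed without a larger compensating gain. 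Your residual $|V_0|=1$ device (detaching two leaves into a $3$-clique) inherits exactly the same defect.

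There are two further gaps. In (R2) you assert that moving all pendant edges from $v'$ to $v$ ``preserves $p$''; this is false when $|V_0|=2$, since $v'$ then drops to degree $1$ and the edge $\{v,v'\}$ becomes a new pendant edge. The paper treats $|V_0|=2$ separately: it performs the move anyway, accepts $p+1$ pendant edges, and restores the count by completing a pendant edge of size $\ge 3$ (which exists precisely because $p\le n-3$) into a clique, gaining strictly by Lemma \ref{Lem22}. In part (i) the same oversight resurfaces in a form you cannot patch this way: with $p=n-2$ and $|V_0|=\{u,v\}$ the graph is a genuine double star $S(n_1,n_2)$ in which $\{u,v\}$ is \emph{not} pendant (both endpoints have degree $\ge 2$), so the contradiction you claim (``pendant at $u$ with $v$ as its sole leaf'') does not arise. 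The paper disposes of this competitor by showing $S(n-2,2)$ is the best double star via Theorem \ref{Lem23} and then $\rho_\alpha(S^{n-2}_n)>\rho_\alpha(S(n-2,2))$ via Theorem \ref{Thm34}; your write-up never makes this comparison.
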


\begin{proof}
Suppose that $p=n-2$. If $G$ is $2$-uniform, then $G$ is isomorphic to some double star $S(n_1,n_2)$ obtained by adding an edge between the centers of two nontrivial stars $S_{n_1}$ and $S_{n_2}$ with $n_1+n_2=n$, and by Theorem \ref{Lem23},   $\min\{n_1, n_2\}=2$, i.e., $G\cong S(n-2,2)$.  If $G$ is not $2$-uniform, then $G\cong S^{n-2}_n$. By Lemma \ref{Lem22} or Theorem \ref{Thm34}, $\rho_{\alpha}(S^{n-2}_n) > \rho_{\alpha}(S(n-2,2))$. Thus $G\cong S^{n-2}_n$.

Next suppose that $0\le p\le n-3$.
It is trivial if $p=0$.
Suppose that $1\leq p \leq n-3$. %Let $G$ be the unique hypergraph on $n$ vertices with $p$ pendant edges having maximum spectral radius.

Let  $e_1, \dots, e_p$ be the $p$ pendant edges and $V_1$ be the set of vertices in pendant edges with degree $1$. Let $V_0=V\setminus V_1$. By Lemma \ref{Lem22},  any subset of $V_0$ with at least two vertices is an edge of $G$.

Suppose that  $|V_0|= 2$. Let $V_0=\{u,v\}$. Assume that $x_u\ge x_v$.  Let  $G'$ be the hypergraph obtained from $G$ by moving pendant edges at $v$ from $v$  to $u$. By Theorem \ref{Lem23}, $\rho_{\alpha}(G)<\rho_{\alpha}(G')$. Note that $G'$ is a hypergraph on $n$ vertices with $p+1$ pendant edges, one of which is $\{u,v\}$. Since $p\le n-3$, there is a pendant edge $e_1$ with size at least $3$. Let  $G''$ be the hypergraph obtained from $G'$ by adding all possible edges $e$ such that $e\subseteq e_1$ with $|e|\ge 2$. Obviously, $G''$ is a hypergraph on $n$ vertices with $p$ pendant edges. By Lemma \ref{Lem22}, $\rho_{\alpha}(G')<\rho_{\alpha}(G'')$. It follows that $\rho_{\alpha}(G)<\rho_{\alpha}(G'')$, a contradiction. Therefore $|V_0|\ge 3$ or $|V_0|= 1$. By Theorem \ref{Lem23}, the $p$ pendant edges are at a common vertex, say $v_1\in V_0$. By Theorem \ref{Thm31}, there is at most one pendant edge with size at least $3$. Thus we may assume  that $|e_1|\geq |e_2|= \dots =|e_p|=2$.
Let $e_1=\{v_1, \dots, v_r\}$. Suppose that $r\geq3$. Let $G^*$ be the hypergraph obtained from $G$ by deleting edge $e_1$, adding edge $\{v_1,v_2\}$ and all possible edges $e$ containing at least one vertex in $\{v_3, \dots, v_r\}$ such that
$e\subseteq \{v_3, \dots, v_r\}\cup V_0$.
  Obviously, $G^*$ is also a hypergraph on $n$ vertices with $p$ pendant edges. For $\{w,z\}\subseteq V(G)$, it is easily seen that
\begin{align*}
&\quad a_{wz}(G^*)-a_{wz}(G)\\
&=\begin{cases}
2^{n-p-2}-1 & \mbox{if } w=v_1, z\in \{v_3, \dots, v_r\}, \mbox{ or }w\in \{v_3, \dots, v_r\}, z=v_1,\\
-1 & \mbox{if } w=v_2, z\in \{v_3, \dots, v_r\}, \mbox{ or }w\in \{v_3, \dots, v_r\}, z=v_2,
\end{cases}
\end{align*}
and $a_{wz}(G^*)-a_{wz}(G)\geq 0$ otherwise. Let $x=x(G)$. By Lemma \ref{Lem21}, $x_{v_2}=x_{v_3}= \dots = x_{v_r}$ and $x_u=x_v$ for $u,v\in V_0 \setminus \{v_1\}$. Note that $A_{\alpha}(G[e_1])=A_{\alpha}(K_{r})$ is a principal submatrix of $A_{\alpha}(G)$. By Lemma \ref{Lem22}, we have $\rho_{\alpha}(G)>\rho_{\alpha}(K_r)=r-1$. From the eigenequation of $G$ at $v_2$, we have
\[
\rho_{\alpha}(G)x_{v_2}=\alpha(r-1)x_{v_2}+(1-\alpha)\left(x_{v_1}+(r-2)x_{v_2}\right),
\]
which implies  $(1-\alpha)x_{v_1}=(\rho_{\alpha}(G)-r+2-\alpha)x_{v_2}$, and thus $x_{v_1}>x_{v_2}$. Therefore
\begin{align*}
&\quad \rho_{\alpha}(G^*)-\rho_{\alpha}(G)\\&\geq x^\top (A_{\alpha}(G^*)-A_{\alpha}(G)) x  \\
&= \sum_{\{w,z\}\subseteq V(G)} (a_{wz}(G^*)-a_{wz}(G))\left(\alpha(x_w^2+x_z^2)+2(1-\alpha)x_w x_z\right) \\
&= \sum_{z\in \{v_3, \dots, v_r\}}(2^{n-p-2}-1)(\alpha(x_{v_1}^2+x_z^2)+2(1-\alpha)x_{v_1} x_z)\\
&\quad -\sum_{z\in \{v_3, \dots, v_r\}}(\alpha(x_{v_{2}}^2+x_z^2)+2(1-\alpha)x_{v_2} x_z)\\
&=\alpha(r-2)(2^{n-p-2}-1)x_{v_1}^2+\alpha(r-2)(2^{n-p-2}-1)x_{v_2}^2\\
&\quad -2\alpha(r-2)x_{v_2}^2+2(1-\alpha)\sum_{z\in\{v_3,\dots,v_r\}}x_z(x_{v_1}-x_{v_2})\\
&=\alpha(r-2)(2^{n-p-2}-1)x_{v_1}^2+(\alpha(r-2)(2^{n-p-2}-3)-2(1-\alpha)(r-2))x_{v_2}^2\\
&\quad +2(1-\alpha)(r-2)x_{v_1}x_{v_2}\\
&>\left(\alpha(r-2)(2^{n-p-2}-1)+\alpha(r-2)(2^{n-p-2}-3)-2(1-\alpha)(r-2)\right.\\
&\quad \left.+2(1-\alpha)(r-2)\right)x_{v_2}^2\\
&=2\alpha(2^{n-p-2}-2)(r-2)x_{v_2}^2\\
&\geq  0,
\end{align*}
implying $\rho_{\alpha}(G^*)>\rho_{\alpha}(G)$, a contradiction.  Therefore $|e_1|= \dots =|e_p|=2$, and $G\cong H^p_n$.
\end{proof}

\section{Concluding remarks}

The $0$-spectral radius of a uniform hypergraph has been studied in \cite{LZ}, while it seems that there is no study on the $0$-spectral radius of a  hypergraph that is not necessarily uniform (via the above adjacency matrix).
In this paper, we  consider the $\alpha$-spectral radius of hypergraphs that are uniform or not necessarily uniform.
We propose  some local grafting operations that increase or decrease  the $\alpha$-spectral radius. Among others, we identify
the unique $k$-uniform hypertrees with the first three largest  $\alpha$-spectral radii, the unique $k$-uniform unicyclic hypergraphs with maximum $\alpha$-spectral radius,
the unique  $k$-uniform hypergraphs with  maximum $\alpha$-spectral radius when the number of pendant edges is given, and we also
identify the unique hypertrees with  maximum $\alpha$-spectral radius among hypertrees with given number of vertices and edges, the unique hypertrees with the first three largest (two smallest, respectively) $\alpha$-spectral radii among hypertrees with given number of vertices, the unique hypertrees with minimum $\alpha$-spectral radius among the hypertrees  that are not $2$-uniform with given number of vertices, the unique hypergraphs with the first two largest (smallest, respectively) $\alpha$-spectral radii among unicyclic hypergraphs with given number of vertices, and  the unique hypergraphs with maximum $\alpha$-spectral radius among hypergraphs with fixed number of pendant edges.
% In our proof, the local grafting operations play important roles.

\vspace{10mm}
	
\noindent {\bf Acknowledgement.} This work was supported by the National Natural Science Foundation of China (Nos. 12071158 and 11671156).

\end{document}